\newtheorem{theorem}{Theorem}
\newtheorem{lemma}[theorem]{Lemma}
\newtheorem{definition}[theorem]{Definition}
\newcommand{\mypar}[1]{\vspace{0.1in}\noindent{\bf #1.}}
\DeclareMathAlphabet\mathbfcal{OMS}{cmsy}{b}{n}
\title{Optimal detection and error exponents for hidden multi-state processes via random duration model approach}
\author{Dragana Bajovi\'c$^1$\footnote{$^1$ D. Bajovi\'c and D. Vukobratovi\'c are with Department of Power, Electronic and Communications Engineering, Faculty of Technical Sciences, University of Novi Sad, Novi Sad, Serbia (e-mail: \{dbajovic, dejanv@uns.ac.rs\}).
\newline$^2$ K. He, L. Stankovi\'c and V. Stankovi\'c  are with Department of Electronic and Electrical Engineering, University of Strathclyde, Glasgow, G1 1XW, UK (e-mail: \{kanghang.he, vladimir.stankovic, lina.stankovic\}@strath.ac.uk)
}, Kanghang He$^2$, Lina Stankovi\'c$^2$, Dejan Vukobratovi\'c$^1$, and Vladimir Stankovi\'c$^2$}
\begin{document}
\maketitle
%

\mypar{Abstract}
We study detection of random signals corrupted by noise that over time switch their values (states) from a finite set of possible values, where the switchings occur at unknown points in time. We model such signals by means of a random duration model that to each possible state assigns a probability mass function which controls the statistics of durations of that state occurrences. Assuming two possible signal states and Gaussian noise, we derive optimal likelihood ratio test and show that it has a computationally tractable form of a matrix product, with the number of matrices involved in the product being the number of process observations. Each matrix involved in the product is of dimension equal to the sum of durations spreads of the two states, and it can be decomposed as a product of a diagonal random matrix controlled by the process observations and a sparse constant matrix which governs transitions in the sequence of states. Using this result, we show that the Neyman-Pearson error exponent is equal to the top Lyapunov exponent for the corresponding random matrices. Using theory of large deviations, we derive a lower bound on the error exponent. Finally, we show that this bound is tight by means of numerical simulations.

\mypar{Keywords} Multi-state processes, random duration model, hypothesis testing, error exponent, large deviations principle, threshold effect, Lyapunov exponent.

\section{Introduction}
\label{sec-Introduction}

The problem of detecting a signal hidden in noise is investigated. The signal to be detected is characterised as having a constant magnitude in any one state and can transition to multiple states over time. Each occurrence of a particular state has a random duration, modelled as a discrete random variable which takes values in a finite set of integers, according to a certain probability mass function associated with that state. For each given state, duration of its occurrences over time are independent and identically distributed random variables, independent of duration of other states.

Our main motivation for studying the described model comes from non intrusive appliance load monitoring (NILM) problem, i.e., detecting one or more particular appliance states, each of unknown duration, within an aggregate power signal, as obtained from smart meters. With the large-scale roll-out of smart meters worldwide, there has been increased interest in NILM, i.e., disaggregating total household energy consumption measured by the smart meter down to appliance level using purely software tools \cite{Hart1}.  NILM can enrich energy feedback, it can support smart home automation \cite{Murray:2017}, appliance retrofit decisions, and demand response measures \cite{NILM}.

Despite significant research efforts in developing efficient NILM algorithms (see \cite{NILM}, \cite{kanghang},\cite{bochao}, \cite{Parson}, \cite{kolter} and references therein), NILM is still a  challenge, especially at low sampling rates, in the order of seconds and minutes. One obstacle is lack of standardised performance measures and appropriate theoretical bounds of detectability of appliance usage, which can help estimating performance of various algorithms. A particularly challenging problem is the detection of multi-state appliances, i.e., appliances whose power consumption switches over one appliance runtime through several different values. Examples of such appliances are a dish-washer or a washing machine, where the chosen program or setting and possibly also the appliance load (e.g., with the washing machine) determines duration  that the appliance spends in each state. The difficulty there arises from the fact that the program and the load, unknown from the perspective of NILM, are non-deterministic, i.e., vary each time the same appliance is run  resulting in difficulty in detecting  in which state the appliance is. The aggregate signal minus the appliance load is considered noise for the detection problem.

The above model is also representative of signals occurring in a range of other applications. In econometrics, examples of duration signals include marital or employment status, or in general the time an individual spends in  a certain state~\cite{UggenEconometricsIndividual00}. Further examples from econometrics are time to currency alignment or time to transactions in stock market~\cite{RusselEngleTransactionsModels05}. In communication systems theory pulse-duration modulated (PDM) signals for transmitting information encoded into the pulse duration have two possible signal states: the positive value state is a pulse  whose duration is proportional to the information symbol to be encoded, and  the zero-value state in between any two pulses. The probability distribution of the state duration is then controlled by the probability distribution on the set of information symbols to be transmitted. Further binary state examples are random telegraph signals, where the signal switches between two values in a random manner\footnote{We remark that there are other stochastic models in the literature for the random telegraph signal, e.g., the Poisson model, or the hidden Markov chain model~\cite{Hero06}\cite{HeroICASSP06}.}, and the activity pattern of a certain mobile user in a cellular communication system.

In this paper,  we are interested in deriving optimal detection tests for detecting multi-state signals with random duration structure hiding in noise. We consider binary models, where occurrences of two possible states are interleaved in time. Further, we are interested in characterizing performance of optimal detection tests measured in terms of Neyman Pearson error exponent. Works on detecting multi-state signals hidden in noise, most related to our work, include~\cite{Hero06},~\cite{Agaskar15} and~\cite{RandomWalks17}. However, in contrast to the random duration model that we propose, these references model multi-state signals in noise as hidden Markov chains. Reference~\cite{Hero06} considers random telegraph signals modelled as binary Markov chains and derives the corresponding optimal detection test in the form of a product of certain measurement defined matrices. Reference~\cite{Agaskar15} considers detection of a random walk on a graph, and derives bounds on the error exponent for the Neyman-Pearson detection test. Reference~\cite{RandomWalks17} uses the method of types to generalize the results from~\cite{Agaskar15} to non-homogeneous setting where different nodes have different signal-to-noise ratios (SNR) with respect to the walk. Furthermore, reference~\cite{RandomWalks17} proves that the derived bound on the error exponent has a convex optimization form.

\mypar{Contributions} In this paper, we show that the optimal detection test, seemingly combinatorial in nature, admits a simple, linear recursion form of a product of matrices of dimension equal to the sum of the duration spreads for the two states.  Using the preceding result,  we show that the Neyman-Pearson error exponent for this problem is given by the top Lyapunov exponent~\cite{TsitsiBlondel97} for the matrices that define the recursion. The matrices have a structure of an interleaved random diagonal and (sparse) constant component that defines transitions from one state pattern to another. Thus, we reveal that a similar structural effect as with the error exponent for hidden Markov processes occurs here as well~\cite{Hero06},\cite{RandomWalks17}. Finally, using the theory of large deviations~\cite{DemboZeitouni93}, we derive a lower bound on the error exponent and demonstrate by numerical simulations that the derived bound is very close to the true error exponent.

\mypar{Paper outline} Section~\ref{sec-Setup} states the problem setup and Section~\ref{sec-Preliminaries} gives the preliminaries. Section~\ref{sec-LLR-recursion} gives main results on the form of the optimal likelihood ratio test. Section~\ref{sec-main} provides the lower bound on the error exponent, while Section~\ref{sec-proof-main} proves this result. Finally, numerical results are given in Section~\ref{sec-NumResults} and Section~\ref{sec-Conclusion} concludes the paper.

\mypar{Notation} For an arbitrary integer $n$, ${\mathbb S}^{n-1}$ denotes the probability simplex in $\mathbb R^{n}$;  $e_1$ denotes the first canonical vector and the vector (the $n$ dimensional vector with $1$ only in the first position, and having zeros in all other positions), and ${\mathbb 1}$ the vector of all ones, where we remark that the dimension should be clear from the context; $A_0$ denotes the lower shift matrix (the $0/1$ matrix with ones only on the first subdiagonal). We denote Gaussian distribution of mean value $\mu$ and standard deviation $\sigma$ by $\mathcal N(\mu,\sigma^2)$; by $p[1,n]$  an arbitrary distribution over the first $n$ integers; by $\mathcal U[1,n]$ the uniform distribution over the first $n$ integers; $\log$ denotes the natural logarithm.

\section{Problem setup}
\label{sec-Setup}
We consider the problem of detecting a signal corrupted by noise that randomly switches from one state $m$ to another, where $m=1,2,...,M$ and in each state the signal has a certain magnitude $\mu_m$. The duration that the signal spends in a given state $m$ is modelled as a discrete random variable on a given support set $\left[1, \Delta_m\right]$, and with a certain probability mass function (pmf) defined by vector $p_m \in {\mathbb S}^{\Delta_m-1}$. In this work, we consider the case when $M=2$ and we assume that for each state $m$ we know the corresponding value of the observed signal $\mu_m$. Without loss of generality, we will assume that $\mu_2>\mu_1\geq 0$. 
For each sampling time $t=1,2,...$, let $S^t=\{S_1,...,S_t\}$ denote the sequence of states until time $t$ of the signal that we wish to detect, where for each $k=1,...,t$, $S_k\in \{1,2\}$; similarly, we denote $S^{\infty}=\{S_1,S_2,...\}$. We assume that, with probability one, the first state is $S_1\equiv 1$, and, for the purpose of analysis, we set $S_0\equiv 2$. Let $X_k$ denote the signal measurement for sample time $k$, $k=1,...,t$, and, for each $t$, collect all measurements up to time $t$ in vector $X^t = (X_1,...,X_t)$. We assume that each measurement is corrupted by a zero mean additive Gaussian noise $\mathcal N (0,\sigma^2)$, where standard deviation $\sigma>0$.

\mypar{The sequence of switching times} For the sequence of states $S_1, S_2,...$, we define the sequence of times $\{T_1,T_2,...\}$, when the signal in the sequence switches from one state to another, i.e.,
\begin{equation}\label{eq-def-switching-times}
T_{i+1} = \max\{k\geq T_i+1: S_{k} = S_{T_i+1}\},\mathrm{\;for\;}i=0,1,2,...
\end{equation}
where we set $T_0\equiv 0$. We call a phase each time window $[T_i+1,T_{i+1}]$, $i=0,1,2,\ldots$, and note that during any phase, the sequence $S^{\infty}$ stays in the same state. Since $S_1\equiv 1$, all odd-numbered intervals $[T_0+1,T_1]$, $[T_2+1,T_3]$,..., where the ordering is with respect to the order of appearance, are state $1$ phases, and all even-numbered intervals $[T_1+1,T_2]$, $[T_3+1,T_4]$,... are state $2$ phases.

\mypar{Random duration model} For $n=1,2,...$, we denote by $D_{1,n}$ the difference process
\begin{equation}\label{def-state-1-durations}
D_{1,n}=T_{2n-1}-T_{2n-2},
\end{equation}
or, in words, for each $n$, $D_{1,n}$ is the duration of the $n$-th state-$1$ phase in the sequence $S^{\infty}$. We assume that durations of state-$1$ phases are independent and identically distributed (i.i.d.), with support set of all integers in the finite interval $\left[1,\Delta_1\right]$, and with pmf given by vector $p_1=\left(p_{11},p_{12},..., p_{1\Delta_1}\right)\in{\mathbb S}^{\Delta_1-1}$. Similarly, we define
\begin{equation}\label{def-state-2-durations}
D_{2,n}=T_{2n}-T_{2n-1}
\end{equation}
to be the duration of the $n$-th state-$2$ phase in the sequence $S_1,S_2,...$, for $n=1,2,...$; we assume that the $D_{2,n}$'s are i.i.d., with support set of all integers in the interval $\left[1,\Delta_2\right]$, and pmf given by vector $p_2=\left(p_{21},p_{22},...,p_{2\Delta_2}\right) \in {\mathbb S}^{\Delta_2-1}$. We also assume that durations of state-$1$ and state-$2$ phases are mutually independent. 

\mypar{Hypothesis testing problem} Using the preceding definitions, we model the signal detection problem as the following binary hypothesis testing problem:
\begin{align}
\label{eq-hypothesis-testing-problem}
\mathcal H_0:\; &X_{k}\stackrel{\mathrm{i.i.d.}}{\sim}\mathcal N(0,\sigma^2)\\
\mathcal H_1:\; &X_{k}|S^t \stackrel{\mathrm{indep.}}{\sim}
\left\{\begin{array}[+]{ll}
\mathcal N(\mu_1,\sigma^2),\, &\mathrm{if}\, S_k=1\nonumber\\
\mathcal N(\mu_2,\sigma^2),\, &\mathrm{if}\, S_k=2
\end{array}\right.,\,\mathrm{for}\,k=1,...,t,
\end{align}
%
where $D_{1,n}\sim p_1\left(1,\Delta_1\right)$ are i.i.d., $D_{2,n}\sim p_2\left(1,\Delta_2\right)$ are i.i.d., $D_{1n}$'s and $D_{2n}$'s are independent, and $S_1\equiv 1$. We remark that the model above easily generalizes to the case when the signals $X_k$ are under both hypotheses shifted for some $\mu_0\in \mathbb R$, i.e., when, under $\mathcal H=\mathcal H_0$, $X_k\sim \mathcal N(\mu_0,\sigma^2)$ and, under $\mathcal H=\mathcal H_1$, $X_k\sim \mathcal N(\mu_{S_k}+\mu_0,\sigma^2)$; see the example of appliance detection problem later in this section. The latter hypothesis testing problem reduces to the one in~\eqref{eq-hypothesis-testing-problem} by means of the change of variables $Y_k=X_k-\mu_0$.

\mypar{Illustration: Multiphase appliance detection} Suppose that we wish to detect an event that a certain appliance in a household is switched on. We consider classes of appliances whose signature signals  exhibit a multistate (multiphase) type of behavior, such as switching from high to low signal values, where the durations of phases of the same signal level can be different across a single appliance run-time and also in different run-times of the same appliance. Examples of appliances whose signatures fall into this class are, e.g., a dishwasher and a washer-dryer. This problem can be modelled by the hypothesis testing problem~\eqref{eq-hypothesis-testing-problem} where $\mu_1$ corresponds to the appliance consumption when in low state and $\mu_2$ corresponds to the appliance consumption when in high state. In this scenario, there is an underlying baseline load which can also be modelled as a Gaussian random variable of expected value $\mu_0$ and standard deviation $\sigma^2$. Since the same baseline load is present both under $\mathcal H_0$ and $\mathcal H_1$, to cast the described appliance detection problem in the format given in~\eqref{eq-hypothesis-testing-problem}, we simply subtract the value $\mu_0$ from the observed consumption signal $X_k$.

\mypar{Likelihood ratio test and Neyman-Pearson error exponent}
We denote the probability laws corresponding to $\mathcal H_0$ and $\mathcal H_1$ by $\mathbb P_0$ and $\mathbb P_1$, respectively. Similarly, the expectations with respect to $\mathbb P_0$ and $\mathbb P_1$ are denoted by $\mathbb E_0$ and $\mathbb E_1$, respectively. The probability density functions of $X^t$ under $\mathcal H_1$ and $\mathcal H_0$ are denoted by $f_{1,t}(\cdot)$ and $f_{0,t}(\cdot)$. It will also be of interest to introduce the conditional probability density function of $ X^t$ given $S^t=s^t$ (i.e., the likelihood functions), which we denote by $f_{1,t|S^t}(\cdot|s^t)$, for any $s^t$. Finally, the likelihood ratio at time $t$ denoted by $L_t$, and at a given realization of $X^t$ is computed by $L_t( X^t)= \frac{f_{1,t}(X^t )}{f_{0,t}(X^t)}$.

It is well known that the optimal detection test (both in Neyman-Pearson and Bayes sense) for problem~\eqref{eq-hypothesis-testing-problem} is the likelihood ratio test. Conditioning on the state realizations until time $t$, $S^t=s^t$, and denoting shortly $P(s^t) = \mathbb P_1(S^t=s^t)$, we have
\begin{align}
\label{eq-LLR}
L_t(X^t)&= \sum_{s^t \in \mathcal S^t} P(s^t)\frac{f_{1,t|S^t}(X^t|s^t)}
{f_{0,t}(X^t)}\nonumber\\
&= \sum_{s^t \in \mathcal S^t} P(s^t)
\frac{\prod_{k=1}^{t}\frac{1}{\sqrt {2 \pi} \sigma} e^{- \frac{(\mu_{s_k} - X_k )^2}{2 \sigma^2}   }}
{\prod_{k=1}^{t}\frac{1}{\sqrt {2 \pi} \sigma} e^{ - \frac{X_k ^2}{2 \sigma^2}} }.
\end{align}

In this paper our goal is to find a computationally tractable form for the optimal, likelihood ratio test and also to characterize its asymptotic performance, when the number of samples $X_k$ grows large. In particular, with respect to performance characterization, we wish to compute the error exponent for the probability of a miss, under a given bound $\alpha$ on the probability of false alarm:
\begin{equation}\label{eq-error-exponent}
\lim_{t\rightarrow +\infty} - \frac{1}{t} \log P_{\mathrm{miss},t}^\alpha = :\zeta,
\end{equation}
where $P_{\mathrm{miss},t}^\alpha$ is the minimal probability of a miss among all decision tests that have probability of false alarm bounded by $\alpha$. By results from detection theory, e.g., ~\cite{Sung06},\cite{Chen96}, the $\zeta$ in~\eqref{eq-error-exponent} is given by the asymptotic Kullback-Leibler rate in~\eqref{eq-asymptotic-KL-rate}, provided that this limit exists
\begin{equation}
\label{eq-asymptotic-KL-rate}
\zeta= \lim_{t\rightarrow+\infty} - \frac{1}{t}\log L_t(X^t).
\end{equation}
We prove the existence of the limit in~\eqref{eq-asymptotic-KL-rate} in Lemma~\ref{lemma-FK-limit} in Section~\ref{sec-main} further ahead. An illustration of the identity~\eqref{eq-error-exponent} is given in Figure~\ref{Fig-asymptotic-KL-rate}, which clearly shows that both sequences $- \frac{1}{t} \log P_{\mathrm{miss},t}^\alpha$ and $- \frac{1}{t}\log L_t(X^t)$ are convergent and moreover that they converge to the same value -- the asymptotic Kullback-Leibler rate for the two hypothesis defined in~\eqref{eq-hypothesis-testing-problem}. For further details on this simulation see Section~\ref{sec-NumResults}.

\begin{figure}[htp]
	\centering
	\includegraphics[width=1\linewidth]{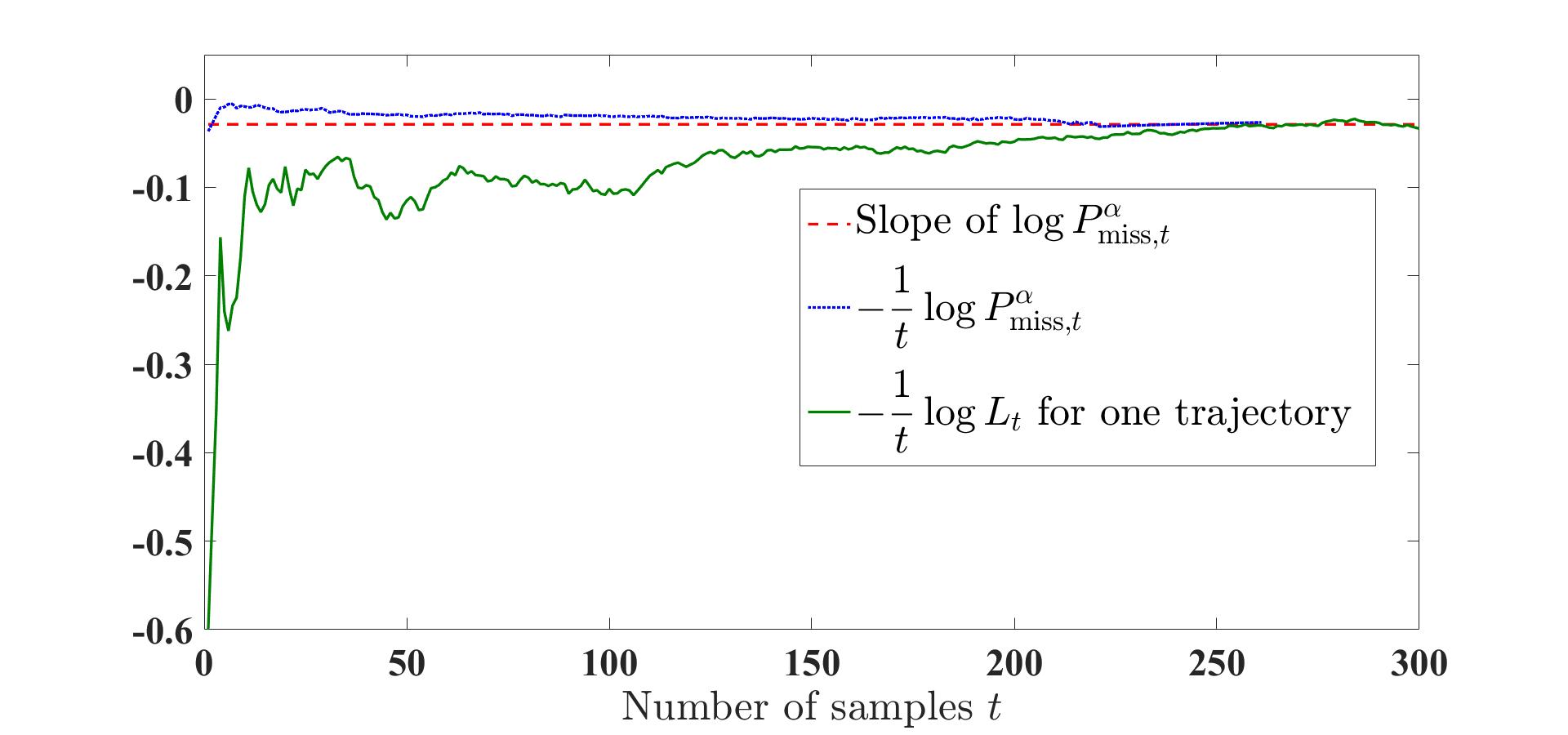}
	\caption{Simulation setup: $\Delta=3$, $p_1,p_2\sim\mathcal {U}([1,\Delta])$, $\mu_1=2$, $\mu_2=5$, $\sigma=10$, $\alpha=0.01$. Green full line plots the evolution of $-\frac{1}{t} \log L_t$; blue dotted line plots the evolution of $- \frac{1}{t} \log P_{\mathrm{miss},t}^\alpha$, and red dashed line plots the estimated slope of the probability of a miss values (in the logarithmic scale) calculated for values until $t=300$ observations.}
	\label{Fig-asymptotic-KL-rate}
\end{figure}

\section{Preliminaries}
\label{sec-Preliminaries}
In this section we now introduce a number of quantities related with the sequences $s^t\in \mathcal S^t$, $t=1,2,...$, and give certain results pertaining to these quantities that will be useful for our analysis.

\mypar{Statistics for the durations of phases} For each $t$, we define the sets of discrete times until time $t$ in which the signal was in states $1$ and $2$, which we respectively denote by $\mathcal T_1$ and $\mathcal T_2$:
\begin{align}\label{eq-def-tau}
\mathcal T_1 (s^t)&= \left\{ 1\leq k \leq t: \,s_k=1  \right\},\\
\mathcal T_2 (s^t)&= \left\{ 1\leq k \leq t: \,s_k=2  \right\}.
\end{align}
We denote cardinalities of $\mathcal T_1$ and $\mathcal T_2$, respectively, by $\tau_1$ and $\tau_2$, i.e., $\tau_1\equiv \left| \mathcal T_1\right|$ and $\tau_2\equiv \left| \mathcal T_2\right|$. Note that functions $\mathcal T_1$ and $\mathcal T_2$ are, strictly speaking, dependent on time $t$ (this dependence is observed in their domain sets $\mathcal S^t$ which clearly change with time $t$). However, for reasons of easier readibility, we suppress this dependence in the notation, as we also do for all the subsequently defined quantities.

For each $t$, for each $s^t$, we also introduce $N_{1}$ and $N_{2}$ to count the number of state-$1$ and state-$2$ phases, respectively, in the sequence $s^t$:
\begin{align}\label{eq-def-N-t}
N_{1} (s^t)& = \left|\left\{ 1\leq k \leq t: \,s_{k-1}=2,\,s_k=1  \right\}\right| \\
N_{2} (s^t) & = \left|\left\{ 1\leq k \leq t: \,s_{k-1}=1,\,s_k=2  \right\}\right|,
\end{align}
where, since the first phase is state-$1$ phase, we set $s_0\equiv 2$. We remark that, for any sequence $s^t$, if the last state $s_t=2$, then $N_1(s^t)=N_2(s^t)$, and if $s_t=1$, then $N_1(s^t)=N_2(s^t)+1$. Finally, $N (s^t)$ is the total number of phases in $s^t$, $N\equiv N_{1}+N_{2}$.

We further define the sets $\mathcal T_{mn} (s^t)$ that contain time indices for the $n$-th state-$m$ phase, $n=1,...,N_m(s^t)$, $m=1,2$. Note that, for each $m=1,2$, $\cup_{n=1}^{N_m(s^t)}\mathcal T_{mn} (s^t)= \mathcal T_m$. We now increase granularity in the counts $N_1$ and $N_2$ and define
\begin{align}\label{eq-counts-N1-N2-vectors}
N_{1d} (s^t)& = \sum_{n=1}^{N_1(s^t)} 1_{\left\{  \left| \mathcal T_{1n} \right| = d\right\}}(s^t),\mathrm{\;for\;}d=1,...,\Delta_1, \\
N_{2d} (s^t)& = \sum_{n=1}^{N_2(s^t)} 1_{\left\{  \left| \mathcal T_{2n} \right| =
d\right\}}(s^t),\mathrm{\;for\;}d=1,...,\Delta_2;
\end{align}
i.e., in words, vectors $\left(N_{m1},...,N_{m\Delta}\right)$, $m=1,2$, represent histograms of phase $1$ and phase $2$ durations. It is easy to see that $N_m=\sum_{d=1}^{\Delta_m} N_{md}$, for $m=1,2$. Also, for each time $t$ and each sequence $s^t$, the total number of state $1$ and state $2$ occurrences must sum up to $t$, and therefore $\sum_{d=1}^{\Delta_1} d\, N_{1d}(s^t)+ \sum_{d=1}^{\Delta_2} d\, N_{2d}(s^t)=t$.

Figure~\ref{fig:T4} shows an example of simulation signals under Hypothesis $\mathcal{H}_1$ with $\Delta=10,\mu_1=3,\mu_2=5$ and $\sigma=0.05$ using random duration model for various switching times $T$, difference process durations $D_{k,i}$ and numbers of different state-phases with fixed duration $N_{k,d}$. We can see from the figure that $D_{1,1}=T_1-T_0=8$ as shown in eq.~\eqref{def-state-1-durations} and there is only one state-phase $1$ last for $8$ samples, hence $N_{1,8}=1$. Again, from eq.~\eqref{def-state-2-durations} we can see from the figure again that $D_{2,1}=T_2-T_1=8$ and $D_{2,3}=T_6-T_5=8$. Thus $N_{2,8}=2$ for there are two state-phase $2$ last for $8$ samples.

\begin{figure}[htp]
	\centering
	\includegraphics[width=1\linewidth]{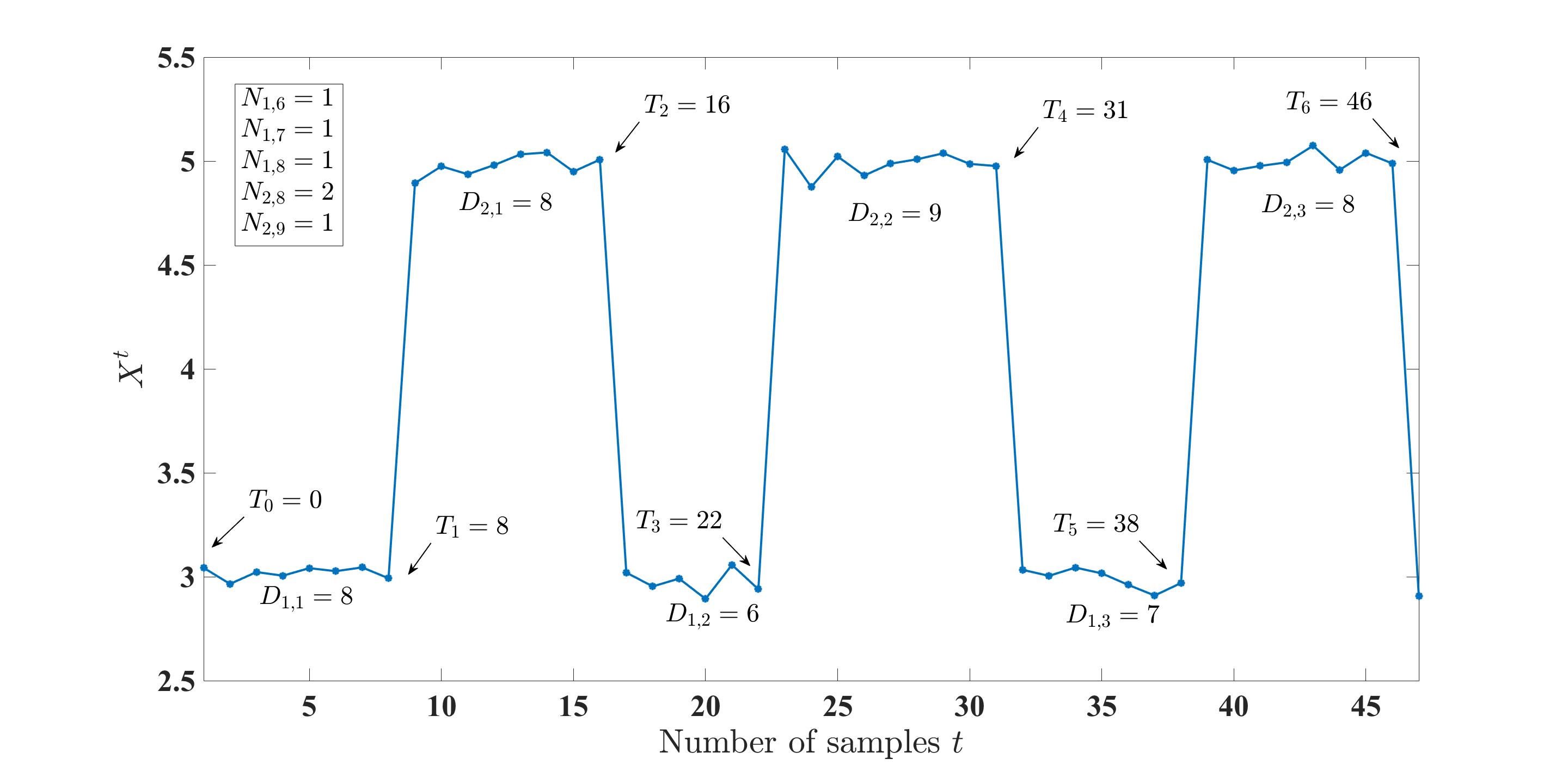}
	\caption{Example of simulation signals with $\Delta=10,\mu_1=3,\mu_2=5$ and $\sigma=0.05$ and various $T$, $D_{k,i}$, and $N_{k,d}$.}
	\label{fig:T4}
\end{figure}

To simplify the notation, let ${o}(s^t)$ return the duration of the last phase in the sequence $s^t$, and note also that $s_t$ returns the type of the last phase in $s^t$. The next lemma computes the probability of a given sequence $s^t$, $P(s^t)=\mathbb P_1\left(S^t=s^t\right)$.
\begin{lemma}
\label{lemma-sequence-s-t-probability}
For any sequence $s^t$, there holds
\begin{equation}
\label{eq-sequence-s-t-probability}
P(s^t)= \frac{p_{s_t\,{o}(s^t)}^{+} }{p_{s_t\,{o}(s^t)}}\, \prod_{d=1}^{\Delta_1}\, p_{1d}^{N_{1d} (s^t)} \, \prod_{d=1}^{\Delta_2} \,p_{2d}^{N_{2d} (s^t)},
\end{equation}
where by $p_{ml}^{+}$ we shortly denote $p_{ml}^{+} = p_{ml}+ p_{m l+1}+...+p_{m\Delta_m}$, for $l=1,2,...,\Delta_m$ and $m=1,2$.
\end{lemma}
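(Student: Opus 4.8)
The plan is to exploit the phase structure of $s^t$ and reduce $P(s^t)=\mathbb P_1(S^t=s^t)$ to the i.i.d.\ and mutual-independence statistics of the phase durations. First I would write $N=N(s^t)$ for the number of phases of $s^t$ and let the $i$-th phase (in order of appearance) have type $m_i\in\{1,2\}$ and observed length $d_i$; by the conventions $S_0\equiv 2$, $S_1\equiv 1$, time $1$ starts the first phase, the types alternate with $m_1=1$, and $d_1+\cdots+d_N=t$. The key observation is that the first $N-1$ phases are completed strictly before time $t$, whereas the last phase is observed only up to $t$: on the event $\{S^t=s^t\}$ the true durations of the first $N-1$ phases equal $d_1,\dots,d_{N-1}$ exactly, while the true duration of the last phase is merely required to be at least $d_N=o(s^t)$. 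Writing $D^{(i)}$ for the random duration of the $i$-th phase, this says
\[
\{S^t=s^t\}=\{D^{(1)}=d_1\}\cap\cdots\cap\{D^{(N-1)}=d_{N-1}\}\cap\{D^{(N)}\ge o(s^t)\},
\]
a finite-dimensional event in the variables $D_{1,n}$, $D_{2,n}$.

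Next I would use the model: state-$1$ phase durations are i.i.d.\ $\sim p_1$, state-$2$ phase durations are i.i.d.\ $\sim p_2$, and the two families are mutually independent, so each $D^{(i)}$ with $m_i=m$ is one of the i.i.d.\ durations $D_{m,\cdot}$ and the event above factorizes:
\[
P(s^t)=\Big(\prod_{i=1}^{N-1}p_{m_i d_i}\Big)\cdot \mathbb P_1\big(D\ge o(s^t)\big),\qquad D\sim p_{s_t},
\]
where I used that the last phase has type $m_N=s_t$. By the definition $p_{ml}^{+}=p_{ml}+\cdots+p_{m\Delta_m}$ the last factor equals $p_{s_t\,o(s^t)}^{+}$.

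Finally I would recast the leftover product as a product over duration histograms. Multiplying and dividing by $p_{m_N d_N}=p_{s_t\,o(s^t)}$ gives $\prod_{i=1}^{N-1}p_{m_i d_i}=\big(\prod_{i=1}^{N}p_{m_i d_i}\big)/p_{s_t\,o(s^t)}$, and grouping the full product $\prod_{i=1}^{N}p_{m_i d_i}$ according to the pair (type, length) yields $\prod_{d=1}^{\Delta_1}p_{1d}^{\,c_{1d}}\prod_{d=1}^{\Delta_2}p_{2d}^{\,c_{2d}}$, where $c_{md}=\#\{i:m_i=m,\ d_i=d\}$. Since $c_{md}$ is exactly $N_{md}(s^t)$ by the definition in~\eqref{eq-counts-N1-N2-vectors}, substituting back yields~\eqref{eq-sequence-s-t-probability}.

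I expect the only genuinely delicate step to be the first one: correctly isolating the last, possibly incomplete, phase and recognising that $\{S^t=s^t\}$ imposes an \emph{equality} constraint on the durations of the first $N-1$ phases but only a \emph{survival} ($\ge$) constraint on the last --- which is precisely the origin of the correction factor $p_{s_t\,o(s^t)}^{+}/p_{s_t\,o(s^t)}$. Everything else is bookkeeping with the conventions $S_0\equiv 2$, $S_1\equiv 1$. As a consistency check, if $s^t$ is not realisable under $\mathbb P_1$ (for instance $s_1\neq 1$, or a completed phase of type $m$ has length exceeding $\Delta_m$), then some factor $p_{md}$ with $d\notin\{1,\dots,\Delta_m\}$ is zero and both sides vanish, so the identity still holds trivially. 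An alternative route is induction on $t$, splitting on whether $s_{t+1}=s_t$, but carrying the ratios $p_{s_t\,o}^{+}/p_{s_t\,o}$ through the phase-continuation and phase-change cases is messier than the direct decomposition above.
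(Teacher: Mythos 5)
Your proposal is correct and follows essentially the same route as the paper's proof: decompose $\{S^t=s^t\}$ into equality events for the completed phases and a survival event $\{D\ge o(s^t)\}$ for the last (possibly truncated) phase, factorize by the i.i.d.\ and mutual-independence assumptions, then multiply and divide by $p_{s_t\,o(s^t)}$ and regroup the product by duration to obtain the histogram exponents $N_{md}$. The only cosmetic difference is that you treat both cases $s_t=1$ and $s_t=2$ uniformly via $m_N=s_t$, whereas the paper writes out $s_t=1$ and notes the other case is analogous.
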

The proof of Lemma~\ref{lemma-sequence-s-t-probability} is given in Appendix. Besides function $P$ which returns the exact probability of occurrence of sequence $s^t$, it will also be of interest to define a related function $P^\prime: \{1,2\}^t \mapsto \mathbb R$, defined through $P$ by leaving out the first factor in~\eqref{eq-sequence-s-t-probability}, i.e., $P^\prime (s^t)= \frac{{p_{s_t\,{o}(s^t)}}}{p_{s_t\,{o}(s^t)}^{+} } P(s^t)$ (note that the assumption  that $p_1,p_2>0$ (entrywise) ensures that $P^\prime$ is always well defined). Let $p_{\min}= \min\{p_{md}:\,m=1,2, \, d=1,...,\Delta_m\}$ and note that, for any $m$ and $d$, $p_{md}\leq  p_{md}^{+}\leq 1$ (this relation can be easily seen from the definition of $p_{md}^{+}$). Thus, the following relation holds between $P$ and $P^\prime$:
\begin{equation}\label{eq-relation-P-and-P-prime}
P^\prime(s^t) \leq  P(s^t)\leq \frac{1}{p_{\min}}\,P^\prime(s^t).
\end{equation}
For increasing $t$, the two functions will have equal exponents, that is, the effect of the factor $\frac{1}{p_{\min}}$ will vanish, and thus in our subsequent analyses we will use the analytically more appealing function $P^{\prime}$. Further, to simplify the analysis, in what follows we will assume that $\Delta_1=\Delta_2=: \Delta$.

We let $\mathcal S^t$ denote the set of all feasible sequences of states $s^t$ of length $t$, i.e., the sequences for which $\mathbb P_1(\mathcal S^t=s^t)>0$; we let $C_t$ denote the cardinality of $\mathcal S^t$. When $p_1$ and $p_2$ are strictly greater than zero, it can be shown that $C_t$ equals the number of ways in which integer $t$ can be partitioned with parts bounded by $\Delta$. This number is known as the $\Delta$-generalized Fibonacci number, and is computed via the following recursion:
\begin{equation}\label{eq-recursion-C-t}
 C_t=C_{t-1}+\ldots + C_{t-\Delta},
\end{equation}
with the initial condition $C_1=1$. The recursion in~\eqref{eq-recursion-C-t} is linear and hence can be represented in the form $\widetilde C_t = A \widetilde C_{t-1}$, where $\widetilde C_t= \left[C_t\, C_{t-1}\, \ldots \,C_{t-\Delta+1}\right]$ and $A$ is a square, $\Delta\times \Delta$ matrix; it can be shown that $A$ is equal to $A=e_1 {\mathbb 1}^\top + A_0$, where, we recall, $A_0$ is the lower shift matrix of dimension $\Delta$. The growth rate of $C_t$ is given by the largest zero of the characteristic polynomial of $A$, as the next result, which we borrow from~\cite{Flores67} asserts.
\begin{lemma}\label{lemma-C-t-growth-rate} [Asymptotics for $\Delta$-generalized Fibonacci number~\cite{Flores67}] For any $\epsilon$, there exists $t_0=t_0(\epsilon)$ such that
\begin{equation}\label{eq-C-t-growth-rate}
e^{t (\psi- \epsilon)} \leq  C_t\leq e^{t (\psi+ \epsilon)},
\end{equation}
where $\psi$ is the unique positive zero of the following polynomial $\psi^{\Delta} - \psi^{\Delta-1}-\ldots -1=0$.
\end{lemma}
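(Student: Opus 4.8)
The plan is to establish the asymptotic growth rate of the $\Delta$-generalized Fibonacci number $C_t$ defined by the linear recursion~\eqref{eq-recursion-C-t} by a standard linear-algebra / generating-function argument, reducing the problem to locating the dominant root of the characteristic polynomial $\chi(\psi)=\psi^{\Delta}-\psi^{\Delta-1}-\cdots-\psi-1$.

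First I would pass from the scalar recursion to the vector form $\widetilde C_t = A\,\widetilde C_{t-1}$ already recorded above, so that $\widetilde C_t = A^{t-\Delta}\widetilde C_{\Delta}$ for $t\geq\Delta$, and observe that the eigenvalues of $A=e_1\mathbbm{1}^\top+A_0$ are exactly the roots of $\chi$. The key number-theoretic/analytic fact I would then prove (or cite from~\cite{Flores67}) is that $\chi$ has a unique positive real root $\psi$, that $\psi\in(1,2)$, and that $\psi$ strictly dominates all other roots in modulus, i.e. $|\lambda|<\psi$ for every other eigenvalue $\lambda$ of $A$. The existence and uniqueness of the positive root follows because $g(\psi):=\chi(\psi)/\psi^{\Delta}=1-\sum_{j=1}^{\Delta}\psi^{-j}$ is strictly increasing on $(0,\infty)$ with $g(1)=1-\Delta<0$ and $g(2)=2^{-\Delta}>0$; strict dominance follows from a Perron–Frobenius-type argument, since $A$ (or a suitable power of it) is a nonnegative matrix whose associated graph is strongly connected and aperiodic, so its spectral radius is a simple eigenvalue strictly larger in modulus than all others — and that spectral radius must be the positive root $\psi$. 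Alternatively, one argues directly on $\chi$: if $\chi(z)=0$ with $|z|=\psi$ then $\psi^{\Delta}=|z^{\Delta}|=|z^{\Delta-1}+\cdots+1|\leq\psi^{\Delta-1}+\cdots+1=\psi^{\Delta}$, forcing equality in the triangle inequality, hence $z=\psi$.

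Given dominance and simplicity of $\psi$, the spectral decomposition (or Jordan form) of $A$ yields $A^{n} = \psi^{n}\big(v\,w^\top + o(1)\big)$ as $n\to\infty$, where $v,w$ are the right/left Perron eigenvectors normalized so $w^\top v=1$; all other Jordan blocks contribute terms of size $O(n^{\Delta}|\lambda|^{n})$ with $|\lambda|<\psi$, which are $o((\psi-\epsilon')^{n})$ for any small $\epsilon'>0$. Consequently $C_t = c\,\psi^{t}(1+o(1))$ for a positive constant $c$ (positivity of $c$ coming from the fact that $\widetilde C_{\Delta}$ has positive entries and $v,w$ can be taken entrywise positive), and taking logarithms gives, for any $\epsilon>0$, the existence of $t_0(\epsilon)$ with $e^{t(\psi-\epsilon)}\leq C_t\leq e^{t(\psi+\epsilon)}$ — note the statement is loosely worded in that $\psi$ in the exponent should be read as $\log\psi$; I would keep the phrasing consistent with the paper.

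The main obstacle — really the only nontrivial point — is the strict spectral dominance of $\psi$, i.e. ruling out other roots of the same modulus (which would produce oscillatory, non-convergent behavior of $C_t/\psi^t$). I would handle it via the equality-case analysis of the triangle inequality sketched above, or equivalently by invoking aperiodicity of the companion-type matrix $A$ (the recursion couples $C_t$ to $C_{t-1}$, so $\gcd$ of the "lags" is $1$) together with Perron–Frobenius; since the statement is explicitly attributed to~\cite{Flores67}, it is also legitimate simply to cite it. Everything else is routine: the reduction to the companion matrix, the monotonicity argument for the positive root, and the bookkeeping of subdominant Jordan blocks.
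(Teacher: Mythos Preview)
The paper does not actually prove this lemma: it is stated as a known result borrowed from~\cite{Flores67}, with no argument given. Your proposal supplies a complete and correct proof sketch via the companion matrix $A$, Perron--Frobenius (or the direct triangle-inequality argument on the roots of $\chi$) to establish strict dominance of the positive root, and then the standard spectral expansion to conclude $C_t = c\,\psi^t(1+o(1))$. This is exactly the classical route and is presumably what~\cite{Flores67} does as well; there is nothing to compare against in the paper itself beyond the citation.

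Your observation about the exponent is also correct and worth flagging: since $C_t\sim c\,\psi^t$, the bound should read $e^{t(\log\psi-\epsilon)}\leq C_t\leq e^{t(\log\psi+\epsilon)}$, and indeed the paper uses $\log\psi$ consistently everywhere downstream (e.g.\ in the rate function~\eqref{eq-rate-function} and in~\eqref{eq-expectation-2}), so the lemma statement as written contains a harmless notational slip that you have handled appropriately.
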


\subsection{Sequence types}
\label{subsec-Sequence-types}

\mypar{Duration fractions} For $d=1,2,...,\Delta$, let $V_{m,d}$ denote the number of times along a given sequence of states that state-$1$ phase had length $d$, normalized by time $t$, i.e.,
\begin{equation}\label{eq-def-V-m-d}
V_{m,d}(s^t) = \frac{N_{m,d} (s^t)}{t}, \;m=1,2.
\end{equation}
For each sequence $s^t$, we define its type as the $2\times\Delta$ matrix $V:=\left( \left(V_1(s^t)\right)^\top; \left(V_2(s^t)\right)^\top \right)$, where $V_m(s^t)=\left( V_{m,1}(s^t),...,V_{m,\Delta}(s^t)\right)$, for $m=1,2$. Recalling $N_1$ and $N_2$~\eqref{eq-def-N-t}, which, respectively, count the number of state-$1$ and state-$2$ phases along $s^t$, we see that $N_m=t {\mathbb 1}^\top V_m$, $m=1,2$.

It will also be of interest to define the fractions of times $\Theta_1$ and $\Theta_2$ that a given sequence of states was in states $1$ and $2$, respectively,
\begin{equation}\label{eq-def-Theta-m}
\Theta_m(s_t) = \frac{\tau_m (s^t)}{t}, \;m=1,2.
\end{equation}
It is easy to verify that $\Theta_m = \sum_{d=1}^\Delta \,d\,V_{m,d}$, for $m=1,2$.

Let $\mathcal V_{t}$ denote the set of all $2\times \Delta$-tuples of feasible occurrence of type $V$ at time $t$
\begin{align}
\mathcal V_{t}& =\left\{\nu=(\nu_1,\nu_2): \nu=V(s^t), \mathrm{\;for\; some\;} s^t\right\}.
\end{align}
Note that, as they are defined as normalized versions of quantities $N_{md}(s^t)$, $V_{md}(s^t)$'s also inherit the properties of $N_{md}$'s: 1) $\sum_{d=1}^\Delta d V_{1d}(s^t)+ d V_{2d}(s^t)=1$; 2) $0 \leq {\mathbb 1}^\top V_1(s^t) - {\mathbb 1}^\top V_2(s^t) \leq 1/t$.  As $t\rightarrow +\infty$, for every $s^t \in \mathcal S^t$, the difference between ${\mathbb 1}^\top V_1(s^t)$ and ${\mathbb 1}^\top V_2(s^t)$ decreases. Motivated by this, we introduce the set
\begin{equation}\label{eq-def-mathcal-V}
\mathcal V= \left\{\nu\in \mathbb R_{+}^{2\times \Delta}: \, {\mathbb 1}^\top \nu_1={\mathbb 1}^\top \nu_2,\, q^\top \nu_1+ q^\top \nu_2 = 1 \right\}.
\end{equation}

For each $t$, $\nu \in \mathcal V_t$, define the set $\mathcal S^t_{\nu}$ that collects all sequences $s^t\in \mathcal S^t$ whose type is $\nu$:
\begin{equation}
\mathcal S^t_{\nu} = \left\{s^t\in \mathcal S^t:  V(s^t)=\nu\right\}
\end{equation}
(note that if $\nu \notin \mathcal V_t$, then set $\mathcal S^t_{\nu}$ would be empty). Set $\mathcal S^t_{\nu}$ therefore consists of all sequences with the following properties: 1) the first phase is state-$1$ phase; 2) the total number of state-$1$ phases is ${\mathbb 1}^\top \nu_1\,t$, where the total number of such phases of duration exactly $d$ is given by $\nu_{1,d}\,t$; and 3) the total number of state-$2$ phases is ${\mathbb 1}^\top \nu_2\,t$, where the total number of such phases of duration exactly $d$ is given by $\nu_{2,d}\,t$.

Let $C_{t,\nu}$ denote the cardinality of $\mathcal S^t_{\nu}$. This number is equal to the number of ways in which one can order ${\mathbb 1}^\top \nu_{1} t$ state-$1$ phases (of different durations), where each new ordering has to give rise to a different pattern of state occurrences, times the corresponding number for state-$2$ phases. Since for any $d$, any permutation of $\nu_{m,d} t$ phases, each of which is of length $d$, gives the same sequence pattern, $C_{t,\nu}$ is given by the number of permutations with repetitions for state-$1$ phases times the number of permutations with repetitions for state-$2$ phases:
\begin{equation}\label{eq-cardinality-C-t-nu}
C_{t,\nu} = \frac{\left( {\mathbb 1}^\top \nu_{1} t\right) ! }{ \left(\nu_{1,1} t\right)!\cdot \ldots \cdot \left(\nu_{1,\Delta_1} t\right) !} \frac{\left({\mathbb 1}^\top \nu_{2} t\right) ! }{ \left(\nu_{2,1} t\right)!\cdot \ldots \cdot \left(\nu_{2,\Delta_2} t\right) !}.
\end{equation}
From~\eqref{eq-cardinality-C-t-nu} the following result regarding the growth rate of $C_{t,\nu}$ easily follows (e.g., by Stirling's approximation bounds).

\begin{lemma}\label{lemma-growth-rate-C-t-nu}
For any $\epsilon>0$ there exists $t_1=t_1(\epsilon)$ such that for all  $t\geq t_1$
\begin{equation}\label{eq-growth-rate-C-t-nu}
e^{t \left( H\left(\nu_1\right)+ H\left(\nu_2\right)-\epsilon\right)} \leq C_{t,\nu} \leq e^{t \left( H\left(\nu_1\right)+ H\left(\nu_2\right)+\epsilon\right)},
\end{equation}
where $H:\mathbb R_+^\Delta \mapsto \mathbb R$ is defined as
\begin{equation}\label{eq-def-entropy}
H(\lambda) = - \sum_{d=1}^{\Delta} \frac{\lambda_d}{{\mathbb 1}^\top \lambda} \log \frac{\lambda_d}{{\mathbb 1}^\top \lambda},
\end{equation}
where $\lambda_d$ denotes the $d$-th element of an arbitrary vector $\lambda \in \mathbb R_+^\Delta$.
\end{lemma}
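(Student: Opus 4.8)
Here is how I would prove Lemma~\ref{lemma-growth-rate-C-t-nu}.

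The plan is to take logarithms in the closed‑form expression~\eqref{eq-cardinality-C-t-nu} for $C_{t,\nu}$ and estimate each factorial by Stirling's formula. First I would note that the statement is well posed: since $\nu\in\mathcal V_{t}$, there is a sequence $s^t$ with $V(s^t)=\nu$, so every argument appearing in~\eqref{eq-cardinality-C-t-nu} is a nonnegative integer, namely $\nu_{m,d}t=N_{m,d}(s^t)$ and $({\mathbb 1}^\top\nu_m)\,t=N_m(s^t)$; moreover, since a state‑$m$ phase has length at most $\Delta$, for $t>\Delta$ both $N_1(s^t)$ and $N_2(s^t)$ are strictly positive, so $H(\nu_1)$ and $H(\nu_2)$ are finite under the convention $0\log 0=0$, which also renders any $0!$ factor harmless. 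Writing $\log n!=n\log n-n+\rho_n$, the $-n$ terms coming from the numerator and denominator cancel for each $m$ because $\sum_{d}\nu_{m,d}={\mathbb 1}^\top\nu_m$, i.e.\ because $N_m=\sum_d N_{m,d}$, while the $n\log n$ terms rearrange into
$$-t\sum_{d=1}^{\Delta}\nu_{m,d}\log\frac{\nu_{m,d}}{{\mathbb 1}^\top\nu_m}=t\,H(\nu_m),$$
in the notation of~\eqref{eq-def-entropy}. Hence $\log C_{t,\nu}=t\big(H(\nu_1)+H(\nu_2)\big)+R_{t,\nu}$, where $R_{t,\nu}$ collects the $\rho_n$ corrections of the $2\Delta+2$ factorials involved (two in the numerators, $2\Delta$ in the denominators).

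The second step is to bound $R_{t,\nu}$ uniformly in $\nu$. From the elementary Stirling bounds $\sqrt{2\pi n}\,(n/e)^n\le n!\le e\sqrt{n}\,(n/e)^n$ one gets $0\le\rho_n\le 1+\tfrac12\log t$ whenever $1\le n\le t$, and $\rho_0=0$. All the factorial arguments $N_1,N_2,N_{1,d},N_{2,d}$ lie in $\{0,1,\dots,t\}$ and there are exactly $2\Delta+2$ of them, a number independent of both $t$ and $\nu$; therefore $|R_{t,\nu}|\le(2\Delta+2)\big(1+\tfrac12\log t\big)$ for all sufficiently large $t$ and all $\nu\in\mathcal V_{t}$. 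Since the right‑hand side is $o(t)$, for any $\epsilon>0$ there is $t_1=t_1(\epsilon)$, not depending on $\nu$, such that $(2\Delta+2)\big(1+\tfrac12\log t\big)\le\epsilon t$ for all $t\ge t_1$; combining this with the identity for $\log C_{t,\nu}$ above yields~\eqref{eq-growth-rate-C-t-nu}.

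This lemma is essentially a routine Stirling computation, so I do not anticipate a genuine obstacle. The two points that deserve a moment's care are: (i) the exact cancellation of the linear Stirling terms, which hinges on $N_m=\sum_d N_{m,d}$ and on the integrality of all the arguments $\nu_{m,d}t$ (guaranteed by $\nu\in\mathcal V_t$); and (ii) the uniformity of $t_1$ in $\nu$, which is what lets the bound be stated for all types simultaneously — this is immediate here because the count $2\Delta+2$ does not depend on $\nu$ and every factorial argument is at most $t$, so one never needs to match corrections type‑dependently but only to control the positive and negative $\rho_n$ contributions by the single crude estimate above.
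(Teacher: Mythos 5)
Your approach is exactly the one the paper intends -- the paper's entire ``proof'' is the remark that the bound ``easily follows (e.g., by Stirling's approximation bounds)'' from~\eqref{eq-cardinality-C-t-nu} -- and your execution of it (integrality of the arguments via $\nu\in\mathcal V_t$, cancellation of the linear Stirling terms through $N_m=\sum_d N_{m,d}$, and a uniform-in-$\nu$ control of the $2\Delta+2$ correction terms by $O(\log t)=o(t)$) is complete and correct where the paper gives no detail at all. The one point you should not have passed over silently is the very last identification: your computation correctly yields the exponent $-t\sum_{d}\nu_{m,d}\log\bigl(\nu_{m,d}/{\mathbb 1}^\top\nu_m\bigr)$, but with $H$ as \emph{literally} defined in~\eqref{eq-def-entropy} this equals $t\,({\mathbb 1}^\top\nu_m)\,H(\nu_m)$, not $t\,H(\nu_m)$; since ${\mathbb 1}^\top\nu_1+{\mathbb 1}^\top\nu_2\leq 1$, the two differ by a factor bounded away from $1$ for nondegenerate types (e.g., for $\Delta=2$ and $\nu_{m,d}\equiv 1/6$ one gets $C_{t,\nu}\approx 2^{2t/3}$ whereas $e^{t(H(\nu_1)+H(\nu_2))}=2^{2t}$), so the lemma's lower bound actually fails under the normalized reading of $H$. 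The unnormalized expression you derived is the correct growth rate, and it is also the reading the paper uses downstream (e.g., in the manipulation $\sum_d\nu_{md}\log p_{md}-\nu_{md}\log\nu_{md}=-D(\nu_m\|p_m)$ in the proof of Theorem~\ref{theorem-main}); so this is best regarded as a normalization inconsistency in~\eqref{eq-def-entropy} rather than a flaw in your argument, but a careful proof should state explicitly that the identity $-\sum_d\nu_{m,d}\log(\nu_{m,d}/{\mathbb 1}^\top\nu_m)=H(\nu_m)$ requires $H$ to be the unnormalized entropy $H(\lambda)=-\sum_d\lambda_d\log(\lambda_d/{\mathbb 1}^\top\lambda)$, rather than assert it as a consequence of~\eqref{eq-def-entropy} as written.
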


We end this section by giving some well-known results from the theory of large deviations that we will use in our analysis of detection problem~\eqref{eq-hypothesis-testing-problem}.

\subsection{Varadhan's lemma and large deviations principle}

\mypar{Large deviations principle}
\begin{definition}[Large deviations principle~\cite{DemboZeitouni93} with probability 1]
 \label{def-LDP-wp1}
  Let $\mu_t^\omega: \mathcal B\left( \mathbb R^D\right)$ be a sequence of Borel random measures defined on probability space $\left( \Omega,\mathcal F,\mathbb P\right)$. Then, $\mu_t^{\omega}$, $t=1,2,...$ satisfies the large deviations principle with probability one, with rate function $I$ if the following two conditions hold:
  \begin{enumerate}
    \item for every closed set $F$ there exists a set $\Omega_F^\star\subseteq \Omega$ with $\mathbb P\left(\Omega_F^\star\right)=1$, such that for each $\omega \in \Omega_F^\star$,
    \begin{equation}\label{def-LDP-upper}
      \limsup_{t\rightarrow +\infty} \frac{1}{t}\log  \mu^\omega_t(F) \leq - \inf_{x\in F} I(x);
    \end{equation}
    \item for every open set $E$ there exists a set $\Omega_E^\star\subseteq \Omega$ with $\mathbb P\left(\Omega_E^\star\right)=1$, such that for each $\omega \in \Omega_E^\star$,
    \begin{equation}\label{def-LDP-lower}
      \liminf_{t\rightarrow +\infty} \frac{1}{t} \log \mu^\omega_t(E) \geq - \inf_{x\in E} I(x).
    \end{equation}
  \end{enumerate}
\end{definition}

We give here the version of the Varadhan's lemma which involves sequence of random probability measures and large deviations principle (LDP) with probability one.

\begin{lemma} [Varadhan's lemma~\cite{DemboZeitouni93}]
\label{lemma-Varadhans-wp1}
Suppose that the random sequence of measures $\mu^\omega_t$ satisfies the LDP with probability one, with rate function $I$, see Definition~\ref{def-LDP-wp1}. Then, if for function $F$ the tail condition below holds with probability one,
\begin{equation}\label{eq-tail-condition}
\lim_{B \rightarrow + \infty} \limsup_{t\rightarrow +\infty} \frac{1}{t}\,\log \int_{x: F(x)\geq B } e^{t F(x)} d\mu_t^\omega(x) = -\infty,
 \end{equation}
 then, with probability one,
 \begin{equation}\label{eq-Varadhan}
\lim_{t\rightarrow +\infty} \frac{1}{t} \log \int_{x} e^{t F(x)} d\mu_t^\omega(x) = \sup_{x\in \mathbb{R}^D} F(x) - I(x).
 \end{equation}
\end{lemma}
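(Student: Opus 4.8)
The plan is to adapt the classical proof of Varadhan's lemma (cf.\ \cite{DemboZeitouni93}) to the present randomized setting, by establishing — each on an event of probability one — the two matching bounds
\[
\limsup_{t\to\infty}\frac{1}{t}\log\int_x e^{tF(x)}\,d\mu_t^\omega(x)\ \le\ \sup_{x\in\mathbb R^D}\bigl(F(x)-I(x)\bigr)
\]
and
\[
\liminf_{t\to\infty}\frac{1}{t}\log\int_x e^{tF(x)}\,d\mu_t^\omega(x)\ \ge\ \sup_{x\in\mathbb R^D}\bigl(F(x)-I(x)\bigr),
\]
which together give \eqref{eq-Varadhan}. The only real departure from the deterministic statement is that the LDP in Definition~\ref{def-LDP-wp1} delivers, for each closed or open set, \emph{its own} almost-sure event $\Omega^\star_{(\cdot)}$; since the argument below invokes the LDP only for a countable family of sets, the intersection of the corresponding $\Omega^\star_{(\cdot)}$, together with the almost-sure event on which the tail condition \eqref{eq-tail-condition} holds, still has probability one, and all estimates may be run simultaneously on that intersection. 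As is standard for Varadhan's lemma, I would assume $F$ continuous — lower semicontinuity is what the lower bound uses, upper semicontinuity together with the tail condition is what the upper bound uses — which will be the case in our application.

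First, for the \emph{lower bound}: if no point has $I<\infty$ the bound is vacuous, so fix $y$ with $I(y)<\infty$ and an open ball $B(y,\delta)$. Bounding $\int e^{tF}\,d\mu_t^\omega$ below by $\int_{B(y,\delta)}e^{tF}\,d\mu_t^\omega\ge e^{t\inf_{B(y,\delta)}F}\,\mu_t^\omega(B(y,\delta))$ and applying the open-set inequality \eqref{def-LDP-lower} gives, on the associated almost-sure event,
\[
\liminf_{t\to\infty}\frac{1}{t}\log\int_x e^{tF(x)}\,d\mu_t^\omega(x)\ \ge\ \inf_{x\in B(y,\delta)}F(x)-\inf_{x\in B(y,\delta)}I(x)\ \ge\ \inf_{x\in B(y,\delta)}F(x)-I(y),
\]
the last step because $y\in B(y,\delta)$. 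Restricting to balls with rational centre and radius keeps this a countable family; on the intersection of the associated almost-sure events one may let such balls shrink to an arbitrary $y$, and continuity of $F$ then yields $\liminf_t\frac{1}{t}\log\int e^{tF}\,d\mu_t^\omega\ge F(y)-I(y)$ for every $y$, hence $\ge\sup_y\bigl(F(y)-I(y)\bigr)$.

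For the \emph{upper bound}: fix constants $B,L,\eta>0$ and split $\int e^{tF}\,d\mu_t^\omega$ into (i) the integral over $\{F\ge B\}$; (ii) the integral over $\{F<-L\}$; and (iii) the integral over the finitely many slabs $\{a_k\le F<a_{k+1}\}$ with $a_k=k\eta$ that tile $[-L,B)$. Term~(i) is controlled directly by the tail condition \eqref{eq-tail-condition}, whose $\limsup$ tends to $-\infty$ as $B\to\infty$. Term~(ii) is at most $e^{-tL}\mu_t^\omega(\mathbb R^D)$, and applying \eqref{def-LDP-upper} to the closed set $\mathbb R^D$ (using $I\ge0$) shows that the $\limsup_t\frac{1}{t}\log$ of term~(ii) is at most $-L$. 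For each slab in (iii), $\int_{\{a_k\le F<a_{k+1}\}}e^{tF}\,d\mu_t^\omega\le e^{ta_{k+1}}\mu_t^\omega(\{F\ge a_k\})$; the set $\{F\ge a_k\}$ is closed, so \eqref{def-LDP-upper} yields
\[
\limsup_{t\to\infty}\frac{1}{t}\log\Bigl(e^{ta_{k+1}}\mu_t^\omega(\{F\ge a_k\})\Bigr)\ \le\ a_k+\eta-\inf_{x:\,F(x)\ge a_k}I(x)\ \le\ \eta+\sup_{x\in\mathbb R^D}\bigl(F(x)-I(x)\bigr).
\]
Because the number of slabs does not depend on $t$, the $\limsup_t\frac{1}{t}\log$ of the sum over $k$ obeys the same bound. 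Collecting (i)--(iii), $\limsup_t\frac{1}{t}\log\int e^{tF}\,d\mu_t^\omega$ is at most the maximum of the bound from (i), of $-L$, and of $\eta+\sup_x(F(x)-I(x))$; letting $\eta\to0$, then $L\to\infty$, then $B\to\infty$ along rationals (a countable family of closed sets throughout) yields $\le\sup_x\bigl(F(x)-I(x)\bigr)$.

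The hard part, beyond the bookkeeping, will be exactly this amalgamation of the set-indexed almost-sure events: that is what forces the discretizations — rational balls in the lower bound, rational thresholds $a_k$ together with rational $B,L,\eta$ in the upper bound — to be performed over a countable index set, so that the relevant intersection of the $\Omega^\star_{(\cdot)}$ retains full probability. A secondary point is the coupling between the tail condition and the slab decomposition in the upper bound: since $I$ is not assumed to have compact level sets, the mass of $\{F\ge B\}$ cannot be discarded a priori and must instead be annihilated by \eqref{eq-tail-condition} in the limit $B\to\infty$.
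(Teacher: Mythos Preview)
The paper does not actually prove this lemma: it is stated with attribution to \cite{DemboZeitouni93} and then used as a tool, with no proof given anywhere in the text or the appendix. So there is no ``paper's own proof'' to compare against.

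Your proposal is correct and is precisely the standard Dembo--Zeitouni argument, carefully adapted to the almost-sure LDP of Definition~\ref{def-LDP-wp1}. The one genuinely new ingredient you need beyond the deterministic proof --- discretizing to a \emph{countable} family of open balls (lower bound) and closed sublevel/superlevel sets with rational thresholds (upper bound), so that the intersection of the set-indexed events $\Omega^\star_{(\cdot)}$ still has probability one --- is exactly what you identify and handle. Two minor remarks: (i) your use of ``$I\ge 0$'' when bounding term~(ii) is fine, since rate functions are by convention nonnegative; (ii) the continuity of $F$ you invoke is indeed satisfied in the paper's application (the function $F$ in \eqref{eq-def-F} is affine), so the assumption is harmless here.
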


\section{Linear recursion for the LLR and the Lyapunov exponent}
\label{sec-LLR-recursion}
From~\eqref{eq-LLR} and~\eqref{eq-sequence-s-t-probability}, it is easy to see that the likelihood ratio can be expressed through the defined quantities as:
\begin{align}
&\!\! L_t(X^t) = \sum_{s^t \in \mathcal S^t}
P(s^t) e^{ \frac{1}{\sigma^2} \sum_{m=1}^2\mu_m  \sum_{k \in \mathcal T_m (s^t)}  X_{k}  - \tau_m (s^t) \frac{\mu_m^2}{2 \sigma^2}  }\nonumber\\
&= \sum_{s^t \in \mathcal S^t} \frac{p_{s_t,{o}(s^t)}^{+} }{p_{s_t,{o}(s^t)}}
 e^{\sum_{m=1}^2\sum_{d=1}^{\Delta_m} N_{1m}(s^t) \log p_{1m}}\times \nonumber\\
&\;\;\;\; e^{ \frac{1}{\sigma^2} \sum_{m=1}^2 \mu_m  \sum_{k \in \mathcal T_m (s^t)}  X_{k}  - \tau_m (s^t) \frac{\mu_m^2}{2 \sigma^2}  }\label{eq-LLR-2}.
\end{align}

 The expression in~\eqref{eq-LLR-2} is combinatorial, and its straightforward implementation would require computing $C_t\approx e^{\psi t}$ summands. This is prohibitive when the observation interval $t$ is large. In this paper, we unveil a simple, linear recursion form for the likelihood $L_t(X^t)$, for $t=1,2,...$. We give this result in the next lemma. To shorten the notation, we introduce functions $f_m: \mathbb R \mapsto \mathbb R$, which we define by $f_m(x):= \frac{1}{\sigma^2}\mu_m x - \frac{1}{2 \sigma^2} \mu_m^2$, for $x\in \mathbb R$ and $m=1,2$. Recall that $e_1$ denotes the first canonical vector in $\mathbb R^\Delta$ (the $\Delta$ dimensional vector with $1$ only in the first position, and having zeros in all other positions), and $1$ denotes the vector of all ones in $\mathbb R^\Delta$.

\begin{lemma}\label{lemma-recursion-for-LRT} Let $\Lambda_k = \left( {\Lambda^1_{k}}^\top,  {\Lambda^2_{k}}^\top\right)^\top$ evolve according to the following recursion
\begin{equation}\label{eq-Lambda-recursion}
\Lambda_{k+1} = A_{k+1} \Lambda_k,
\end{equation}
with the initial condition
$\Lambda_1 = \left( e^{f_1 (X_{k})}e_1^\top,\, e^{f_2 (X_{k})}e_1^\top\right)^\top$, and where, for $k\geq 2$, matrix $A_k= [A^{11}_k A^{12}_k; A^{21}_k A^{22}_k]$ is defined by
\begin{align}
\label{eq-matrices-A-k}
A^{11}_k & = e^{f_1 (X_{k})} A_0 \nonumber\\
A^{12}_k & = e^{f_1 (X_{k})}  e_1 p_2^\top \nonumber\\
A^{21}_k & = e^{f_2 (X_{k})}  e_1 p_1^\top \nonumber\\
A^{22}_k & = e^{f_2 (X_{k})} A_0,
\end{align}
and $A_0$ is, we recall, the lower shift matrix of dimension $\Delta$. Then, the likelihood ratio $L_t(X^t)$ is, for each $t\geq 1$, computed by
\begin{equation}\label{eq-LLR-lemma}
L_t(X^t) = \sum_{d=1}^\Delta   p_{1d}^{+} {\Lambda}^1_{t,d} +  p_{2d}^{+} {\Lambda}^2_{t,d},
\end{equation}
where $\Lambda^m_{t,d}$ is the $d$-th element of $\Lambda^m_{t}$, for $d=1,...,\Delta$ and $m=1,2$.
\end{lemma}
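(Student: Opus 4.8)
The plan is to recognise the combinatorial sum in~\eqref{eq-LLR-2} as the terminal value of a forward (sum--product type) recursion whose hidden state is enriched from the phase type $m\in\{1,2\}$ to the pair $(m,d)$, where $d\le\Delta$ records how many samples the current phase has already lasted; the cap $d\le\Delta$ is exactly what makes the recursion vector live in $\mathbb R^{2\Delta}$. Concretely, for $k\ge1$, $m\in\{1,2\}$ and $d\in\{1,\dots,\Delta\}$ I would \emph{define}
\[
\Lambda^m_{k,d}\ :=\ \sum_{\substack{s^k\in\mathcal S^k\\ s_k=m,\ o(s^k)=d}} w(s^k),
\]
\[
\text{where}\qquad w(s^k)\ :=\ \frac{1}{p_{s_k,\,o(s^k)}}\Big(\prod_{\ell=1}^{2}\prod_{d'=1}^{\Delta}p_{\ell d'}^{\,N_{\ell d'}(s^k)}\Big)\prod_{j=1}^{k}e^{f_{s_j}(X_j)},
\]
that is, $w(s^k)$ is the generic summand of~\eqref{eq-LLR-2} with the ``survival'' factor $p_{s_k,o(s^k)}^{+}$ of the (possibly unfinished) last phase removed, and $\Lambda^m_{k,d}$ collects these weights over all feasible prefixes whose last phase is of type $m$ and has run for exactly $d$ samples. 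Then~\eqref{eq-LLR-lemma} falls out immediately: using Lemma~\ref{lemma-sequence-s-t-probability}, which gives $P(s^t)=\big(p_{s_t,o(s^t)}^{+}/p_{s_t,o(s^t)}\big)\prod_{\ell,d'}p_{\ell d'}^{N_{\ell d'}(s^t)}$, together with the elementary identity $f_{1,t|S^t}(X^t|s^t)/f_{0,t}(X^t)=\prod_{k}e^{f_{s_k}(X_k)}$ already used in passing from~\eqref{eq-LLR} to~\eqref{eq-LLR-2}, one has $L_t(X^t)=\sum_{s^t}P(s^t)\prod_k e^{f_{s_k}(X_k)}=\sum_{s^t}p_{s_t,o(s^t)}^{+}\,w(s^t)$, and grouping the terms by $(s_t,o(s^t))=(m,d)$ yields $L_t(X^t)=\sum_{m=1}^{2}\sum_{d=1}^{\Delta}p_{md}^{+}\Lambda^m_{t,d}$. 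So the whole content of the lemma is that $\Lambda_k=((\Lambda^1_k)^\top,(\Lambda^2_k)^\top)^\top$ satisfies~\eqref{eq-Lambda-recursion} with the stated initialisation.

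I would prove the recursion by induction on $k$. In the step, fix $m$ and $d$ and split the sum defining $\Lambda^m_{k+1,d}$ --- over feasible $s^{k+1}$ with $s_{k+1}=m$ and $o(s^{k+1})=d$ --- according to the value of $s_k$. If $s_k=m$ the current phase continues, which forces $o(s^k)=d-1$ (so the case is empty unless $d\ge2$); the only change to the duration histograms is that one count moves from bin $(m,d-1)$ to bin $(m,d)$, the product $\prod_j e^{f_{s_j}(X_j)}$ gains exactly the new factor $e^{f_m(X_{k+1})}$, and inside $w$ the normalisation $1/p_{m,d-1}$ turns into $1/p_{m,d}$ while the histogram product gains precisely $p_{m,d}/p_{m,d-1}$ --- these two changes cancel, so $w(s^{k+1})=e^{f_m(X_{k+1})}w(s^k)$. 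Since these $s^{k+1}$ are in bijection with the prefixes counted by $\Lambda^m_{k,d-1}$, this case contributes $e^{f_m(X_{k+1})}\Lambda^m_{k,d-1}$, i.e.\ (over all $d$) the map $\Lambda^m_k\mapsto e^{f_m(X_{k+1})}A_0\Lambda^m_k$, which is the diagonal block $A^{mm}_{k+1}$ of~\eqref{eq-matrices-A-k}; the lower-shift $A_0$ automatically annihilates the $d=\Delta$ coordinate, as it must, since a phase of maximal length cannot continue. If $s_k=\bar m$ (the other state) a fresh phase of type $m$ and length one opens, so $o(s^{k+1})=1$ (case empty unless $d=1$): bin $(m,1)$ gains one count, whose factor $p_{m,1}$ cancels the $1/p_{m,1}$ in $w$, and writing $d'=o(s^k)$ for the length of the just-completed $\bar m$-phase and re-expressing $w(s^k)$ accordingly produces a factor $p_{\bar m,d'}$, so $w(s^{k+1})=e^{f_m(X_{k+1})}p_{\bar m,d'}w(s^k)$; summing over all prefixes ending in $\bar m$, partitioned by $d'$ and weighted by $p_{\bar m,d'}$, this case contributes $e^{f_m(X_{k+1})}\sum_{d'}p_{\bar m,d'}\Lambda^{\bar m}_{k,d'}=e^{f_m(X_{k+1})}e_1 p_{\bar m}^{\top}\Lambda^{\bar m}_k$, i.e.\ the off-diagonal block $A^{m\bar m}_{k+1}$. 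Adding the two contributions gives $\Lambda^m_{k+1}=A^{mm}_{k+1}\Lambda^m_k+A^{m\bar m}_{k+1}\Lambda^{\bar m}_k$, which is~\eqref{eq-Lambda-recursion}; the base case $k=1$ is checked directly from $\mathcal S^1=\{(1)\}$.

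I expect the only genuine work to lie in two bookkeeping points. The first is to confirm that the case split above is an honest bijection: a prefix of a feasible sequence is feasible, and --- because $p_1,p_2$ are strictly positive entrywise and every phase has length $\le\Delta$ --- each continuation in the first case (those prefixes with $o(s^k)<\Delta$) and each transition in the second case has strictly positive probability, hence indeed lands in $\mathcal S^{k+1}$; the only structural constraint, $S_1\equiv1$, enters solely through the base case and is then carried along by the recursion. The second point, which I regard as the more delicate one and which is also what dictates the particular $1/p_{s_k,o(s^k)}$ normalisation in the definition of $w$, is to keep rigorously separate the true sequence probability $P(s^t)$ --- which carries the survival weight $p_{s_t,o(s^t)}^{+}$ for the incomplete last phase --- from the reduced weight $w$ that is propagated inside $\Lambda$, where that factor is stripped off and only reinstated at the final step~\eqref{eq-LLR-lemma}. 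It is precisely this normalisation that makes both the ``continue'' and the ``new phase'' transitions collapse to the clean multiplicative updates $w\mapsto e^{f_m(X_{k+1})}w$ and $w\mapsto e^{f_m(X_{k+1})}p_{\bar m,d'}w$, and hence renders the recursion linear with the stated sparse, observation-scaled matrices $A_k$.
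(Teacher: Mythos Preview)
Your proof is correct and follows essentially the same approach as the paper: your weight $w(s^k)$ is, after the $1/p_{s_k,o(s^k)}$ cancellation, exactly the paper's $P^\prime(s^{k-o(s^k)})\prod_j e^{f_{s_j}(X_j)}$, so your $\Lambda^m_{k,d}$ coincides with the paper's $\Sigma^m_{k,d}$; both then obtain~\eqref{eq-LLR-lemma} by grouping according to $(s_t,o(s^t))$ and verify~\eqref{eq-Lambda-recursion} via the same case split into ``continuing phase'' ($d\ge 2$) versus ``new phase'' ($d=1$). Your discussion of the two bookkeeping points (feasibility of prefixes/extensions, and the role of the $1/p_{s_k,o(s^k)}$ normalisation) is more explicit than the paper's, but the underlying argument is identical.
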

\mypar{Remark} We note that the matrix $A_k$ can be further decomposed as
\begin{align} \label{eq-D-k-and-M-decomposition}
A_k& =D_k M_0\\
D_k & = \mathrm{diag} \left(\left( e^{f_1(X_k)} {\mathbb 1}^\top,\,e^{f_2(X_k)} {\mathbb 1}^\top\right)^\top\right),\,\;k=1,2,...,\\
M_0 & = \left[
\begin{array}{ll}
A_0& e_1 p_2^\top\\
e_1 p_1^\top & A_0
\end{array}
\right],
\end{align}
i.e., $D_k$ is a random diagonal matrix of size $2\Delta$, modulated by the $k$-th measurement $X_k$, and $M_0$ is a sparse, constant matrix of the same dimension, which defines transitions from the current state pattern to the one in the next time step.

\mypar{Proof intuition} The intuition behind this recursive form is the following. We break the sum in~\eqref{eq-LLR-2} into sequences $s^t$ whose last phases are of the same type.  For sequences that end with state $m=1$, ${\Lambda}^1_{t,d}$ represents the contribution to the overall likelihood ratio $L_t(X^t)$ of all such sequences whose last phase is of length $d$, and similarly for ${\Lambda}^2_{t,d}$. Once the vectors ${\Lambda}^1_{t,d}$ and ${\Lambda}^2_{t,d}$ are defined, their update is simple. Consider the value ${\Lambda}^1_{t+1,d}$, where $d>1$; this value corresponds to the likelihood ratio contribution of all sequences $s^{t+1}$ that end with state-$1$ phase of duration $d$. Since $d>1$, the only possible way to get a sequence of that form is to have a sequence at time $t$ that ends with the same state, where the duration of the last phase is $d-1$. This translates to the update ${\Lambda}^1_{t+1,d}= e^{f_1 (X_{t+1})}{\Lambda}^1_{t,d-1}$, where the choice of $f_1$ in the exponent is due to the fact that the last state is $s_{t+1}=1$; see also the first line in~\eqref{eq-matrices-A-k}. On the other hand, if $d=1$, then the state at time $t$ must have been $m=2$. The duration of this previous phase could have been arbitrary from $d=1$ to $d=\Delta$. Hence ${\Lambda}^1_{t+1,1}$ is computed as the sum $\Lambda^1_{t+1,1}= \sum_{d=1}^\Delta p_{2d} e^{f_1 (X_{t+1})} {\Lambda}^2_{t,d}$, where the probabilities $p_{2d}$ are used to mark that the previous phase is completed, see the second line in~\eqref{eq-matrices-A-k}. The analysis for ${\Lambda}^2_{t+1,d}$ is similar. The formal proof of Lemma~\ref{lemma-recursion-for-LRT} is given in Appendix.


\subsection{Error exponent~$\zeta$ as Lyapunov exponent} From Lemma~\ref{lemma-recursion-for-LRT} we see that $L_t$ can be represented as a linear function of the matrix product $\Pi_t:=A_t\cdot\ldots\cdot A_1$,
\begin{equation}
\label{eq-LRT-as-FK-product}
L_t= {p^{+}}^\top \Pi_t \Lambda_0,
\end{equation}
where $A_k$ are matrices of the form~\eqref{eq-matrices-A-k}, and $p^{+}=\left[ {p_1^{+}}^\top,\,{p_2^{+}}^\top\right]^\top$, where the $d$-th entry of $p_m^{+}$ equals $p_{md}^{+}$, for $m=1,2$, $d=1,2,...,\Delta$. Each $A_k$ is modulated by the measurement $X_k$ obtained at time $k$. Since $X_k$'s, $k=1,2,...$, are i.i.d., it follows that the matrices $A_k$ are i.i.d. as well. Applying a well-known result from the theory of random matrices, see Theorem~2 in~\cite{Furstenberg1960}, to sequence $A_k$ it follows that the sequence of the negative values of the normalized log-likelihood ratios $-\frac{1}{t}\log L_t$, $t=1,2,...$, converges to the Lyapunov exponent of the matrix product $\Pi_t$. This result is given in Lemma~\ref{lemma-FK-limit} and proven in Appendix.

\begin{lemma}\label{lemma-FK-limit}
With probability one,
\begin{equation}\label{eq-Lyapunov-limit-expectations}
\lim_{t\rightarrow+\infty} \frac{1}{t}\log \|\Pi_t\| = \lim_{t\rightarrow+\infty} \frac{1}{t} \mathbb E_0\left[\log \|\Pi_t\|\right],
\end{equation}
and thus, with probability one,
\begin{equation}\label{eq-Lyapunov-limit}
\zeta = \lim_{t\rightarrow+\infty} - \frac{1}{t}\log \|\Pi_t\| = \lim_{t\rightarrow+\infty} - \frac{1}{t} \mathbb E_0\left[\log L_t\right].
\end{equation}
\end{lemma}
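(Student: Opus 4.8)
The plan is to invoke Furstenberg--Kesten-type results for products of i.i.d. random matrices, and then to translate the statement about the operator norm $\|\Pi_t\|$ into the statement about $L_t$ via the explicit linear form $L_t = {p^{+}}^\top \Pi_t \Lambda_0$ from~\eqref{eq-LRT-as-FK-product}. First I would verify the integrability hypothesis needed by the Furstenberg--Kesten theorem: one needs $\mathbb E_0\left[\log^{+}\|A_1\|\right] < \infty$. Since $A_1 = D_1 M_0$ with $M_0$ a fixed matrix and $D_1 = \mathrm{diag}\left(\left(e^{f_1(X_1)}{\mathbb 1}^\top, e^{f_2(X_1)}{\mathbb 1}^\top\right)^\top\right)$, and $f_m(x) = \frac{1}{\sigma^2}\mu_m x - \frac{\mu_m^2}{2\sigma^2}$ is affine in $x$, we get $\log\|A_1\| \le \mathrm{const} + \max_m |f_m(X_1)| \le \mathrm{const} + \mathrm{const}\cdot|X_1|$, and $\mathbb E_0|X_1| < \infty$ because under $\mathcal H_0$, $X_1\sim\mathcal N(0,\sigma^2)$. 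Hence the integrability condition holds, and by Kingman's subadditive ergodic theorem applied to the subadditive sequence $\log\|\Pi_t\| = \log\|A_t\cdots A_1\|$, the limit $\lim_{t\to\infty}\frac1t\log\|\Pi_t\|$ exists almost surely and equals $\lim_{t\to\infty}\frac1t\mathbb E_0\left[\log\|\Pi_t\|\right]$, which is exactly~\eqref{eq-Lyapunov-limit-expectations}; call this common value $-\zeta$ (so far just a name for the top Lyapunov exponent).

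Next I would bridge from $\|\Pi_t\|$ to $L_t$. Since $L_t = {p^{+}}^\top \Pi_t \Lambda_0$ with $p^{+}$ and $\Lambda_0$ fixed nonzero vectors, the upper bound $L_t \le \|p^{+}\|\,\|\Pi_t\|\,\|\Lambda_0\|$ is immediate, giving $-\frac1t\log L_t \ge -\frac1t\log\|\Pi_t\| - \frac1t\log(\|p^{+}\|\|\Lambda_0\|)$, so $\liminf_t -\frac1t\log L_t \ge \zeta$. For the reverse inequality I would use positivity: all entries of $M_0$, of $\Lambda_0$, of $p^{+}$, and (almost surely) of the diagonal matrices $D_k$ are nonnegative, so $\Pi_t$ is a nonnegative matrix and $L_t$ is a nonnegative combination of its entries. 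The key observation is that $M_0$ (and hence $M_0^{2\Delta}$, or some fixed power) is a primitive/allowable nonnegative matrix — every state-pattern can reach every other within a bounded number of steps — so that a single entry of $\Pi_t$ controls, up to a multiplicative constant and a fixed time shift, the whole matrix norm $\|\Pi_t\|$. Combined with the fact that $p^{+}$ and $\Lambda_0$ have strictly positive coordinates (which follows from $p_1,p_2>0$ entrywise, since $p_{md}^{+}\ge p_{md}>0$), this yields a matching lower bound $L_t \ge c\,\|\Pi_{t-t_0}\|$ for constants $c>0$, $t_0$ independent of $t$, hence $\limsup_t -\frac1t\log L_t \le \zeta$. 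Together with~\eqref{eq-asymptotic-KL-rate}, which identifies $\zeta$ with $\lim_t -\frac1t\log L_t$, this gives~\eqref{eq-Lyapunov-limit}; the statement about $\mathbb E_0[\log L_t]$ then follows by the same sandwiching argument applied in expectation, using that $\frac1t\mathbb E_0[\log\|\Pi_t\|]$ converges.

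The main obstacle I anticipate is the reverse (lower) bound relating $\|\Pi_t\|$ to the scalar $L_t$: proving that a positive linear functional of $\Pi_t$ does not decay strictly faster than $\|\Pi_t\|$ itself requires a genuine structural argument about $M_0$, namely an irreducibility/primitivity property showing that the nonnegative matrix product cannot concentrate its mass in a direction orthogonal to $p^{+}$ or away from $\Lambda_0$. One clean way to handle this is to show there is a finite $r$ such that $M_0^{r}$ has all entries strictly positive (or at least that the relevant $r$-step products $A_{k+r}\cdots A_{k+1}$ have all entries bounded below by a positive multiple of their norm times $e^{\sum_{j} f_{\cdot}(X_{k+j})}$-type factors), from which $e_i^\top \Pi_t e_j \ge c\, e_{i'}^\top \Pi_{t} e_{j'}$ for all index pairs and a uniform $c>0$ after absorbing a bounded number of steps; then $L_t = {p^{+}}^\top\Pi_t\Lambda_0 \ge (\min_i p^{+}_i)(\min_j (\Lambda_0)_j)\sum_{i,j} e_i^\top\Pi_t e_j \ge c'\|\Pi_t\|_1 \ge c''\|\Pi_t\|$. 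I would also need to double-check measurability and the almost-sure qualifiers so that the "with probability one" clauses in the statement are literally justified, but this is routine once the subadditive ergodic theorem and the deterministic sandwiching inequalities are in place.
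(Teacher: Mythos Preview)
Your proposal is correct and, for the first identity~\eqref{eq-Lyapunov-limit-expectations}, matches the paper's proof exactly: both verify $\mathbb E_0[\log^{+}\|A_1\|]<\infty$ via the decomposition $A_k=D_kM_0$ and the affinity of $f_m$, then invoke the Furstenberg--Kesten theorem (Theorem~2 in~\cite{Furstenberg1960}).

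For the passage from $\|\Pi_t\|$ to $L_t$, the paper takes a shorter route than your primitivity argument. It writes $L_t={p^{+}}^\top\Pi_t\,{\mathbb 1}_{2\Delta}$ and simply observes that, since $\Pi_t$ is entrywise nonnegative and both $p^{+}$ and ${\mathbb 1}_{2\Delta}$ are strictly positive, one has deterministic constants $c,C>0$ with $c\|\Pi_t\|\le L_t\le C\|\Pi_t\|$; indeed $L_t\ge(\min_i p_i^{+})\sum_{i,j}(\Pi_t)_{ij}\ge c\|\Pi_t\|$ because for a nonnegative matrix the entry-sum dominates the operator norm. No structural property of $M_0$ is invoked. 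Your primitivity-based bound $L_t\ge c\,\|\Pi_{t-t_0}\|$ is more elaborate than needed here, but it buys robustness: it would still work if the right-hand vector had zero coordinates (as $\Lambda_1$ in fact does), a case in which the paper's one-line sandwich, taken literally with $\Lambda_0$ in place of ${\mathbb 1}_{2\Delta}$, would require exactly the kind of irreducibility argument you outline.
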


Lemma~\ref{lemma-FK-limit} asserts that the error exponent for hypothesis testing problem~\eqref{eq-hypothesis-testing-problem} equals the top Lyapunov exponent for the sequence of products $\Pi_t$. Computation of the Lyapunov exponent (e.g., for i.i.d. matrices) is a well-known problem in random matrix theory and theory of random dynamical systems, proven to be very difficult to solve, see, e.g.,~\cite{TsitsiBlondel97}. We instead search for tractable lower bounds that tightly approximate $\zeta$. We base our method for approximating $\zeta$ on the right hand-side identity in~\eqref{eq-Lyapunov-limit}.

\section{Main result}
\label{sec-main}

Our first step for computing the limit in~\eqref{eq-Lyapunov-limit} is a natural one. Since $\mu_1\geq 0$ is the guaranteed signal level (recall that $\mu_2>\mu_1\geq 0$), we assume that the signal was at all times at state $1$, and remove the corresponding components of the signal to noise ratio (SNR) $\frac{\mu_1^2}{2\sigma^2}$ and the signal sum $\sum_{k=1}^t X_k$ from the likelihood ratio. This manipulation then gives us a lower bound on the error exponent. By doing so, we arrive at an equivalent problem to problem~\eqref{eq-hypothesis-testing-problem} just with $\mu_1=0$. Mathematically, we have
\begin{align}
\label{eq-LLR-3}
L_t(X^t)\!\! &=\!\! \sum_{s^t \in \mathcal S^t} P(s^t) e^{ \frac{1}{\sigma^2} \mu_1
\left(\sum\limits_{k =1}^t  X_{k} - \sum\limits_{k \in \mathcal T_2(s^t)}  X_{k} - (t-\tau_2(s^t))\frac{\mu_1^2}{2 \sigma^2}\right)}  \times \nonumber\\
&\times e^{  \frac{1}{\sigma^2}  \mu_2 \sum\limits_{k \in \mathcal T_2 (s^t)}   X_{k}  - \tau_2 (s^t) \frac{\mu_2^2}{2\sigma^2} } = e^{ \frac{1}{\sigma^2} \mu_1  \sum\limits_{k =1}^t  X_{k}  - t \frac{\mu_1^2}{2 \sigma^2}}\times \nonumber\\
&\times \sum_{s^t \in \mathcal S^t} P(s^t) e^{ \frac{1}{\sigma^2}  \sum\limits_{k \in \mathcal T_2 (s^t)}  (\mu_2 -\mu_1) X_{k}  - \tau_2 (s^t) \frac{\mu_2^2 - \mu_1^2}{2\sigma^2} }.
\end{align}
Taking the logarithm, dividing by $t$, and computing the expectation with respect to hypothesis $\mathcal H_0$, we get
\begin{align}
\label{eq-LLR-4}
& \!\!\!\frac{1}{t}\mathbb E_0\left[\log L_t(X^t)\right]= - \frac{\mu_1^2}{2 \sigma^2} + \frac{1}{t}\mathbb E_0\left[\log \sum_{s^t \in \mathcal S^t} P(s^t)\times\right.\nonumber \\
&\;\;\;\;\;\times\left. e^{ \frac{1}{\sigma^2}  \sum_{k \in \mathcal T_2 (s^t)}  (\mu_2 -\mu_1) X_{k}  - \tau_2 (s^t) \frac{\mu_2^2 - \mu_1^2}{2\sigma^2} }\right],
\end{align}
where we used that $\mathbb E_0\left[X_k\right]=0$, for all $k$, see~\eqref{eq-hypothesis-testing-problem}. Taking the limit as $t\rightarrow +\infty$, we obtain
\begin{equation}\label{eq-new-limit}
\zeta = \frac{\mu_1^2}{2 \sigma^2} + \eta,
\end{equation}
where $\eta$ is given by the following limit
\begin{align}
\label{eq-def-eta}
\eta= \lim_{t\rightarrow +\infty} &-\frac{1}{t}\mathbb E_0\left[ \log \sum_{s^t \in \mathcal S^t} P(s^t)\times\right.\nonumber \\
&\left.\times e^{ \frac{1}{\sigma^2}  \sum_{k \in \mathcal T_2 (s^t)}  (\mu_2 -\mu_1) X_{k}  - \tau_2 (s^t) \frac{\mu_2^2 - \mu_1^2}{2\sigma^2} }\right],
\end{align}
the existence of which is guaranteed by~\eqref{eq-Lyapunov-limit}, in Lemma~\ref{lemma-FK-limit}. From now on, we focus on computing $\eta$. Before we proceed, we make a simplification in the expression for $\eta$ by replacing the term $P(s^t)$ with its analytically more appealing proxy $P^\prime(s^t)$, see~\eqref{eq-relation-P-and-P-prime}. Applying inequality~\eqref{eq-relation-P-and-P-prime} in~\eqref{eq-def-eta} and using the fact that $\frac{1}{t}\log p_{\min}\rightarrow 0$, as $t\rightarrow +\infty$, we obtain that the limit in~\eqref{eq-def-eta} does not change when we replace $P(s^t)$ with $P^\prime(s^t)$, i.e.,
\begin{align}
\label{eq-def-eta-2}
\eta= \lim_{t\rightarrow +\infty} &-\frac{1}{t}\mathbb E_0\left[ \log \sum_{s^t \in \mathcal S^t} P^\prime(s^t)\times\right.\nonumber \\
&\left.\times e^{ \frac{1}{\sigma^2}  \sum_{k \in \mathcal T_2 (s^t)}  (\mu_2 -\mu_1) X_{k}  - \tau_2 (s^t) \frac{\mu_2^2 - \mu_1^2}{2\sigma^2} }\right].
\end{align}

For $\lambda \in \mathbb R^{\Delta}$, and $p\in \mathbb S^{\Delta-1}$, introduce the relative entropy function $D(\lambda||p):= \sum_{d=1}^\Delta \frac{\lambda_d}{{\mathbb 1}^\top \lambda}
\log {\frac{\lambda_d}{{\mathbb 1}^\top \lambda}p_d}$.
\begin{theorem}
\label{theorem-main}
There holds $\underline \eta + \frac{\mu_1^2}{2\sigma^2}\leq \zeta$, where $\underline \eta$ is the optimal value of the following optimization problem
\begin{equation}
\begin{array}[+]{lc}
\mathrm{minimize} &  G(\nu)\\
\mathrm{subject\; to} & H(\nu_1)+ H(\nu_2) \geq \frac{\xi^2}{2\theta_2\sigma^2}\\
& \theta_2 = q^\top \nu_2 \\
& \nu \in \mathcal V\\
& \xi \in \mathbb R.
\end{array},
\label{eq-main-LB-opt}
\end{equation}
where $G(\nu)=D(\nu_1|| p_1) +  D(\nu_2||p_2) +
\frac{\theta_2}{2 \sigma^2}  \left( \frac{\xi}{\theta_2}-(\mu_2-\mu_1)\right)^2  + \theta_2 \frac{  \mu_1(\mu_2-\mu_1)}{\sigma^2}$, for $\nu \in \mathbb R_{+}^{2\Delta}$, $\xi\in \mathbb R$.
\end{theorem}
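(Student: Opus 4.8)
## Proof proposal for Theorem~\ref{theorem-main}

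The plan is to evaluate the limit $\eta$ in~\eqref{eq-def-eta-2} by a type-counting / large deviations argument and then to lower-bound it by discarding a single multiplicative term, thereby producing the optimization problem~\eqref{eq-main-LB-opt}. First I would group the sum over $s^t\in\mathcal S^t$ in~\eqref{eq-def-eta-2} according to the type $\nu=V(s^t)$, writing $\sum_{s^t}=\sum_{\nu\in\mathcal V_t}\sum_{s^t\in\mathcal S^t_\nu}$. Using Lemma~\ref{lemma-sequence-s-t-probability} (in the $P'$ form), all sequences of a common type $\nu$ share the same value of $P'(s^t)=\prod_d p_{1d}^{\nu_{1d}t}\prod_d p_{2d}^{\nu_{2d}t}$, whose exponent is $-t\big(D(\nu_1\|p_1)+D(\nu_2\|p_2)\big)$ up to the $H(\nu_1)+H(\nu_2)$ normalization already folded into the relative-entropy definition; combined with the cardinality estimate $C_{t,\nu}\doteq e^{t(H(\nu_1)+H(\nu_2))}$ from Lemma~\ref{lemma-growth-rate-C-t-nu}, the combinatorial prefactor becomes $e^{-t(D(\nu_1\|p_1)+D(\nu_2\|p_2))}$ (this is exactly the appearance of the first two terms of $G$).

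The second step handles the measurement-dependent exponential $\exp\big(\tfrac{1}{\sigma^2}\sum_{k\in\mathcal T_2(s^t)}(\mu_2-\mu_1)X_k-\tau_2(s^t)\tfrac{\mu_2^2-\mu_1^2}{2\sigma^2}\big)$. For a fixed type $\nu$ the number of state-$2$ indices is $\tau_2=\theta_2 t$ with $\theta_2=q^\top\nu_2$, so under $\mathbb P_0$ the sum $\sum_{k\in\mathcal T_2}X_k$ is $\mathcal N(0,\theta_2 t\,\sigma^2)$; I would apply an LDP (Cram\'er's theorem, cast in the with-probability-one measure form of Definition~\ref{def-LDP-wp1}) to the empirical quantity $\xi:=\tfrac1t\sum_{k\in\mathcal T_2}X_k$, whose rate function is $\xi\mapsto \xi^2/(2\theta_2\sigma^2)$. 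The key point is that inside $\mathbb E_0[\tfrac1t\log(\cdot)]$ one may interchange the limit with the $\log$-sum (a Laplace/Varadhan step, Lemma~\ref{lemma-Varadhans-wp1}, after checking the tail condition~\eqref{eq-tail-condition} which is routine since the Gaussian tails beat the linear exponent); the sum over the polynomially-many types $\nu\in\mathcal V_t$ contributes nothing at exponential scale, so $-\eta$ equals
\begin{equation*}
\sup_{\nu\in\mathcal V,\ \xi\in\mathbb R}\ \Big\{-D(\nu_1\|p_1)-D(\nu_2\|p_2)+\tfrac{\xi}{\sigma^2}(\mu_2-\mu_1)-\theta_2\tfrac{\mu_2^2-\mu_1^2}{2\sigma^2}-\tfrac{\xi^2}{2\theta_2\sigma^2}\Big\}.
\end{equation*}
A short algebraic rearrangement (completing the square in $\xi$ and using $\mu_2^2-\mu_1^2=(\mu_2-\mu_1)^2+2\mu_1(\mu_2-\mu_1)$) rewrites the bracket as $-G(\nu)+\xi^2/(2\theta_2\sigma^2)$ evaluated against the constraint, which is precisely why the constraint $H(\nu_1)+H(\nu_2)\ge \xi^2/(2\theta_2\sigma^2)$ appears: the LDP lower bound is only available when the deviation $\xi$ does not exceed the entropy budget supplied by $C_{t,\nu}$, i.e.\ the number of sequences of type $\nu$ must be large enough to realize the event $\{\tfrac1t\sum_{k\in\mathcal T_2}X_k\approx\xi\}$ with the claimed exponent.

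Finally, to pass from the exact value of $\eta$ to the stated \emph{lower bound} $\underline\eta+\tfrac{\mu_1^2}{2\sigma^2}\le\zeta$, I would retain only the terms that survive this reasoning and note that dropping the (nonnegative) factor $p^+_{s_t,o(s^t)}/p_{s_t,o(s^t)}$ — equivalently, replacing $P$ by $P'$, already justified in~\eqref{eq-def-eta-2} — together with the earlier reduction~\eqref{eq-new-limit} that pulled out $\mu_1^2/(2\sigma^2)$, yields $\zeta=\mu_1^2/(2\sigma^2)+\eta\ge \mu_1^2/(2\sigma^2)+\underline\eta$, with $\underline\eta=-\sup(\cdots)=\min G(\nu)$ over the feasible set — i.e.\ exactly~\eqref{eq-main-LB-opt}; the sign flip turns the $\sup$ of $-G$ into the $\min$ of $G$. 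The main obstacle I anticipate is the rigorous justification of the interchange of expectation, limit, and logarithm of the type-sum in a \emph{with-probability-one} sense: one must verify that the random measure $\mu_t^\omega$ putting mass on the empirical $(\nu,\xi)$ pairs satisfies the LDP almost surely (not merely in expectation), check Varadhan's tail condition~\eqref{eq-tail-condition} uniformly over types, and control the polynomially-growing number of types $|\mathcal V_t|$ so that the $\log\sum$ collapses to $\sup$ without an error term — this is the step where the machinery of Section~\ref{sec-Preliminaries} (Lemmas~\ref{lemma-growth-rate-C-t-nu}, \ref{lemma-Varadhans-wp1}) must be stitched together carefully, and where the bulk of Section~\ref{sec-proof-main} will presumably be spent.
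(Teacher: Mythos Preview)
Your overall skeleton --- group by types, use Lemma~\ref{lemma-growth-rate-C-t-nu} to turn $C_{t,\nu}$ into $e^{tH(\nu)}$, recognise $P'(s^t)$ as $e^{-t(D(\nu_1\|p_1)+D(\nu_2\|p_2))}$ modulo the entropy term, and finish with a Varadhan argument followed by completing the square in $\xi$ --- matches the paper. But there is one genuine gap, and it is precisely the step you flag as ``the main obstacle'' while misidentifying its nature.

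For a fixed type $\nu$ there is not one Gaussian $\sum_{k\in\mathcal T_2}X_k$; there are $C_{t,\nu}$ of them, one for each $s^t\in\mathcal S^t_\nu$, and they are \emph{correlated}: whenever two sequences $s^t$ and $(s^t)'$ share indices in $\mathcal T_2$, the corresponding sums overlap. The LDP you invoke for the empirical quantity $\xi=\tfrac1t\sum_{k\in\mathcal T_2}X_k$ with rate $J_\nu(\xi)=\xi^2/(2\theta_2\sigma^2)$ is the rate function one would get if the $\mathcal X_{s^t}$ were \emph{independent} Gaussians with the right marginal variance; for the actual correlated family there is no direct Cram\'er-type statement available, and proving the almost-sure LDP for the induced measure $Q_t^{\mathcal X}$ is exactly the hard part the paper does \emph{not} attempt. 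Instead, the paper introduces an i.i.d.\ Gaussian family $\mathcal Z=\{\mathcal Z_{s^t}\}$ with the same marginals, applies Slepian's lemma (Lemma~\ref{lemma-application-of-Slepian}) to the function $\phi_t(x)=-\log\sum_{s^t}e^{\gamma_{s^t}(x_{s^t})}$ --- which has nonnegative mixed second partials --- and obtains $\mathbb E_0[\log\mathbb E_Q e^{tF(V,\mathcal X)}]\le \mathbb E[\log\mathbb E_Q e^{tF(V,\mathcal Z)}]$. Only \emph{after} this decorrelation can one prove the almost-sure LDP (Theorem~\ref{theorem-Q-t-satisfies-LDP}) for $Q_t^{\mathcal Z}$ and run Varadhan. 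This is the source of the inequality $\underline\eta\le\eta$, and hence of the theorem's one-sided statement.

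A related correction: you attribute the inequality to the $P\to P'$ replacement, but the paper shows explicitly (just below~\eqref{eq-def-eta}) that this replacement leaves the limit $\eta$ unchanged, since $\tfrac1t\log p_{\min}\to 0$. The inequality is entirely due to the Slepian step. Consequently your sentence ``$-\eta$ equals $\sup_{\nu,\xi}\{\ldots\}$'' is not established; only ``$-\eta\le\sup_{\nu,\xi}\{\ldots\}$'' is, after Slepian.
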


\mypar{Guaranteed error exponent} Since each of the terms in the objective function of~\eqref{eq-main-LB-opt} is non-negative, its optimal value is lower bounded by $0$. Using relation~\eqref{eq-new-limit}, we obtain that the value of the error exponent is lower bounded by the value of SNR in state-$1$, $\frac{\mu_1^2}{2\sigma^2}$, i.e.,
\begin{equation}\label{eq-guaranteed-exponent}
\zeta \geq \frac{\mu_1^2}{2 \sigma^2}.
\end{equation} The preceding bound holds for any choice of parameters $\Delta, p_1,p_2,\mu_1$ and $\mu_2$. This result is very intuitive, as it mathematically formalizes the reasoning that, no matter which configuration of states occurs, signal level $\mu_1$ is always guaranteed, and hence the corresponding value of error exponent $\frac{\mu_1^2}{2 \sigma^2}$ is ensured. In that sense, any appearance of state $2$ (i.e., signal level $\mu_2>\mu_1$) can only increase the error exponent.

\mypar{Special case $\mu_1=0$ and detectability condition} When the signal level in state $1$ equals zero, then, since the statistics of $X_k$ for $S_k=1$ is the same as its statistics under $\mathcal H_0$, effectively we can have information on the state of nature $\mathcal H_1$ only when state $S_k=2$ occurs. Denoting $\mu=\mu_2$, optimization problem~\eqref{eq-main-LB-opt} then simplifies to:
\begin{equation}
\begin{array}[+]{lc}
\mathrm{minimize} &  D(\nu_1|| p_1) +  D(\nu_2||p_2) +
\frac{\theta_2}{2  \sigma^2}  \left( \frac{\xi}{\theta_2}-\mu\right)^2  \\
\mathrm{subject\; to} & H(\nu_1)+ H(\nu_2) \geq \frac{\xi^2}{2\theta_2\sigma^2}\\
& \theta = \nu^\top q \\
& \nu \in \mathcal V\\
& \xi \in \mathbb R.
\end{array}.
\label{eq-main-LB-opt-2}
\end{equation}

From~\eqref{eq-main-LB-opt-2} we obtain the following condition for detectability of process $S_k$:
\begin{equation}\label{eq-detectability-condition}
H(p_1) + H(p_2) \geq \frac {q^\top p_2} {q^\top p_1+q^\top p_2}\frac{ \mu^2}{2\sigma^2},
\end{equation}
i.e., if the inequality above holds, then the optimal value of optimization problem~\eqref{eq-main-LB-opt-2} is zero. To see why this holds, note that the point $(\nu_1,\nu_2,\xi)\in \mathbb R^{2\Delta+1}$, where $\nu_m=p_m/(q^\top p_1+q^\top p_2)$, $m=1,2$, and $\xi=q^\top p_2/((q^\top p_1+q^\top p_2)) \mu$ under which the cost function of~\eqref{eq-main-LB-opt-2} vanishes, under condition~\eqref{eq-detectability-condition} belongs to the constraint set of~\eqref{eq-main-LB-opt-2}. Thus, under condition~\eqref{eq-detectability-condition}, the lower bound on the error exponent $\underline \eta$ is zero, indicating that the process $S_k$ is not detectable. To further illustrate this condition, note that the left hand-side corresponds to the entropy of the process $S_k$, and the right hand-side corresponds to the expected, i.e. -- long-run SNR of the measured signal ($q^\top p_2 / \left(q^\top p_1+q^\top p_2\right)$ is the expected fraction of times that the process was in state $2$, and $\frac{\mu^2}{2\sigma^2}$ is the SNR for this state). Condition~\eqref{eq-detectability-condition} therefore asserts that, if the entropy of the process $S_k$ is too high compared to the expected, or long-run, SNR, then it is not possible to detect its presence. Intuitively, if the dynamics of the phase durations is too stochastic, then it is not possible to estimate the locations of state $2$ occurrences, in order to perform the likelihood ratio test. However, on the other hand, if the SNR is very high (e.g., the level $\mu$ is high compared to the process noise $\sigma^2$) then, whenever state $2$ occurs, the signal will make a sharp increase and can therefore be easily detected. The condition in this sense quantitatively characterizes the threshold between the two physical quantities which makes detection possible.

\subsection{Reformulation of~\eqref{eq-main-LB-opt-2}}

In this subsection we show that optimization problem~\eqref{eq-main-LB-opt-2} admits a simplified form, obtained by suppressing the dependence on $\xi$ through inner minimization over this variable. 
To simplify the notation, introduce $H(\nu)=H(\nu_1)+H(\nu_2)$ and $R(\nu) = q^\top \nu_2 \frac{\mu^2}{2\sigma^2}$; note that the function $R$ has the physical meaning of the expected SNR of the $S_t$ process that we wish to detect, for a given sequence type $\nu$.

\begin{lemma}\label{lemma-convex-reformulation}
Suppose that $H(p_1)+H(p_2)< q^\top p_2 / \left(q^\top p_1+q^\top p_2\right) \frac{\mu^2} {2 \sigma^2}$. Then, optimization problem~\eqref{eq-main-LB-opt-2} is equivalent to the following optimization problem:
\begin{equation}
\begin{array}[+]{lc}
\mathrm{minimize} &  D(\nu_1|| p_1) +  D(\nu_2||p_2) +   \left( \sqrt{ H(\nu)} - \sqrt{R(\nu)} \right)^2 \\
\mathrm{subject\; to} & H(\nu) \leq  R(\nu)\\
& \nu \in \mathcal V\\
\end{array}.
\label{eq-convex-reformulation}
\end{equation}
\end{lemma}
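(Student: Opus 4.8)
The plan is to eliminate the scalar variable $\xi$ from~\eqref{eq-main-LB-opt-2} by carrying out, for each fixed type $\nu\in\mathcal V$, the inner minimization over $\xi$ in closed form, and then --- this is the step that uses the hypothesis of the lemma --- to show that the remaining minimization over $\nu$ may be confined to the set $\{\nu\in\mathcal V:\,H(\nu)\le R(\nu)\}$ without changing the optimal value.

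\emph{Inner minimization over $\xi$.} Fix $\nu\in\mathcal V$. On $\mathcal V$ one has $\mathbf 1^\top\nu_1=\mathbf 1^\top\nu_2>0$ and $\theta_2:=q^\top\nu_2>0$, since $q$ has strictly positive entries, $\nu\ge 0$, and $q^\top\nu_1+q^\top\nu_2=1$ preclude $\nu_m=0$. The only $\xi$-dependent part of the objective of~\eqref{eq-main-LB-opt-2} is $g(\xi)=\frac{\theta_2}{2\sigma^2}(\xi/\theta_2-\mu)^2=\frac{1}{2\sigma^2\theta_2}(\xi-\theta_2\mu)^2$, a parabola with unconstrained minimizer $\xi=\theta_2\mu>0$, while the constraint $H(\nu)\ge\xi^2/(2\theta_2\sigma^2)$ confines $\xi$ to $[-a,a]$ with $a:=\sqrt{2\theta_2\sigma^2H(\nu)}$. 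Squaring gives $\theta_2\mu\le a\iff\theta_2\mu^2\le 2\sigma^2H(\nu)\iff R(\nu)\le H(\nu)$. Hence: if $R(\nu)\le H(\nu)$ the unconstrained minimizer is admissible and the inner minimum is $0$; if $R(\nu)>H(\nu)$ then $g$ is strictly decreasing on $[-a,a]$, the inner minimum is $g(a)=(a-\theta_2\mu)^2/(2\sigma^2\theta_2)$, and substituting $a=\sqrt{2\sigma^2\theta_2H(\nu)}$ and $\theta_2\mu=\sqrt{2\sigma^2\theta_2R(\nu)}$ turns this into $(\sqrt{H(\nu)}-\sqrt{R(\nu)})^2$. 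In all cases the inner minimum equals $(\max\{0,\sqrt{R(\nu)}-\sqrt{H(\nu)}\})^2$, so~\eqref{eq-main-LB-opt-2} is equivalent to minimizing $D(\nu_1||p_1)+D(\nu_2||p_2)+(\max\{0,\sqrt{R(\nu)}-\sqrt{H(\nu)}\})^2$ over $\nu\in\mathcal V$, and on the set $\{H(\nu)\le R(\nu)\}$ this objective is precisely the one of~\eqref{eq-convex-reformulation}.

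\emph{Restriction to $\{H\le R\}$.} Since $\{\nu\in\mathcal V:\,H(\nu)\le R(\nu)\}\subseteq\mathcal V$, problem~\eqref{eq-convex-reformulation} has optimal value at least that of the reduced problem above; only the reverse inequality requires work. Reparametrize $\mathcal V$ by pairs of probability vectors: writing $\nu_m=\bar\nu_m/(q^\top\bar\nu_1+q^\top\bar\nu_2)$ with $\bar\nu_m\in\mathbb S^{\Delta-1}$ yields a homeomorphism $\mathcal V\cong\mathbb S^{\Delta-1}\times\mathbb S^{\Delta-1}$ under which $D(\nu_m||p_m)=D(\bar\nu_m||p_m)$, $H(\nu)=H(\bar\nu_1)+H(\bar\nu_2)$ and $R(\nu)=\frac{q^\top\bar\nu_2}{q^\top\bar\nu_1+q^\top\bar\nu_2}\frac{\mu^2}{2\sigma^2}$ are continuous, and --- crucially --- $D(\cdot||p_m)$ is now a genuinely convex function of its argument. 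By continuity and compactness the reduced problem attains its optimum at some $\nu^\star\leftrightarrow(\bar\nu_1^\star,\bar\nu_2^\star)$; if $H(\nu^\star)\le R(\nu^\star)$ there is nothing to prove, so assume $H(\nu^\star)>R(\nu^\star)$, in which case the reduced objective at $\nu^\star$ equals $D(\bar\nu_1^\star||p_1)+D(\bar\nu_2^\star||p_2)$. Along the segment $\bar\nu_m(t)=(1-t)\bar\nu_m^\star+tp_m$, $t\in[0,1]$, the map $t\mapsto D(\bar\nu_1(t)||p_1)+D(\bar\nu_2(t)||p_2)$ is convex, everywhere nonnegative, and vanishes at $t=1$, hence non-increasing on $[0,1]$; meanwhile $t\mapsto H(\nu(t))-R(\nu(t))$ is continuous, positive at $t=0$, and strictly negative at $t=1$ precisely by the hypothesis $H(p_1)+H(p_2)<\frac{q^\top p_2}{q^\top p_1+q^\top p_2}\frac{\mu^2}{2\sigma^2}$. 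Choosing $t^\dagger$ with $H(\nu(t^\dagger))=R(\nu(t^\dagger))$, the point $\nu(t^\dagger)$ is feasible for~\eqref{eq-convex-reformulation}, the positive-part term vanishes there, and the relative-entropy part is no larger than at $\nu^\star$; thus~\eqref{eq-convex-reformulation} achieves a value no larger than the reduced problem's optimum, which yields equality and completes the proof.

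The one genuinely delicate point is this last step: ruling out that the optimum migrates into the region $\{H>R\}$. It cannot be carried out directly in the $\nu$-coordinates, because $\nu\mapsto D(\nu_1||p_1)+D(\nu_2||p_2)$ fails to be convex there; passing to the normalized simplex coordinates restores convexity of the relative-entropy term, which makes the ``descent along a segment toward the minimizer'' argument legitimate, and the detectability hypothesis is exactly what places that minimizer, $(p_1,p_2)$, inside $\{H<R\}$. The remaining items --- that $\mathbf 1^\top\nu_m$ and $\theta_2$ are bounded away from zero on the compact set $\mathcal V$, so that all functions involved are continuous there and the extreme-value and intermediate-value arguments apply --- are routine bookkeeping.
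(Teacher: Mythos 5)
Your proof is correct, and while the first half (the closed-form inner minimization over $\xi$, yielding the value $\bigl(\max\{0,\sqrt{R(\nu)}-\sqrt{H(\nu)}\}\bigr)^2$) coincides with the paper's, the second half takes a genuinely different route. The paper restricts attention to $\mathcal V_1=\{H\ge R\}$, claims the resulting problem $\min\{D(\nu_1||p_1)+D(\nu_2||p_2):\nu\in\mathcal V_1\}$ is convex with a Slater point, and reads off from the KKT conditions that (under the detectability hypothesis) the minimizer must sit on the boundary $\{H=R\}$, which also belongs to $\mathcal V_2$. You instead take the optimum $\nu^\star$ of the $\xi$-eliminated problem, pass to normalized simplex coordinates $\bar\nu_m=\nu_m/{\mathbb 1}^\top\nu_m$, and slide along the segment from $\bar\nu^\star$ to $(p_1,p_2)$: the relative-entropy term is convex, nonnegative and zero at the far endpoint, hence non-increasing, while $H-R$ changes sign (positive at $\nu^\star$, strictly negative at $(p_1,p_2)$ precisely by the hypothesis), so an intermediate-value crossing produces a point of $\{H=R\}$ with no larger objective. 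This buys you two things: it is more elementary (no KKT, no Slater condition), and it sidesteps the one fragile point in the paper's argument, namely that $\nu\mapsto D(\nu_1||p_1)+D(\nu_2||p_2)$, with $D$ defined through the normalization $\nu_m/{\mathbb 1}^\top\nu_m$, is not obviously convex in the unnormalized $\nu$-coordinates (it is the KL divergence of the normalized vector, not its convex perspective), so the paper's convexity claim for~\eqref{eq-opt-mathcal-V-1} requires more care than it is given. What the paper's KKT route buys, if its convexity claim is granted, is the sharper structural statement that the minimizer over $\mathcal V_1$ actually lies on $\{H=R\}$; for the equality of optimal values asserted by the lemma, your argument suffices. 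Your bookkeeping — compactness of $\mathcal V$, $q^\top\nu_2$ bounded away from zero, continuity of all functions so that the extreme- and intermediate-value theorems apply, and the strictness of the inequality at $(p_1,p_2)$ coming from the hypothesis — is all in order.
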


\begin{proof}
 Fix $\nu\in \mathcal V$. To remove the dependence on $\xi$ in~\eqref{eq-main-LB-opt-2}, for any given fixed $\nu\in \mathcal V$, we need to solve
\begin{equation}\label{eq-inner-minimization}
\begin{array}[+]{lc}
\mathrm{minimize} & \theta_2 \frac{ \left( \frac{\xi}{\theta_2} - \mu\right)^2}{2\sigma^2}\\
\mathrm{subject\; to} & H(\nu) \geq \frac{\xi^2}{2\theta_2\sigma^2}\\
& \xi\in \mathbb R
\end{array},
\end{equation}
where, as before, we denote $\theta_2=q^\top \nu_2$. Since $\mu>0$, and the constraint set is defined only through the square of $\xi$, the optimal solution of~\eqref{eq-inner-minimization} is achieved for $\xi\geq 0$.  Thus,~\eqref{eq-inner-minimization} is equivalent to
\begin{equation}\label{eq-inner-minimization-2}
\begin{array}[+]{lc}
\mathrm{minimize} & \theta_2 \frac{ \left( \frac{\xi}{\theta_2} - \mu\right)^2}{2\sigma^2}\\
\mathrm{subject\; to} & 0\leq \xi \leq \sigma \sqrt{ 2\theta_2H(\nu)}
\end{array}.
\end{equation}
The solution of~\eqref{eq-inner-minimization-2} is given by: 1) $\xi^\star = \theta_2\mu$, if $\theta_2 \mu \leq \sigma \sqrt{ 2\theta_2 H(\nu)}$; and 2) $\xi^\star=\sigma \sqrt{ 2\theta_2H(\nu)}$, otherwise. Hence, to solve~\eqref{eq-main-LB-opt-2} we can partition its constraint set $\mathcal V= \mathcal V_1\bigcup\mathcal V_2$ according to these two cases, where $\mathcal V_1=\left\{\nu \in\mathcal V:\, H(\nu) \geq \theta_2\frac{\mu^2}{2\sigma^2}\right\}$ and $\mathcal V_2=\left\{\nu \in\mathcal V:\, H(\nu) \leq \theta_2\frac{\mu^2}{2\sigma^2}\right\}$, solve the corresponding two optimization problems, and finally find the minimum among the two obtained optimal values.

Consider first the case $\nu \in \mathcal V_1$. Since in this case $\xi^\star = \theta_2\mu$, plugging in this value in~\eqref{eq-inner-minimization-2}, we have that the optimization problem~\eqref{eq-main-LB-opt-2} with $\mathcal V$ reduced to $\mathcal V_1$ simplifies to:
\begin{equation}
\begin{array}[+]{lc}
\mathrm{minimize} &  D(\nu_1|| p_1) +  D(\nu_2||p_2)\\
\mathrm{subject\; to} & \nu \in \mathcal V_1.
\end{array}.
\label{eq-opt-mathcal-V-1}
\end{equation}
If $H(p) \geq \frac{q^\top p_2}{q^\top p_1+q^\top p_2} \frac{\mu^2}{2\sigma^2}$, then the point $1/ \left(q^\top p_1+q^\top p_2\right)p $ belongs to $\mathcal V$, where $p=(p_1,p_2)$ and hence the optimal solution to~\eqref{eq-opt-mathcal-V-1} equals $1/ \left(q^\top p_1+q^\top p_2\right)p$ with the corresponding optimal value equal to $0$. Suppose now that $H(p) < \frac{q^\top p_2}{q^\top p_1+q^\top p_2} \frac{\mu^2}{2\sigma^2}$. We show that in this case the solution to~\eqref{eq-opt-mathcal-V-1} must be at the boundary of the constraint set, in the set of points $\left\{ \nu\in \mathcal V:\,H(\nu) = \theta_2\frac{\mu^2}{2\sigma^2}\right\}$.

We prove the above claim. Since the entropy function $H$, see eq.~\eqref{eq-def-entropy}, is concave, the constraint set $\mathcal V_1$ is convex, and since KL divergence $D$ is convex, we conclude that the problem in~\eqref{eq-opt-mathcal-V-1} is convex. Also, it can be shown that the Slater point exists~\cite{BoydsBook04}. Therefore, the solution to~\eqref{eq-opt-mathcal-V-1} is given by the corresponding Karush-Kuhn-Tucker (KKT) conditions:
\begin{equation}
\label{eq-KKT-cvx-1}
\left\{\begin{array}{l}
(1+\lambda) \log\frac{\nu_{1d}}{ {\mathbb 1}^\top \nu_1} - \log p_{1d}= 0,\,\,\mathrm{for}\;d=1,...,\Delta\\
(1+\lambda) \log\frac{\nu_{2d}}{{\mathbb 1}^\top \nu_2} - \log p_{2d} +\lambda d \frac{\mu^2}{2\sigma^2}= 0,\,\,\mathrm{for}\;d=1,...,\Delta\\
H(\nu) \geq  q^\top \nu_2 \frac{\mu^2}{2\sigma^2} \\ 
\lambda\geq 0\\
\lambda\, \left(H(\nu) - q^\top \nu_2 \frac{\mu^2}{2\sigma^2}\right)= 0\\
\nu \in \mathcal V
\end{array}\right..
\end{equation}
From the fourth and fifth condition, we have that either $\lambda =0$, or that $\lambda>0$ and $H(\nu)= q^\top \nu_2 \frac{\mu^2}{2\sigma^2}$. Suppose that $\lambda=0$. Then, from the first two KKT conditions we have that the solution $\nu$ must satisfy $\nu_{md}/{\mathbb 1}^\top \nu_m=p_{md}$, for $m=1,2$, $d=1,...,\Delta$. However, this contradicts with the third condition (recall that we assumed that $H(p) < q^\top p_2 \frac{\mu^2}{2}$). Therefore, the solution to~\eqref{eq-opt-mathcal-V-1} must belong to the set $\left\{ \nu\in \mathcal V:\, H(\nu)= q^\top p_2/\left(q^\top p_1+q^\top p_2 \right) \frac{\mu^2}{2\sigma^2}\right\}$. Since this set intersects with the set $\mathcal V_2$, we conclude that, when $H(p) < q^\top p_2/\left(q^\top p_1+q^\top p_2 \right) \frac{\mu^2}{2\sigma^2}$, then the optimal solution to~\eqref{eq-main-LB-opt-2} is found by optimizing over the smaller set $\mathcal V_2\subseteq \mathcal V$, i.e.,~\eqref{eq-main-LB-opt-2} is equivalent to

\begin{equation}
\begin{array}[+]{lc}
\mathrm{minimize} &  D(\nu_1|| p_1) +  D(\nu_2||p_2) + \frac{\theta_2}{2\sigma^2}\left( \frac{\xi^\star}{\theta_2}  - \mu \right)^2 \\
& \nu \in \mathcal V_2.
\end{array},
\label{eq-reformulation}
\end{equation}
where $\xi^\star (\nu)= \sigma \sqrt{ 2\theta_2 H(\nu)}$. Simple algebraic manipulations reveal that the third term in the objective above is equal to $\left(\sqrt{H(\nu)}  - \sqrt{R(\nu)} \right)^2$. Finally, set $\mathcal V_2$ is precisely the constraint set in~\eqref{eq-main-LB-opt-2}, and hence the claim of the lemma follows.
\end{proof}


\section{Proof of Theorem~\ref{theorem-main}}
\label{sec-proof-main}

\mypar{Sum of conditionals as an expectation} For each $s^t\in\mathcal S_t$, introduce
\begin{equation}\label{eq-def-mathcal-X}
\mathcal X_{s^t} =\frac{1}{t} \sum_{k\in \mathcal T_2} X_k,
\end{equation}
and note that, for each $s^t$ and under $\mathcal H=\mathcal H_0$, $\mathcal X_{s^t}$ is Gaussian random variable of mean zero and variance equal to $\tau_2(s^t)/t^2= \theta_2(s^t)/t$. The idea is to view the sum in~\eqref{eq-def-eta-2} as an expectation of a certain function $g_\mathcal X: \mathcal S_t\mapsto \mathbb R$ defined over the set $\mathcal S_t$ of all possible sequences $s^t$, parameterized by random family (i.e., vector) $\mathcal X=\left\{\mathcal X_{s^t}:\,s^t\in \mathcal X^t\right\}$. More precisely, consider the probability space with the set of outcomes $\mathcal S_t$ and where an element $s^t$ of $\mathcal S_t$ is drawn uniformly at random -- and hence with probability $1/C_t$, where, we recall $C_t= |\mathcal S^t|$; denote the corresponding expectation by $\mathbb E_{U}$. We see that the sum under the logarithm in~\eqref{eq-def-eta-2} equals
\begin{align}\label{eq-expectation-1}
&\!\!\sum_{s^t \in \mathcal S^t} P^\prime(s^t) e^{ t\frac{(\mu_2 -\mu_1)}{\sigma^2}   \mathcal X_{s^t}  - \tau_2 (s^t) \frac{\mu_2^2 - \mu_1^2}{2\sigma^2} } \nonumber\\
&=C_t \sum_{s^t \in \mathcal S^t}  \frac{1}{C_t} g_{\mathcal X}(s^t) = C_t\, \mathbb E_U\left[ g_{\mathcal X}(s^t) \right],
\end{align}
where it is easy to see that $g_{\mathcal X}(s^t)=P^\prime(s^t) e^{ t\frac{ (\mu_2 -\mu_1)}{\sigma^2} \mathcal X_{s^t}  - \tau_2 (s^t) \frac{\mu_2^2 - \mu_1^2}{2\sigma^2} }$, for $s^t\in \mathcal S_t$.

Using further the type $V$ defined in Subsection~\ref{subsec-Sequence-types}, we can express $g_{\mathcal X}(s^t)$ as
\begin{equation}
\label{eq-def-g}
g_{\mathcal X}(s^t) =  e^{  t\frac{ (\mu_2 -\mu_1)}{\sigma^2} \mathcal X_{s^t}  - t \Theta_2 (s^t) \frac{\mu_2^2 - \mu_1^2}{2\sigma^2} + t \sum\limits_{m=1}^2 \sum\limits_{d=1}^\Delta  V_{md} (s^t) \log p_{md}}.
\end{equation}

\mypar{Induced measure} We see that function $g_{\mathcal X}$ depends on $s^t$ only through type $V$ of the sequence and the values of vector $\mathcal X$. More precisely, define $F: \mathbb R^{2\Delta}\times \mathbb R\mapsto \mathbb R$ as
\begin{equation}\label{eq-def-F}
F(\nu,\xi)= \frac{\mu_2 -\mu_1}{\sigma^2}\xi -\theta_2 \frac{\mu_2^2 - \mu_1^2}{2\sigma^2}
+ \sum_{m=1}^2\sum_{d=1}^\Delta  \nu_{md} \log p_{md}.
 \end{equation}
Then, for any $s^t$, $g_{\mathcal X} (s^t)= e^{ F(V(s^t),\mathcal X_{s^t}) }$. For each vector $\mathcal X$, let then $Q_t^{\mathcal X}: \mathcal B{\left(\mathbb R^{2\Delta +1} \right)}\mapsto \mathbb R$ denote the probability measure induced by $\left(V(s^t), \mathcal X (s^t)\right)$, for the assumed uniform measure on $\mathcal S_t$:
\begin{equation}
\label{def-Q-t}
Q_t^{\mathcal X} (B):= \frac{\sum_{ s^t\in \mathcal S^t} 1_{\left\{ (V,\mathcal X) \in B\right\}} (s^t)} {C_t},
\end{equation}
for arbitrary $B\in \mathcal B{\left(\mathbb R^{N^2+N}\right)}$. It is easy to verify that $Q_t^{\mathcal X}$ is indeed a probability measure. Also, we note that, for any fixed $t$ and $\mathcal X$,  $Q_t^{\mathcal X}$ is discrete, supported on the discrete set $\left\{ \left(V(s^t), \mathcal X_{s^t}\right):\,s^t\in \mathcal S_t \right\}$; note that the latter set is a subset of $\mathcal V^t\times \cup_{s^t\in \mathcal S_t} X_{s^t}$ -- the Cartesian product of the set of all feasible types at time $t$ with the set of all elements of vector $\mathcal X$.

Let $\mathbb E_{Q}$ denote the expectation with respect to measure $Q^{\mathcal X}_{t}$. Then, we have $\mathbb E_U\left[  g_{\mathcal X}(S^t)\right]=\mathbb E_Q \left[  e^{t F(V,\mathcal X)}\right]$. Going back to~\eqref{eq-expectation-1}, and using the result of Lemma~\ref{lemma-C-t-growth-rate}, we obtain for $\eta$ given in~\eqref{eq-def-eta-2}:
\begin{equation}\label{eq-expectation-2}
\eta=-\log \psi +\lim_{t\rightarrow +\infty} - \frac{1}{t} \mathbb E_0 \left[ \log \mathbb E_Q \left[  e^{t F(V,\mathcal X)}\right]\right],
\end{equation}
where, we recall $E_0$ is the expectation with respect to probability $\mathbb P_0$ that corresponds to $\mathcal H_0$ state of nature, under which measurements $X_k$ -- and hence vector $\mathcal X$ are generated.

If the measures $Q_{t}^{\mathcal X}$ were sufficiently nice such that they satisfied the LDP and the moderate growth condition~\eqref{eq-tail-condition}, then one could apply Varadhan's lemma to compute the exponential growth of the expectation in the right hand side of~\eqref{eq-expectation-2}. However, the measures $Q_{t}^{\mathcal X}$ are very difficult to analyze due to the correlations in different elements of $\mathcal X$ which couple the indicator functions in~\eqref{def-Q-t}. Hence, we resort to an upper bound of $\eta$ which we derive by replacing vector $\mathcal X$ by vector $\mathcal Z$ with the same statistical properties, but with an added feature that its elements are mutually independent. More precisely, for each $t$ we introduce a family of independent Gaussian variables $\mathcal Z=\left\{\mathcal Z_{s^t}:\,s^t\in \mathcal S^t\right\}$. Further, for each $s^t$ the corresponding element of the family $\mathcal Z_{s^t}$ is Gaussian with the same mean and variance as $\mathcal X_{s^t}$: expected value equal to $0$, and variance equal to $\mathrm {Var} \left[\mathcal Z_{s^t}\right] = \theta_2(s^t)/t$. Denote by $\mathbb P$ and $\mathbb E$, respectively, the probability function and the expectation corresponding to the family $\left\{\left\{ \mathcal Z_{s^t}:\,s^t\in \mathcal S^t \right\}   : t=1,2,\ldots\right\}$. Then, the following result holds; the proof is based on Slepian's lemma~\cite{ZeitouniNotes16}, and it can be found in Appendix.

\begin{lemma}\label{lemma-application-of-Slepian}
For each $t$, there holds,
\begin{equation}\label{eq-expectation-3}
\mathbb E \left[ \log \mathbb E_{Q} \left[  e^{t F(V,\mathcal Z)}\right]\right] \geq \mathbb E_0 \left[ \log \mathbb E_Q \left[  e^{t F(V,\mathcal X)}\right]\right],
\end{equation}
where the inner left hand side expectation is with respect to the measures $Q_t^{\mathcal X}$ and the inner right hand-side expectation is with respect to the measures $Q_t^{\mathcal Z}$.
\end{lemma}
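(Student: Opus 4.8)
The plan is to use the fact that $F(\nu,\xi)$ is \emph{affine} in its second argument, which turns the inner expectation into a log-sum-exp (``free energy'') functional of a Gaussian vector, to which a smoothed Slepian comparison applies. Concretely, the only place $\xi$ enters $F$ in~\eqref{eq-def-F} is through the term $\frac{\mu_2-\mu_1}{\sigma^2}\xi$, while the remaining terms depend on $s^t$ only through the deterministic type $V(s^t)$. Hence, for every $s^t\in\mathcal S^t$ one can write $e^{tF(V(s^t),\mathcal X_{s^t})}=c(s^t)\,e^{a\mathcal X_{s^t}}$ with $a:=t(\mu_2-\mu_1)/\sigma^2>0$ and $c(s^t):=\exp\!\big(t\big[-\theta_2(s^t)\tfrac{\mu_2^2-\mu_1^2}{2\sigma^2}+\sum_{m,d}V_{md}(s^t)\log p_{md}\big]\big)>0$ (positivity uses $p_1,p_2>0$ entrywise), and the identical identity holds with $\mathcal X$ replaced by $\mathcal Z$. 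Therefore $\log\mathbb E_Q[e^{tF(V,\mathcal X)}]=-\log C_t+\Phi(\mathcal X)$ and $\log\mathbb E_Q[e^{tF(V,\mathcal Z)}]=-\log C_t+\Phi(\mathcal Z)$, where $\Phi\colon\mathbb R^{\mathcal S^t}\to\mathbb R$ is the log-sum-exp functional $\Phi(y)=\log\sum_{s^t\in\mathcal S^t}c(s^t)e^{ay_{s^t}}$. After cancelling the common additive constant $-\log C_t$, the lemma is thus equivalent to the Gaussian comparison inequality $\mathbb E_0[\Phi(\mathcal X)]\le\mathbb E[\Phi(\mathcal Z)]$.

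Next I would record the two structural facts that put this in Slepian form. First, $\Phi$ is smooth, convex, and globally Lipschitz: writing $\pi_{s^t}(y):=c(s^t)e^{ay_{s^t}}\big/\sum_{r^t}c(r^t)e^{ay_{r^t}}$ (a probability vector), one has $\partial_{s^t}\Phi=a\,\pi_{s^t}$, hence $\|\nabla\Phi\|\le a$, and the Hessian is $\partial^2_{s^tr^t}\Phi(y)=a^2\big(\delta_{s^tr^t}\pi_{s^t}(y)-\pi_{s^t}(y)\pi_{r^t}(y)\big)$, so that $\partial^2_{s^tr^t}\Phi\le 0$ whenever $s^t\ne r^t$. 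Second, under $\mathcal H_0$ the $X_k$ are i.i.d.\ $\mathcal N(0,\sigma^2)$, whence $\mathrm{Cov}_0(\mathcal X_{s^t},\mathcal X_{r^t})=\frac{\sigma^2}{t^2}\,|\mathcal T_2(s^t)\cap\mathcal T_2(r^t)|$, which is nonnegative and equals $\theta_2(s^t)/t$ on the diagonal; since the family $\mathcal Z$ has independent entries with the same marginal variances $\theta_2(s^t)/t$, we get $\mathrm{Var}(\mathcal Z_{s^t})=\mathrm{Var}_0(\mathcal X_{s^t})$ and $\mathrm{Cov}(\mathcal Z_{s^t},\mathcal Z_{r^t})=0\le\mathrm{Cov}_0(\mathcal X_{s^t},\mathcal X_{r^t})$ for $s^t\ne r^t$ --- i.e.\ $\mathcal X$ is ``more correlated'' than $\mathcal Z$ in the Slepian sense.

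To conclude I would run the standard interpolation argument behind Slepian/Sudakov--Fernique, as in~\cite{ZeitouniNotes16}. Realize $\mathcal X$ and $\mathcal Z$ on a common space and independent, put $\mathcal X(u):=\sqrt{u}\,\mathcal X+\sqrt{1-u}\,\mathcal Z$ for $u\in[0,1]$, and set $\phi(u):=\mathbb E[\Phi(\mathcal X(u))]$. Differentiating under the expectation (legitimate since $\Phi$ is Lipschitz and the Gaussians have all moments) and applying Gaussian integration by parts (Stein's identity) to the jointly Gaussian pair $(\mathcal X(u),\tfrac{d}{du}\mathcal X(u))$ gives $\phi'(u)=\tfrac12\sum_{s^t\ne r^t}\mathbb E\big[\partial^2_{s^tr^t}\Phi(\mathcal X(u))\big]\big(\mathrm{Cov}_0(\mathcal X_{s^t},\mathcal X_{r^t})-\mathrm{Cov}(\mathcal Z_{s^t},\mathcal Z_{r^t})\big)$, the diagonal contributions dropping out because the marginal variances agree. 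Every summand is a nonpositive Hessian term times a nonnegative covariance difference, so $\phi'(u)\le 0$ on $[0,1]$, and therefore $\mathbb E_0[\Phi(\mathcal X)]=\phi(1)\le\phi(0)=\mathbb E[\Phi(\mathcal Z)]$, which is the desired inequality; adding back $-\log C_t$ on both sides yields~\eqref{eq-expectation-3}.

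I expect the only real subtlety to be keeping track of the \emph{direction} of the comparison: log-sum-exp is convex but --- unlike a quadratic form --- has \emph{nonpositive} off-diagonal Hessian entries, so increasing correlations \emph{decreases} its expectation; it is exactly this sign, together with the nonnegativity of $\mathrm{Cov}_0(\mathcal X_{s^t},\mathcal X_{r^t})$ (which hinges on $\mu_2-\mu_1>0$, so that $a>0$ and the $c(s^t)$ are positive), that makes the inequality point the way needed. The differentiation-under-the-integral and Gaussian integration-by-parts steps are routine given the uniform bound $\|\nabla\Phi\|\le a$, and may alternatively be invoked directly from the Gaussian comparison lemma of~\cite{ZeitouniNotes16}.
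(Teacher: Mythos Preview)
Your proposal is correct and follows essentially the same approach as the paper: both reduce the claim to a Slepian-type Gaussian comparison for the log-sum-exp functional, using that $\mathcal X$ and $\mathcal Z$ have matching marginal variances while $\mathcal X$ has nonnegative off-diagonal covariances. The only difference is cosmetic---the paper invokes Slepian's lemma as a black box applied to $\phi_t=-\Phi$ (whose mixed second partials are then nonnegative), whereas you unfold the standard Gaussian interpolation proof of that lemma directly on $\Phi$.
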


The next result asserts that $Q_t^{\mathcal Z}$ satisfies the LDP with probability one and computes the corresponding rate function. To simplify the notation, denote $q= (1, 2, \ldots, \Delta)^\top$.
\begin{theorem}
\label{theorem-Q-t-satisfies-LDP}
For every measurable set $G$, the sequence of measures $Q_t^{\mathcal Z}$, $t=1,2,...,$ with probability one satisfies the LDP upper bound~\eqref{def-LDP-upper} and the LDP lower bound~\eqref{def-LDP-lower}, with the same rate function $I: \mathbb R^{2 \Delta+1}\mapsto \mathbb R$, equal for all sets $G$, which for $\nu \in \mathcal V$ for which $H(\nu_1)+ H(\nu_2) \geq J_{\nu}(\xi)$ is given by
\begin{equation} \label{eq-rate-function}
I(\nu,\xi)=  \log \psi - H(\nu_1) - H (\nu_2) + J_{\nu}(\xi),
\end{equation}
and equals $+\infty$ otherwise, and where, for any $\nu \in \mathcal V$, function $J_{\nu}: \mathbb  R \mapsto \mathbb R$ is defined as $J_{\nu}(\xi):=  \frac{1}{q^\top \nu_2}  \frac{\xi^2}{2}$.
\end{theorem}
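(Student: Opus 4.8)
\mypar{Proof proposal}
The plan is to establish the LDP in two stages: first, on a single event of probability one, a sharp estimate for the $Q_t^{\mathcal Z}$-measure of every ``box'' drawn from a fixed countable family; and second, the upper and lower LDP bounds for arbitrary closed/open sets by the usual covering and exponential-tightness arguments on that same event. Fix a box $B=\bar N\times[\xi-\delta,\xi+\delta]$, with $\bar N\subseteq\mathbb R^{2\Delta}$ a small closed box and $\xi,\delta$ rational. Since $V(s^t)$ is a deterministic function of $s^t$ taking values in the set $\mathcal V_t$, whose cardinality is polynomial in $t$, and since conditionally the variables $\{\mathcal Z_{s^t}\}_{s^t\in\mathcal S^t}$ are independent, $C_t\,Q_t^{\mathcal Z}(B)=\sum_{s^t:\,V(s^t)\in\bar N}\mathbf 1\{\mathcal Z_{s^t}\in[\xi-\delta,\xi+\delta]\}$ is a sum of independent Bernoulli variables. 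The number of summands, i.e.\ the number of sequences with type in $\bar N$, grows like $e^{t\,\Gamma(\bar N)}$ up to subexponential factors, where $\Gamma(\bar N)=\sup_{\nu\in\bar N\cap\mathcal V}(H(\nu_1)+H(\nu_2))$, by Lemmas~\ref{lemma-C-t-growth-rate}--\ref{lemma-growth-rate-C-t-nu}; the Bernoulli attached to $s^t$ has mean $p_t(s^t)=\mathbb P\big(\mathcal N(0,q^\top V_2(s^t)/t)\in[\xi-\delta,\xi+\delta]\big)$, and a Gaussian density estimate gives $\frac1t\log p_t(s^t)=-\inf_{|\xi'-\xi|\le\delta}(\xi')^2/\big(2q^\top V_2(s^t)\big)+o(1)$, uniformly over $s^t$ (note $q^\top\nu_2$ is bounded away from $0$ on $\mathcal V$, so $J_\nu$ is well behaved, and the case $0\in[\xi-\delta,\xi+\delta]$ is automatically covered).

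Consequently the mean of $C_t\,Q_t^{\mathcal Z}(B)$ grows like $e^{t\,\Phi(B)}$, with $\Phi(B)=\sup_{\nu\in\bar N\cap\mathcal V}\big(H(\nu_1)+H(\nu_2)-\inf_{|\xi'-\xi|\le\delta}J_\nu(\xi')\big)$. Two regimes arise. If $\Phi(B)>0$, the mean is exponentially large, and a Chernoff/Bernstein bound for sums of independent Bernoullis together with Borel--Cantelli yields, almost surely, $\frac1t\log Q_t^{\mathcal Z}(B)\to\Phi(B)-\log\psi$. If $\Phi(B)<0$, the mean is exponentially small, so Markov's inequality and Borel--Cantelli give $Q_t^{\mathcal Z}(B)=0$ eventually, almost surely. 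Because the chosen family of boxes is countable and does not depend on $t$, all these statements hold simultaneously on one event of probability one. Letting the box shrink ($\bar N\downarrow\{\nu\}$, $\delta\downarrow0$) and using continuity of $H$ and of $\nu\mapsto q^\top\nu_2$, one identifies the local exponential decay rate of $Q_t^{\mathcal Z}$ near $(\nu,\xi)$ as exactly $\log\psi-H(\nu_1)-H(\nu_2)+J_\nu(\xi)$ whenever $H(\nu_1)+H(\nu_2)\ge J_\nu(\xi)$, and as $+\infty$ otherwise; this is the rate function $I$ of~\eqref{eq-rate-function}.

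The full LDP, on this same probability-one event, is then routine. For exponential tightness: each $\mathcal Z_{s^t}$ has variance $q^\top V_2(s^t)/t\le 1/t$, so $\mathbb E\,\#\{s^t:|\mathcal Z_{s^t}|>M\}\le 2C_t e^{-tM^2/2}$, which for $M^2>2\log\psi$ decays exponentially; hence almost surely $Q_t^{\mathcal Z}(\{|\xi|>M\})=0$ eventually, and it suffices to work on the compact set $\mathcal V\times[-M,M]$. The lower bound for an open set $E$ follows by choosing $(\nu^\star,\xi^\star)\in E$ with $I(\nu^\star,\xi^\star)\le\inf_EI+\epsilon$ (trivial if $\inf_EI=+\infty$), moving it to a nearby point where $H(\nu_1)+H(\nu_2)>J_\nu(\xi)$ holds strictly (possible by continuity, keeping $I$ within $2\epsilon$ of $\inf_EI$), and lower-bounding $Q_t^{\mathcal Z}(E)$ by the mass of a small rational box inside $E$ around that point; this also uses that every $\nu\in\mathcal V$ is a limit of types realized at times $t$. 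The upper bound for a closed set $F$ follows by covering the compact set $F\cap(\mathcal V\times[-M,M])$ by finitely many small rational boxes, summing the finitely many box estimates, and adding the (eventually vanishing) tail term $Q_t^{\mathcal Z}(\{|\xi|>M\})$.

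I expect the main obstacle to be securing the uniformity of the ``probability one'': one must fix the countable family of boxes and calibrate the Chernoff/Bernstein concentration bounds for these sums of non-identically-distributed Bernoulli variables so that a single null set serves every closed and every open set $G$ at once, which requires uniform control over all types present at a given time $t$ and careful treatment of the regime where $\xi$ is close to $0$ (where $p_t(s^t)$ does not decay exponentially while the Bernoulli-sum mean is itself exponentially large). A secondary delicate point is the passage from the discrete types $\mathcal V_t$ to the continuum $\mathcal V$ used in the lower bound, which relies on continuity of the rate function on its effective domain and on that domain being the closure of its relative interior.
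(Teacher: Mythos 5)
Your proposal is correct and follows essentially the same route as the paper: decompose $Q_t^{\mathcal Z}$ by sequence type, use the counting Lemmas~\ref{lemma-C-t-growth-rate} and~\ref{lemma-growth-rate-C-t-nu} together with Gaussian estimates for the per-sequence success probability, apply Chernoff/Markov-type concentration plus Borel--Cantelli in the two regimes (exponentially large vs.\ exponentially small expected count), and finish with the standard box-covering argument for the upper bound and a perturbation into the strict-inequality region $H(\nu)>J_\nu(\xi)$ for the lower bound. The only cosmetic differences are that the paper conditions on an exact type $\nu\in\mathcal V_t$ (making each $Q^{\mathcal Z}_{t,\nu}(D)$ a genuine binomial) before summing over the polynomially many types, and that it allows the null set to depend on the set $G$ as Definition~\ref{def-LDP-wp1} permits, so your countable-rational-box device for a single universal null set is sufficient but not required.
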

The proof of Theorem~\ref{theorem-Q-t-satisfies-LDP} is given in Appendix.

Having the large deviations principle for the sequence $Q_t^{\mathcal Z}$, we can invoke Varadhan's lemma to compute the limit of the scaled values in~\eqref{eq-expectation-2}. Applying Lemma~\ref{lemma-Varadhans-wp1} (the details of the moderate growth condition~\eqref{eq-tail-condition} for $Q_t^{\mathcal Z}$ are given in Appendix, we obtain that, with probability one,
\begin{equation}\label{eq-as-limit-via-Varadhan}
\lim_{t\rightarrow +\infty} \frac{1}{t} \log \mathbb E_Q \left[  e^{t F(V,\mathcal Z)}\right]
= \sup_{(\nu,\xi)} F(\nu,\xi) - I(\nu,\xi).
\end{equation}
It can be shown that the sequence under the preceding limit is uniformly integrable; the proof of this result is very similar to the proof of a similar result in the context of hidden Markov models, given in Appendix E of~\cite{Agaskar15}, hence we omit the proof here. Thus, the limit of the sequence values and the limit of their expected values coincide, i.e.,
\begin{equation}\label{eq-UI-implies-limits-are-equal}
\lim_{t\rightarrow +\infty} \frac{1}{t} \mathbb E\left[\log \mathbb E_Q \left[  e^{t F(V,\mathcal Z)}\right]\right] =  \lim_{t\rightarrow +\infty} \frac{1}{t} \log \mathbb E_Q \left[  e^{t F(V,\mathcal Z)}\right].
\end{equation}
Combining with~\eqref{eq-expectation-2},~\eqref{eq-expectation-3}, and~\eqref{eq-as-limit-via-Varadhan}, we finally obtain
\begin{equation}\label{eq-Varadhan-final}
\eta\, \geq\, - \log \psi - \sup_{(\nu,\xi)\in \mathbb R^{2\Delta+1}} F(\nu,\xi) - I(\nu,\xi).
\end{equation}
It remains to show that the value of the above supremum equals the value of the optimization problem~\eqref{eq-main-LB-opt}. Using the definition of $I$, we have that $I(\nu,\xi)=+\infty$ for any $(\nu,\xi)$ such that $H(\nu) < J_{\theta}(\xi)$ or such that $\nu \notin \mathcal V$. Since the supremum is surely not achieved at these points,  set $\mathbb R^{2\Delta+1}$ in~\eqref{eq-Varadhan-final} can be replaced by $\left\{(\nu,\xi)\in \mathcal V\times \mathbb R: H(\nu) < J_{\theta}(\xi) \right\}$. Using the definitions of $F$ and $I$, we have
\begin{align}
&\!\!F(\nu,\xi) - I(\nu,\xi)=  \sum_{m=1}^2\sum_{d=1}^\Delta \nu_{md} \log p_{md} - \nu_{md} \log \nu_{md}\nonumber \\
&\;\;\;+  \frac{\mu_2 -\mu_1}{\sigma^2}\xi -\theta_2 \frac{\mu_2^2 - \mu_1^2}{2\sigma^2} - \frac{1}{\theta_2} \frac{\xi^2}{2 \sigma^2} - \log \psi.
\end{align}
Cancelling out the term $\log \psi$ in the preceding equation with the one in~\eqref{eq-Varadhan-final}, and recognizing that $\sum_{d=1}^\Delta \nu_{md} \log p_{md} - \nu_{md} \log \nu_{md} = - D (\nu_m || p_m)$, we see that problem~\eqref{eq-main-LB-opt} is equivalent to the one in~\eqref{eq-Varadhan-final}. This completes the proof of Theorem~\ref{theorem-main}.

\section{Numerical results}
\label{sec-NumResults}

In this section we report our numerical results to demonstrate tightness of the developed performance bounds. We also illustrate our methodology on the problem of detecting one single run of a dish-washer, where we use real-world data to estimate the state values for a dish-washer.

In the first set of simulations, we consider the setup in which $\mu_1>0$ and we compare the error exponents obtained via simulations to the guaranteed lower bound~\eqref{eq-guaranteed-exponent}. We simulate a two-state signal, $X^t$, as an i.i.d. Gaussian random variable with standard deviation $\sigma$ and mean $\mu_1=2$ and $\mu_2=5$ in states $1$ and $2$, respectively.
The duration of each state is random uniform distributed between $1$ and $\Delta=3$.
The observation interval is $t\in [1,T]$, where $T=200$. In the absence of the signal, the data is distributed according to the Gaussian distribution with mean $\mu_0=0$ and the same standard deviation $\sigma$.

To estimate the receiver operating characteristics (ROC) curves, we use $J=100000$ Monte Carlo simulation runs for each hypothesis. For each hypothesis and each simulation run, we compute the values $L_t(X^t)$, for $t=1,2,3,...,T$, using the linear recursion from Lemma~\ref{lemma-recursion-for-LRT}. Then, for each $t$, to obtain the corresponding ROC curve, we first find the minimal and maximum value $\underline L_{t,m}$ and $\overline L_{t,m}$, respectively, across $J$ runs for each hypothesis $m$, and change the detection threshold $\gamma$ with a small step size from $\underline L_{t,1}-\beta$ to $\overline L_{t,0}+\beta$, where $\beta$ is a carefully chosen bound. For each $t$ and $\gamma$ the probability of false alarm $P_{\mathrm{fa}}$ or false positive, i.e., wrongly determining that the signal is present, is calculated as
\begin{equation*}
P_{\mathrm{fa},t}^{\gamma}=\frac{\sum_{j=1}^{J}\mathbbm{1}(L_t(X^t_{(j)})\geq \gamma)}{J}
\label{eq:pfa}
\end{equation*}
where $\mathbbm{1}$ is an indicator function that returns $1$ if the corresponding condition is true and $0$ otherwise, and $X^t_{(j)}$ is the $j$-th realisation of the sequence $X^t$ under $\mathcal H_0$.
The probability of a miss $P_{\mathrm{miss}}$ or false negative, that is, declaring that the signal is not present, though it is, is calculated as:
\begin{equation*}
P_{\mathrm{miss},t}^{\gamma}=\frac{\sum_{j=1}^{J}\mathbbm{1}(L_t(X^t_{(j)})<\gamma)}{J}.
\label{eq:pmiss}
\end{equation*}
We set the bound $\alpha =0.01$ and find $P_{\mathrm{miss},t}^{\alpha}=P_{\mathrm{miss},t}^{\gamma^{\star}}$ where $\gamma^{\star}$ resulted in the highest probability of a miss that satisfied $P_{\mathrm{fa},t}^{\gamma^{\star}}\leq\alpha$.

To investigate the dependence of the slope on the SNR, we fix signal levels $\mu_1$ and $\mu_2$, and pmf's $p_1$ and $p_2$ as described above, and we vary the standard deviation of noise $\sigma$. For each different value of $\sigma$, we compute the values of $P_{\mathrm{miss},t}^{\alpha}$, for $t=1,...,T$, and apply linear regression on the sequence of values $-\log P_{\mathrm{miss},t}^{\alpha}$ for all observation times $t$ for which the probability of a miss was non-zero. This gives an estimate for the  error exponent (i.e., the slope) for the probability of a miss under a fixed value of $\sigma$, which we denote by $S_{\sigma}$.

%

Figure~\ref{fig:T2F2} plots the probability of a miss (in the logarithmic scale) vs. the number of samples $t$ for five different values of $\sigma$, namely $\sigma=10,15,20,25,30$. We observe that for large observation intervals $t$ the curves are close to linear, as predicted by the theory, see Lemma~\ref{lemma-FK-limit}. Further, as $\sigma$ increases the magnitude of the slope decreases becoming very close to $0$ for large values of $\sigma$.  Figure~\ref{fig:T2F3} compares the error exponent $S_{\sigma}$ obtained from simulations with the theoretical bound calculated using~\eqref{eq-guaranteed-exponent}. The theoretical curve is plotted in red dashed line, while the numerical curve $S_{\sigma}$ is plotted in blue full line. For comparison, we also plot the curve $\mu_2^2/(2 \sigma^2)$, which corresponds to the best possible error exponent for the studied setup, obtained when the signal throughout the whole observation interval stays at the higher signal value $\mu_2>\mu_1$; this curve is plotted in green dotted line. It can be seen from the figure that the numerical error exponent curve is at all points sandwiched between the lower bound~\eqref{eq-guaranteed-exponent} curve $\mu_1^2/(2 \sigma^2)$ and the curve $\mu_2^2/(2 \sigma^2)$. Also, the difference between the numerical error exponent and the lower bound~\eqref{eq-guaranteed-exponent} decreases as $\sigma$ increases, where the differences become negligible for large $\sigma$, showing that our bound is tight for large values of $\sigma$.

\begin{figure}[htp]
	\centering
	\includegraphics[width=1\linewidth]{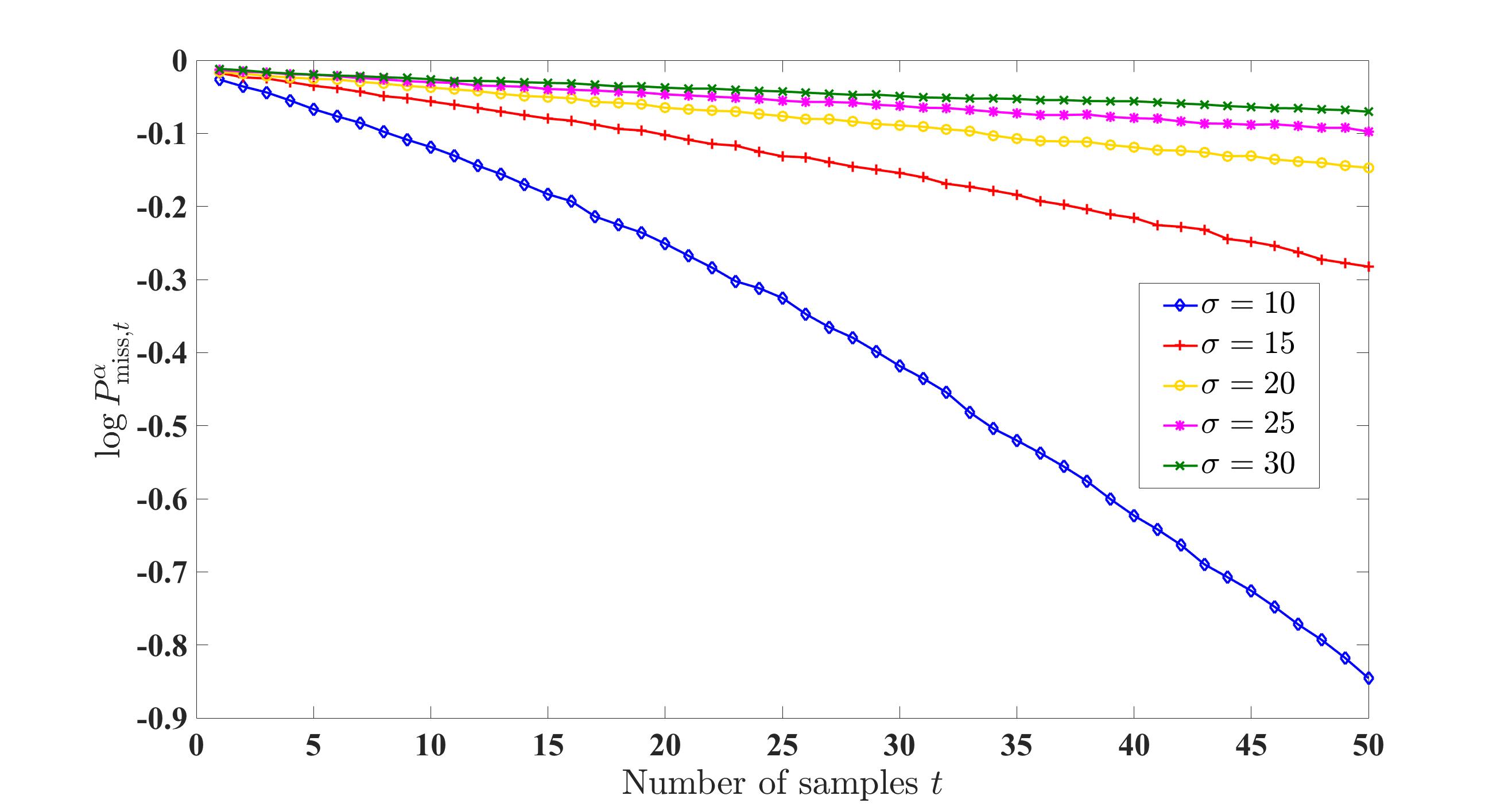}
	\caption{Simulation setup: $\Delta=3$, $p_1,p_2\sim\mathcal {U}([1,\Delta])$, $\mu_1=2$, $\mu_2=5$, $\alpha=0.01$. Evolution of probability of a miss, in the logarithmic scale, for $\sigma=5,10,15,20,25$. }
	\label{fig:T2F2}
\end{figure}

\begin{figure}[htp]
	\centering
	\includegraphics[width=1\linewidth]{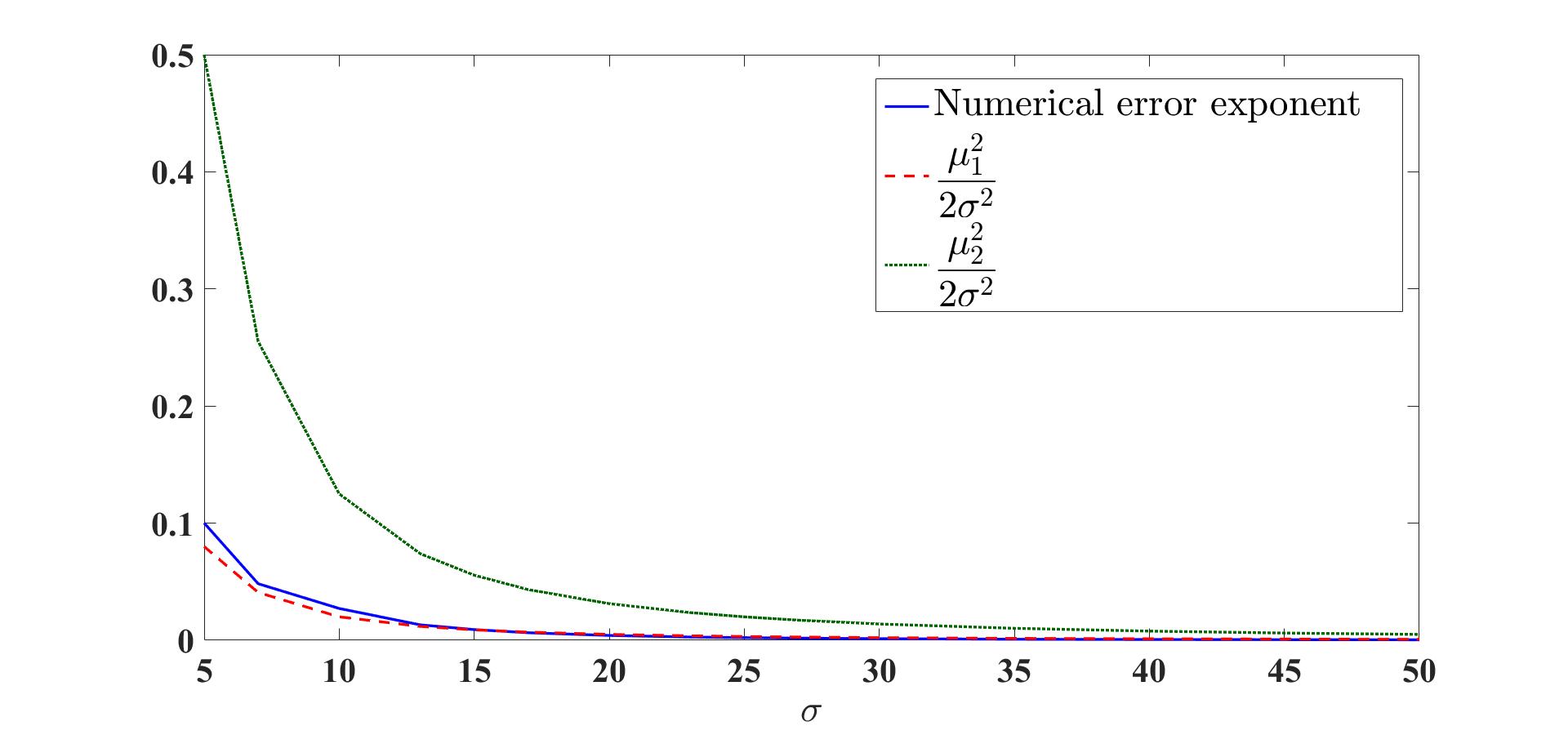}
	\caption{Simulation setup: $\Delta=3$, $p_1,p_2\sim\mathcal {U}([1,\Delta])$, $\mu_1=2$, $\mu_2=5$, $\alpha=0.01$. $\sigma$ varies from $5$ to $50$. Blue full line plots the numerical error exponent estimated from slope of $\log{P_{\mathrm{miss},t}^\alpha}$ vs. $\sigma$.  Red dashed line plots the theoretical bound $\mu_1^2/(2 \sigma^2)$ in~\eqref{eq-guaranteed-exponent}. Green dotted line plots function $\mu_2^2/(2 \sigma^2)$. }
	\label{fig:T2F3}
\end{figure}
%


In the second set of experiments, we consider the setup where the signal level in state $1$ is zero, $\mu_1=0$, and $\mu_2=\mu=1$; similarly as in the previous setup, we consider uniform distributions $p_1,p_2\sim\mathcal {U}([1,\Delta])$, with $\Delta=2$. We compare the numerical error exponent with the one obtained as a solution to optimization problem~\eqref{eq-convex-reformulation}. To solve~\eqref{eq-convex-reformulation}, we apply random search over $10^6$ different vectors from set $\mathcal V$, and pick the point which gives the smallest value of the objective (and satisfies the constraint in~\eqref{eq-convex-reformulation}).

 Figure~\ref{fig:T3F2} plots probability of a miss vs. number of samples $t$ for $5$ different values of $\sigma$, in the interval from $0.2$ to $0.6$. Again, we can observe that linearity emerges with the increase of $\sigma$. Figure~\ref{fig:T3F3}, top, compares error exponent estimated from the slope in Figure~\ref{fig:T3F2} with the theoretical bound calculated from solving~\eqref{eq-convex-reformulation}. We can see from the plot that the two lines are very close to each other. In fact, we have that the numerical values are slightly below the lower bound values. This seemingly contradictory effect is a consequence of the following. As the probability of a miss curves have a concave shape in this simulation setup (which can be observed from Figure~\ref{fig:T3F2}) their slopes continuously increase with the increase of the observation interval. As a consequence, the linear fitting performed on the whole observation interval is underestimating the slope, as it is trying to fit also the region of values where concavity is more prominent. To further investigate this effect, we performed linear fitting of probability of a miss curves only for a region of higher values of $t$, where emergence of linearity is already evident. In particular, for each different value of $\sigma$, we apply linear fitting for $[4/5\,t_{\max}, t_{\max}]$, where $t_{\max}$ is the maximal $t$ for which the probability of a miss is non-zero, and we plot the results in Figure~\ref{fig:T3F3}, bottom. It can be seen from the figure that the numerical curve got closer to the theoretical curve, indicating that the bound in~\eqref{eq-convex-reformulation} is very tight or even exact. Finally, it can be seen from Figure~\ref{fig:T3F3} (top and bottom) that the value of $\sigma$ for which the error exponent is equal to zero matches the threshold predicted by the theory, $\sigma^\star=\mu/(2 \sqrt{2\log\Delta})=0.4247$, obtained from detectability condition~\eqref{eq-detectability-condition}.

\begin{figure}[htp]
	\centering
	\includegraphics[width=1\linewidth]{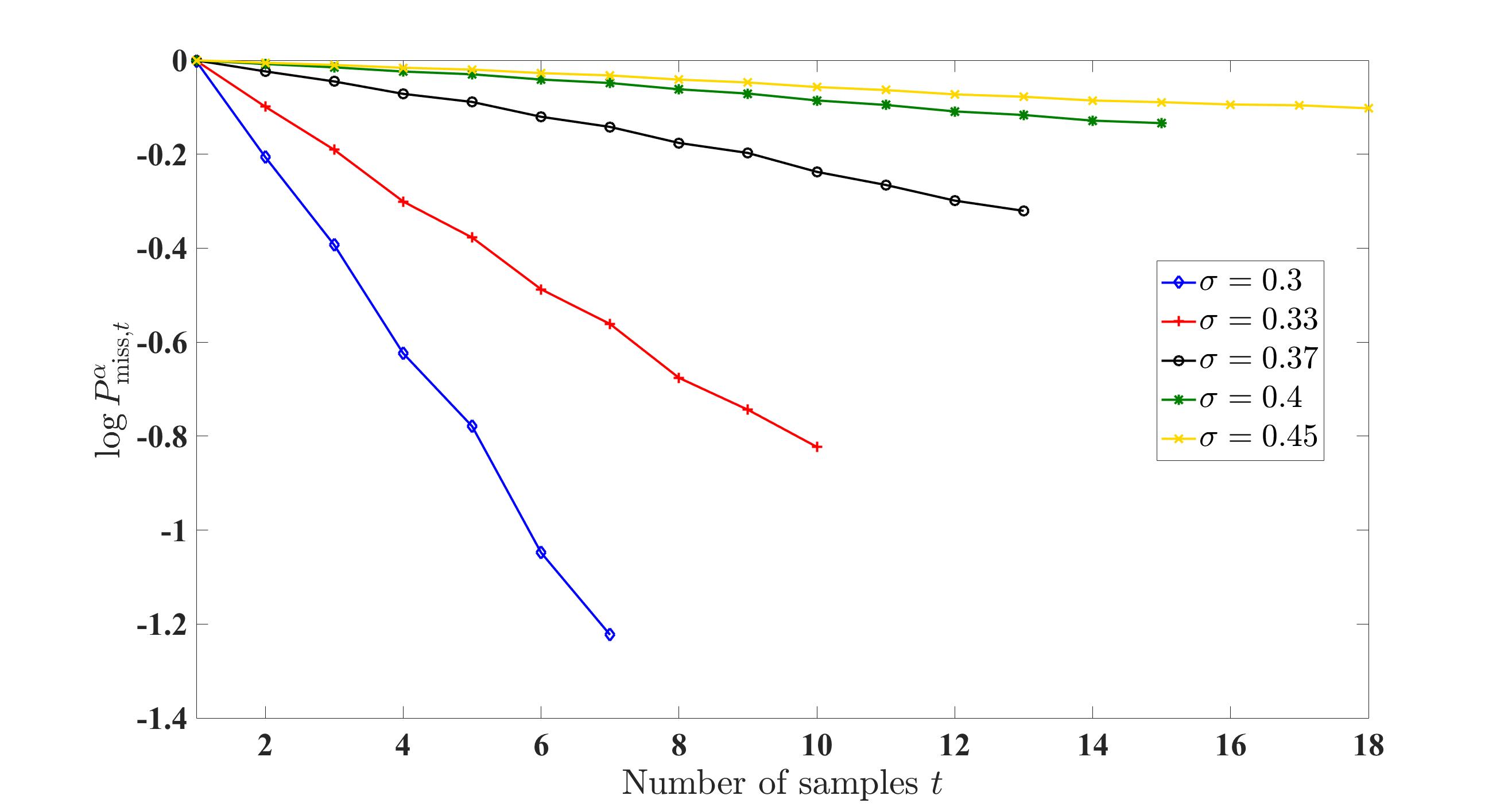}
	\caption{ Simulation setup: $\Delta=2$, $p_1,p_2\sim\mathcal {U}([1,\Delta])$, $\mu_1=0$, $\mu_2=1$, $\alpha=0.01$. Plots of probability of a miss in the logarithmic scale for $\sigma=0.3, 0.33,0.37,0.4, 0.45$}
	\label{fig:T3F2}
\end{figure}

\begin{figure}[htp]
 \centering
  \includegraphics [width=1\linewidth]{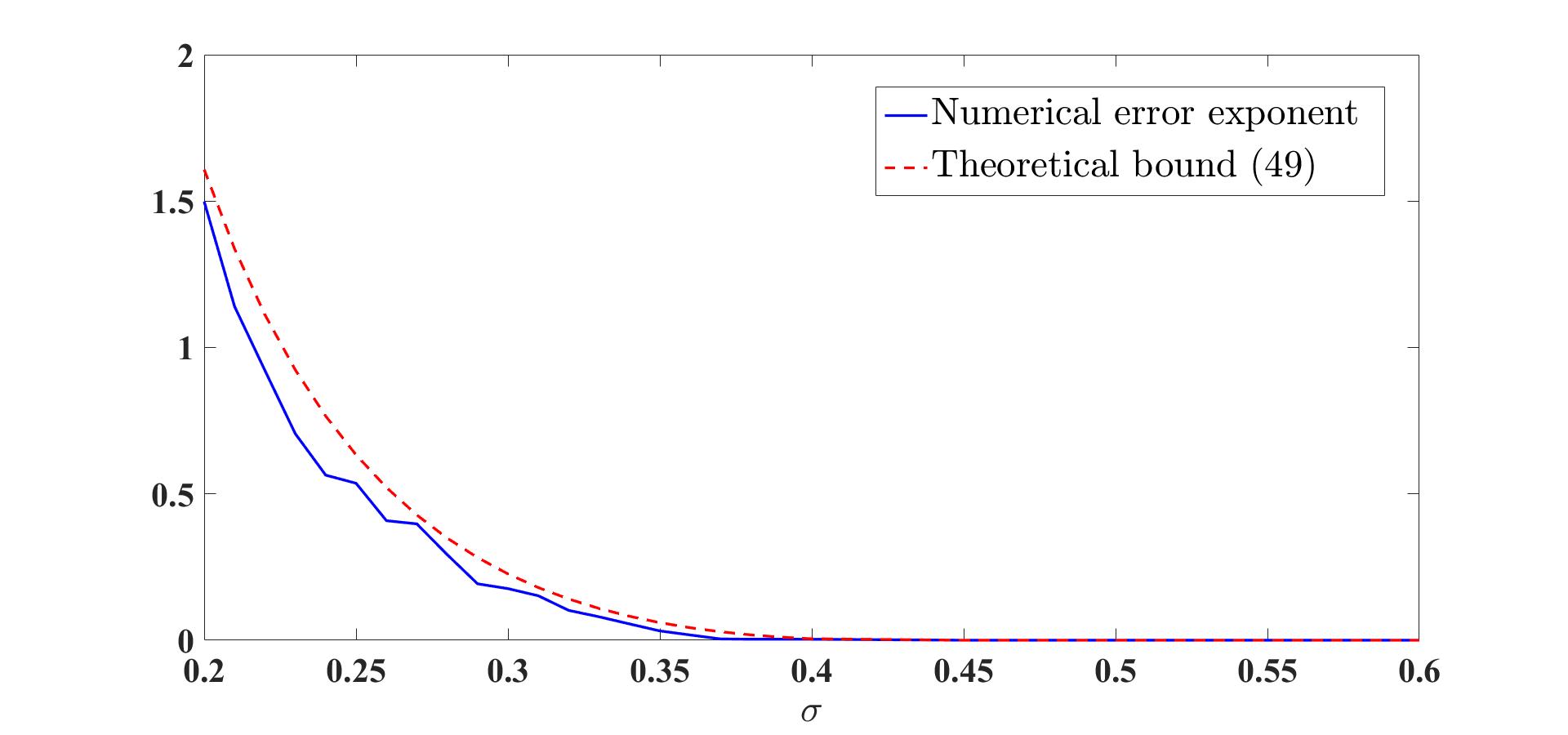}
  \includegraphics[width=1\linewidth]{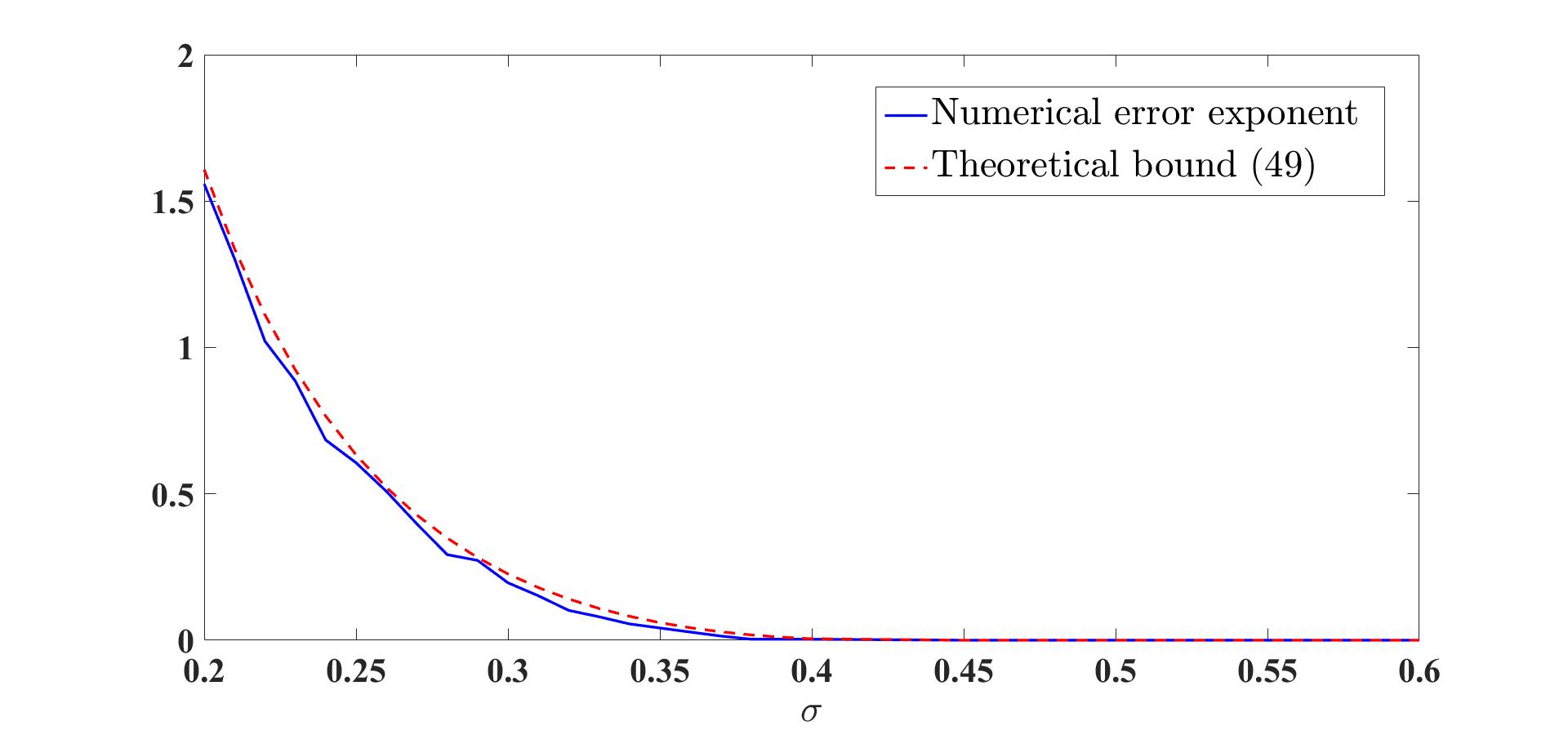}
 \caption{ Simulation setup: $\Delta=2$, $p_1,p_2\sim\mathcal {U}([1,\Delta])$, $\mu_1=0$, $\mu_2=1$, $\alpha=0.01$. $\sigma$ varies from 0.2 to 0.6. Blue full line plots the numerical error exponent estimated from slope of $\log{P_{\mathrm{miss},t}^\alpha}$ vs. $\sigma$ by linear fitting. \textbf{Top}: linear fitting performed on the whole interval $[1,t_{\max}]$; \textbf{bottom}: linear fitting performed on $[4/5\,t_{\max},t_{\max}]$.  Red dashed line plots the theoretical bound calculated by solving~\eqref{eq-convex-reformulation}). }
 \label{fig:T3F3}
\end{figure}


In the final set of simulations, we demonstrate applicability of the results to estimate the number of samples needed to detect an appliance run from the smart meter data. To do that, we use measurements of a dishwasher from the REFIT dataset \cite{Murray:2017}. REFIT dataset contains $2$ years of appliance measurements from $20$ houses. The monitored dishwasher is a two-state appliance, with mean power values of $\mu_1=2200W$, $\mu_2=66W$ and standard deviation of $\sigma_1=36.6W$ and $\sigma_2=18.2W$, in states $1$ and $2$, respectively. The mean value of background noise which is also base-load in that house is $\mu_0=90$ and with standard deviation  $\sigma_0=16.6W$. We down sampled dishwasher data with $\Delta=10$ to simulate the influence of noise including base-load and unknown appliances on detecting the appliance. 
 The simulation results are shown in Figure~\ref{fig:subfig} as plots of $P_{\mathrm{miss},t}^{\alpha}$ vs. $t$ for several values of $\sigma$ between the measured $\sigma_1$ and $\sigma_2$.


\begin{figure}
	\centering
	
	\includegraphics[width=1\linewidth]{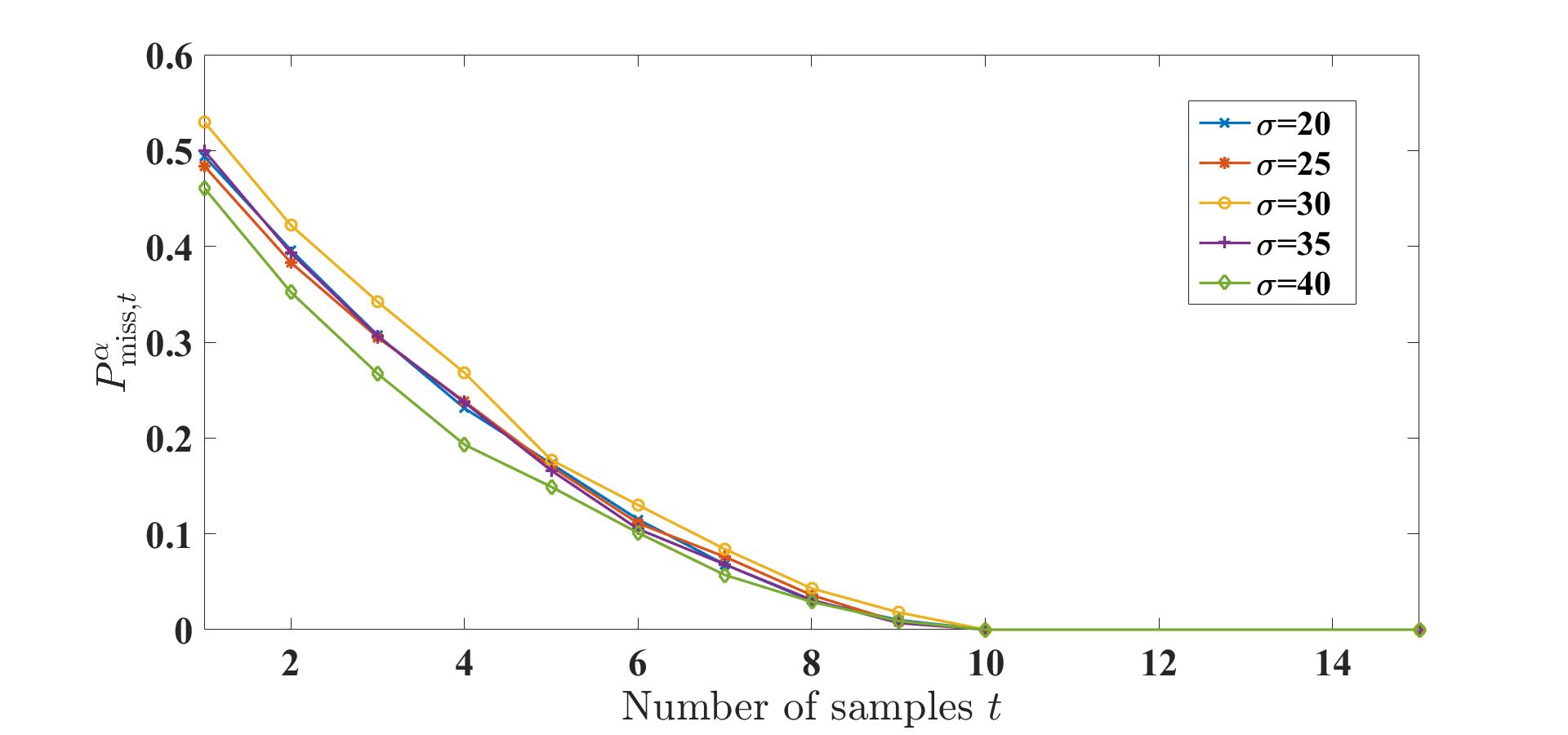}
	\caption{Simulation setup: $\Delta=10$, $p_1,p_2\sim\mathcal {U}([1,\Delta])$, $\mu_1=66$, $\mu_2=2200$, $\sigma=90$, $\alpha=0.01$. Plots of probability of a miss for 5 different $\sigma$ values. }
	
		\label{fig:subfig}
\end{figure}

As expected, the probability of a miss decreases with the increase of number of samples $t$. Furthermore, the number of samples needed for successful detection is about $10$. 

\section{Conclusion}
\label{sec-Conclusion}
We studied the problem of detecting a multi-state signal hidden in noise, where the durations of state occurrences vary over time in a nondeterministic manner. We modelled such a process via a random duration model that, for each state, assigns a (possibly distinct) probability mass function to the duration of each occurrence of that state. Assuming Gaussian noise and a process with two possible states, we derived optimal likelihood ratio test and showed that it has a form of a linear recursion of dimension equal to the sum of the duration spreads of the two states. Using this result, we showed that the Neyman-Pearson error exponent is equal to the top Lyapunov exponent for the linear recursion, the exact computation of which is a well-known hard problem. Using the theory of large deviations, we provided a lower bound on the error exponent. We demonstrated the tightness of the bound with numerical results. Finally, we illustrated the developed methodology in the context of NILM, applying it on the problem of detecting multi-state appliances from the aggregate power consumption signal.

\section*{Appendix}

\mypar{Proof of Lemma~\ref{lemma-sequence-s-t-probability}} Fix an arbitrary sequence $s^t$. Let $n_1=N_1(s^t)$, $n_2=N_2(s^t)$ and $n=N(s^t)$ denote, respectively, the number of state-$1$ phases, state-$2$ phases, and the total number of phases in $s^t$. Let the durations  of state-$1$ phases (by the order of appearance) in $s^t$ be $d_{11},d_{12},...,d_{1n_1}$, and the durations of state-$2$ phases be $d_{21},d_{22},...,d_{2n}$. Recall that ${o}(s^t)$ denotes the duration of the last phase in $s^t$.  Then, if $s_t=1$, we have
\begin{align}\label{eq-computing-prob-of-sequence}
&\!P(s^t)=\mathbb P_1(S^t=s^t) \nonumber\\
  & = \mathbb P_1\left(D_{11}=d_{11},D_{21}=d_{21},\ldots,D_{1n_1}\geq d_{1n_1} \right)\nonumber\\
  & = \prod_{l=1}^{n_1-1} \mathbb P_1\left(D_{1l}=d_{1l}\right)  \mathbb P_1 \left(D_{1n_1} \geq d_{1n_1} \right)
\prod_{l=1}^{n_2} \mathbb P_1\left(D_{2l}=d_{2l}\right),
\end{align}
where the second equality follows from the fact that the last phase is state $1$ and that with the knowledge of only up to time $t$ it is not certain whether this last phase lasts longer than $d_{1n_1}$, i.e., stretches over time $t$; the last equality follows from the fact that $D_{mn}$'s are i.i.d. for each $m$ and mutually independent for different $m$.
Adding the missing factor in the product $\mathbb P_1 \left(D_{1n_1}=d_{1n_1}\right)$, and dividing the middle term in~\eqref{eq-computing-prob-of-sequence} by the same factor, yields
\begin{align}\label{eq-computing-prob-of-sequence-2}
P(s^t) = \frac{\mathbb P_1\left(D_{1n_1} \geq d_{1n_1}\right)}{\mathbb P_1\left(D_{1n_1} = d_{1n_1}\right)} \prod_{m=1}^2\prod_{l=1}^{n_m} \mathbb P_1\left(D_{ml}=d_{ml}\right).
\end{align}
Similar formula can be obtained for the case when $s_t=2$. Note now that, for every $d=1,...,\Delta_m$, $\mathbb P_1\left(D_{ m n_m} \geq d\right) = p_{m d} + p_{m d+1}+\ldots +p_{m \Delta}=: p_{md}^{+}$, for $m=1,2$. Grouping, for each state, the product terms with equal durations, and denoting $n_{1d}=N_{1d} (s^t)$, for $d=1,...,\Delta_1$, and $n_{2d}=N_{2d} (s^t)$, for $d=1,...,\Delta_2$,  we obtain that
\begin{equation}
P(s^t) =\frac{p_{m,n_m}^{+} }{p_{m n_m}} \prod_{d=1}^{\Delta_1} p_{1d}^{n_{1d}}  \prod_{d=1}^{\Delta_2} p_{2d}^{n_{2d}}.
\end{equation}
This completes the proof of the lemma.

\mypar{Proof of Lemma~\ref{lemma-recursion-for-LRT}} Consider~\eqref{eq-LLR-2} and note that $P(s^t)$ can be expressed as $P(s^t)= p_{s_t o(s^t)}^{+}\,P^\prime(s^{t-o(s^t)})$, where, we recall ${o}: \mathcal S^t\mapsto \mathbb Z$  is an integer-valued function which returns the duration of the last phase in a sequence $s^t$. We break the sum in~\eqref{eq-LLR-2} as follows,
\begin{align}
\label{eq-LLR-3}
L_t(X^t) = \sum_{m=1}^2 \sum_{d=1}^\Delta  p_{md}^{+} \sum_{\substack{s^t \in \mathcal S^t: s_t=m,\\ {o}(s^t)=d}}  P^\prime(s^{t-d})   e^{\sum_{k=1}^t f_{s_k}(X_k) },
\end{align}
 To prove the lemma, it suffices to show that, for each $m,d,t$, the $\Lambda^m_{t,d}$'s are equal to the corresponding summands in~\eqref{eq-LLR-3},
\begin{equation}\label{eq-Sigma-t-l}
\Sigma^m_{t,d}:=\sum_{s^t \in \mathcal S^t: s_t=m, {o}(s^t)=d} P^\prime(s^{t-d})  e^{\sum_{k=1}^t f_{s_k}(X_k)}.
\end{equation}
To prove the previous claim, fix $m=1$. For $t=1$, it is easy to see that $\Sigma^1_{1,1} = e^{f_{1} (X_1)}$, and, since, when $t=1$, there cannot be sequences with last phase longer than $1$, we have $\Sigma^1_{1,d}=0$ for all $2\leq d \leq\Delta$. Analogous identities can be derived for $m=2$. Thus, we have proved that, for $t=1$, the summands $\Sigma^{m}_{t,d}=\Lambda^m_{t,d}$, for each $d$ and $m$.

Consider now an arbitrary fixed $t\geq 2$. Consider $m=1$ and $d=1$. This pair of parameter values corresponds to sequences that end with state $1$ with phase of length $1$. We thus obtain that $s_{t-1}=2$, and we can represent this set of sequences as:
\begin{align}\label{eq-sequences-singletons}
&\!\!\left\{s^{t}\in \mathcal S^t:  s_t=1, o(s^t)=1\right\} \nonumber\\
&=\bigcup_{l=1}^{\Delta}\left\{\left(s^{t-1},1\right): s^{t-1}\in \mathcal S^{t-1}, s_{t-1}=2, o(s^{t-1})=l\right\}.
\end{align}
Hence, we can write $\Sigma^1_{t,1}$ as follows:
\begin{align}\label{eq-singleton}
\Sigma^1_{t,1} & =  e^{f_1(X_t)} \sum_{l=1}^\Delta \sum_{\substack {s^{t-1} \in \mathcal S^{t-1}: s_{t-1}=2, \\o(s^{t-1})=l}}\!\!\!\!\! p_{2l} P^\prime(s^{t-1-l}) e^{\sum_{k=1}^{t-1} f_{s_k}(X_k) }\nonumber\\
& =  p_{2l} e^{f_1(X_t)} \sum_{l=1}^\Delta  \Sigma^2_{t-1,l},
\end{align}
where in the first equality we used that, when $o(s^{t-1})=l$, $P^\prime(s^{t-1}) =  p_{2l} P^\prime(s^{t-1-l})$  and the last equality follows by the definition of $\Sigma^2_{t,l}$, $l=1,2,...,\Delta$ in~\eqref{eq-Sigma-t-l}.

Consider now $m=1$ and $d\geq 2$. Since the last $d$ states must be state $1$, we can represent this set of sequences as:
\begin{align}\label{eq-sequences-l-greater-than-2}
&\left\{s^{t}\in \mathcal S^t:   s_t=s_{t-1}=\ldots =s_{t-d+1}=1, o(s^t)=d\right\} =\nonumber\\
&\left\{\left(s^{t-1},1\right): s^{t-1}\in \mathcal S^{t-1}, s_{t-1}=\ldots =s_{t-1-(d-1)+1}=1,\right.\nonumber\\
&\,\left.o(s^{t-1})=d-1\right\}.
\end{align}
Thus, we can write $\Sigma^1_{t,d}$ as follows:
\begin{align}\label{eq-l-greater-than-2}
\Sigma^1_{t,d} & = e^{f_1(X_t)} \sum_{\substack {s^{t-1} \in \mathcal S^{t-1}: s_{t-1}=1,\\ o(s^{t-1})=d-1}}\!\!\! P^\prime(s^{t-1 - (d-1)}) e^{\sum_{k=1}^{t-1} f_{s_k}(X_k) }\nonumber\\
& = e^{f_1(X_t)}\Sigma^1_{t-1,d-1},
\end{align}
where, we note that in the first equality we used that $P^\prime(s^{t-d})=P^\prime(s^{t-1-(d-1)})$.

Representing~\eqref{eq-singleton} and~\eqref{eq-l-greater-than-2} in a matrix form (we remark that derivations for $m=2$ are analogous), we recover recursion~\eqref{eq-Lambda-recursion}. Since we proved that the initial conditions are equal, i.e., $\Sigma^m_1 = \Lambda^m_1$, for $m=1,2$, we proved that $\Sigma^m_t = \Lambda^m_t$ for all $t$, which proves the claim of the lemma.

\mypar{Proof of Lemma~\ref{lemma-FK-limit}} To prove the claim, we apply Theorem~2 from~\cite{Furstenberg1960}. Note that since matrices $A_k$ are i.i.d., they are stationary and ergodic, and hence they are also metrically transitive, see, e.g.,~\cite{ShaliziNotes07}. Therefore the assumptions of the theorem are fulfilled. We now show that the condition of the theorem holds, i.e., we show that
\begin{equation}\label{eq-FK-condition}
\mathbb E_0\left[\log^{+} \|A_k\|\right] < + \infty,
\end{equation}
where $\log^{+}=\max\{\log,0\}$. It is easy to verify that $\|A_k\|\leq  e^{\max_{m=1,2} \left|f_m(X_k)\right|} C_{M_0} $, where $C_{M_0}=\|M_0\|$. Thus, we have
\begin{align}
\log^{+} \|A_k\| & \leq \log^{+} C_{M_0} e^{\max_{m=1,2} \left|f_m(X_k)\right|} \nonumber \\
&\leq \log^{+} C_{M_0} + \max_{m=1,2} \left|f_m(X_k)\right| \nonumber\\
&\leq \log^{+} C_{M_0} +  \left|f_1(X_k)\right| + \left|f_2(X_k)\right|.
\end{align}
Since $X_k$ is Gaussian, and $f_1$ and $f_2$ are linear functions, we have that $f_1(X_k)$ and $f_2(X_k)$ are Gaussian. Therefore, the expectation of the right hand side of the preceding equation is finite (which can be seen by bounding $\mathbb E_0\left[\left|f_1(X_k)\right|\right] \leq
\sqrt{\mathbb E_0\left[f_1^2(X_k)\right]} \leq +\infty $, and similarly for $m=1$). Hence, the condition~\eqref{eq-FK-condition} follows. By Theorem~2 from~\cite{Furstenberg1960} we therefore have that
\begin{equation}
\label{eq-FK-limit}
\lim_{t\rightarrow +\infty}\, \frac{1}{t}\,\log \|\Pi_t\| = \lim_{t\rightarrow +\infty} \,\frac{1}{t}\,\mathbb E\left[\log \|\Pi_t\|\right],
\end{equation}
which proves~\eqref{eq-Lyapunov-limit-expectations}.
To prove~\eqref{eq-Lyapunov-limit}, we note that $L_t = {p^{+}}^\top \Pi_t {\mathbb 1}_{2\Delta}$, where ${p^{+}}>0$. Thus, there exist constants $c$ and $C$ such that $c \|\Pi_t\| \leq L_t\leq C \|\Pi_t\|$~\cite{MatrixAnalysis}. The claim now follows from the preceding sandwich relation between $L_t$ and $\|\Pi_t\|$.

\mypar{Proof of Theorem~\ref{theorem-Q-t-satisfies-LDP}}

Fix $t\geq 1$ and fix $\nu\in \mathcal V_t$. For $D\subseteq \mathbb R$, introduce
\begin{equation}\label{eq-Q-t-nu}
Q^{\mathcal Z}_{t,\nu}(D):= \frac{\sum_{s^t\in \mathcal S^t_{\nu}} 1_{\left\{ \mathcal Z_{s^t} \in D \right\}}}{C_{t,\nu}},
\end{equation}
where, we recall, $C_{t,\nu}$ is the number of type $\nu$ feasible sequences of length $t$. Let $B =C\times D$ be a box in $\mathbb R^{2\Delta+1}$, where $C$ is a box in $\mathbb R^{2\Delta}$ and $D=[a,b]$ is an interval in $\mathbb R$. Then, we have
\begin{equation}
\label{eq-Q-t-through-Q-t-nu}
Q^{\mathcal Z}_{t}(B) = \sum_{\nu \in \mathcal V_t\cap C} \frac{C_{t,\nu}}{C_t} Q^{\mathcal Z}_{t,\nu}(D).
\end{equation}
From~\eqref{eq-Q-t-through-Q-t-nu} it follows that, for each $t$, for any $\nu \in \mathcal V_t$ there holds
\begin{equation}
\label{eq-bounds-basic}
\frac{C_{t,\nu_t}}{C_t} Q_{t,\nu}^{\mathcal Z} (D) \leq Q_t^{\mathcal Z} (B).
\end{equation}

Further, note that, for each $\nu \in \mathcal V_t$, the corresponding elements of the random vector $\mathcal Z$, $\left\{ \mathcal Z_{s^t}:\, V(s^t)=\nu\right\}$, are i.i.d., Gaussian, with mean $0$ and variance equal to $q^\top \nu_2 \frac{ \sigma^2}{t}$. Thus, $Q^{\mathcal Z}_{t,\nu}(D)$ is binomial with $C_{t,\nu}$ trials and probability of success $\mathbb E\left[Q^{\mathcal Z}_{t,\nu}(D)\right]=q_{t,\nu}(D)$ equal to
\begin{equation}\label{eq-q-t-nu}
 q_{t,\nu}(D) = \int_{a\leq x\leq b} \frac{\sqrt{t}}{\sqrt{2\pi \,q^\top \nu_2} \sigma} e^{-t \frac{x^2}{ q^\top \nu_2\sigma^2}}dx.
\end{equation}

Using the well-known bounds on the $Q$-function~\cite{Karr93}, the following bounds on $q_{t,\nu}(D)$, for an arbitrary interval $D$, are straightforward to show.
\begin{lemma}
\label{lemma-bounds-q-t-nu}
Fix $\epsilon>0$. Then, for any $D=[a,b]$, $a<b$, there holds
\begin{equation}\label{eq-q-t-nu-bounds}
 e^{-t\epsilon} e^{-t \inf_{a \leq \eta \leq b}  J_{\nu} (\eta)} \leq
q_{t,\nu} (D)\leq e^{t\epsilon} e^{-t \inf_{a \leq \eta \leq b}  J_{\nu} (\eta)}
\end{equation}
for each $\nu\in \mathcal V_t$, and all $t$ sufficiently large.
\end{lemma}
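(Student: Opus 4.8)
The plan is to read $q_{t,\nu}(D)$ in~\eqref{eq-q-t-nu} as the probability that a centered Gaussian $W$ of variance $q^\top\nu_2\,\sigma^2/t$ lands in $[a,b]$. Putting $c=c_\nu:=\sqrt{q^\top\nu_2}\,\sigma$ and $W=(c/\sqrt t)\,Z$ with $Z$ standard normal, this becomes
\begin{equation*}
q_{t,\nu}(D)=\mathbb P\Big(\tfrac{a\sqrt t}{c}\le Z\le \tfrac{b\sqrt t}{c}\Big)=Q\big(\tfrac{a\sqrt t}{c}\big)-Q\big(\tfrac{b\sqrt t}{c}\big),\qquad Q(x):=\mathbb P(Z>x).
\end{equation*}
By construction the exponential part of the Gaussian density at $x$ is $e^{-tJ_\nu(x)}$ up to a factor polynomial in $\sqrt t$, and since $J_\nu$ is even, vanishes only at $0$, and increases on $\mathbb R_{+}$, the quantity $\inf_{a\le\eta\le b}J_\nu(\eta)$ is attained at the point of $[a,b]$ nearest the origin. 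I would then split into two cases according to whether $0\in[a,b]$.

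\textbf{Case $0\in[a,b]$.} Here $\inf_{[a,b]}J_\nu=J_\nu(0)=0$, so~\eqref{eq-q-t-nu-bounds} reduces to $e^{-t\epsilon}\le q_{t,\nu}(D)\le e^{t\epsilon}$. The upper bound is trivial because $q_{t,\nu}(D)\le 1$. For the lower bound, since $a<b$ the origin genuinely lies in $[a,b]$ and the variance of $W$ tends to $0$, so $q_{t,\nu}(D)=\mathbb P(W\in[a,b])$ tends to $1$ if $0$ is interior and to $\tfrac12$ if $0$ is an endpoint; in either case $q_{t,\nu}(D)\ge\tfrac14\ge e^{-t\epsilon}$ for all $t$ large.

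\textbf{Case $0\notin[a,b]$.} By symmetry of the density we may take $0<a<b$, so $\inf_{[a,b]}J_\nu=J_\nu(a)$. I would insert the standard two-sided Gaussian tail bounds~\cite{Karr93}, $\frac{1}{\sqrt{2\pi}}\frac{x}{1+x^2}e^{-x^2/2}\le Q(x)\le \frac{1}{\sqrt{2\pi}}\frac1x e^{-x^2/2}$ for $x>0$. For the upper bound in~\eqref{eq-q-t-nu-bounds}, drop the nonnegative subtracted term and bound $Q(a\sqrt t/c)$ from above to get $q_{t,\nu}(D)\le r(t)\,e^{-tJ_\nu(a)}$ with $r(t)$ subexponential in $t$, hence $r(t)\le e^{t\epsilon}$ for $t$ large. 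For the lower bound, bound $Q(a\sqrt t/c)$ from below by $r_1(t)\,e^{-tJ_\nu(a)}$ and $Q(b\sqrt t/c)$ from above by $r_2(t)\,e^{-tJ_\nu(b)}$, with $r_1,r_2$ subexponential; since $J_\nu(b)>J_\nu(a)$ strictly, the subtracted term is exponentially negligible against the first, whence $q_{t,\nu}(D)\ge\tfrac12 r_1(t)\,e^{-tJ_\nu(a)}\ge e^{-t\epsilon}e^{-tJ_\nu(a)}$ for $t$ large.

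\textbf{Where the difficulty lies.} This is at heart a one-dimensional Laplace/Gaussian-tail estimate, so nothing is deep; the substantive points are only two. First, in the case $0\notin[a,b]$ one must use the \emph{strict} inequality $J_\nu(a)<J_\nu(b)$ — this is exactly what makes the subtracted $Q$-term harmless in the lower bound. Second, if one wants the lemma with a threshold $t_1(\epsilon)$ that is uniform over $\nu\in\mathcal V_t$, a little extra care is needed since $q^\top\nu_2$ is not bounded below on $\mathcal V_t$; in the applications, however, $\nu$ ranges over a sequence converging to a fixed $\nu^{\star}\in\mathcal V$ with $q^\top\nu^{\star}_2>0$ (types with $q^\top\nu_2=0$ fall in the region where the rate function is $+\infty$), so the pointwise statement above, together with routine bookkeeping of the polynomial prefactors, suffices.
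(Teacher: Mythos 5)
Your proof is correct and follows exactly the route the paper intends: the paper gives no proof of this lemma at all, merely remarking that it is "straightforward to show" from the standard two-sided $Q$-function bounds, which is precisely the Laplace-type tail estimate you carry out (including the necessary case split on whether $0\in[a,b]$ and the use of the strict gap $J_\nu(a)<J_\nu(b)$ to kill the subtracted term). One small correction to your closing caveat: $q^\top\nu_2$ \emph{is} uniformly bounded away from zero on $\mathcal V_t$ for all large $t$ (alternating phases of duration at most $\Delta$ force $q^\top\nu_2\geq 1/(2\Delta)$ up to an $O(1/t)$ term, as the paper itself notes in the moderate-growth argument), so the threshold $t_1(\epsilon)$ can be taken uniform over $\nu\in\mathcal V_t$ with no extra work.
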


We next show that the random measures $Q_{t,\nu}^{\mathcal Z}$ approach their expected values $q_{t,\nu}$ as $t$ increases.
\begin{lemma}
\label{lemma-bounds-on-Q-t-nu}
 Fix an arbitrary $\epsilon>0$. 
\begin{enumerate}
\item
\label{lemma-bounds-Q-t-nu-upper} 
With probability one,
\begin{equation}
\label{eq-bounds-Q-t-nu-upper}
Q_{t,\nu}^{\mathcal Z} (D)\leq q_{t,\nu} (D) e^{t\epsilon},
\end{equation}
for all $\nu\in \mathcal V_t$, for all $t$ sufficiently large.
\item 
\label{lemma-bounds-Q-t-nu-lower}
Let $\nu_t\in \mathcal V_t$, $t=1,2,...$, be a sequence of types converging to $\nu^\star \in \mathcal V$. Then, with probability one, for all $t$ sufficiently large
\begin{equation}
\label{eq-bounds-Q_t-nu-lower}
Q^{\mathcal Z}_{t,\nu_t}(D) \geq q_{t,\nu^\star}(D) (1-\epsilon).
\end{equation}
\end{enumerate}
\end{lemma}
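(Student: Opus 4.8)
The plan is to build on the binomial structure already recorded just before the lemma: for a fixed $t$ and a fixed $\nu\in\mathcal V_t$ the entries $\{\mathcal Z_{s^t}:\,s^t\in\mathcal S^t_\nu\}$ are i.i.d.\ centered Gaussians, so $C_{t,\nu}\,Q^{\mathcal Z}_{t,\nu}(D)$ is $\mathrm{Binomial}\big(C_{t,\nu},q_{t,\nu}(D)\big)$ with success probability $q_{t,\nu}(D)$ as in~\eqref{eq-q-t-nu}; in particular $\mathbb E[Q^{\mathcal Z}_{t,\nu}(D)]=q_{t,\nu}(D)$, which is strictly positive because $D=[a,b]$ with $a<b$ and $q^\top\nu_2>0$ once $t>\Delta$. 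Both parts will then follow from a one-sided tail estimate for this binomial together with the first Borel--Cantelli lemma; Borel--Cantelli only uses the marginal failure probabilities, so it is insensitive to the (unspecified) joint law of the $\mathcal Z$-families across different $t$.

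For the upper bound I would simply apply Markov's inequality to the nonnegative variable $Q^{\mathcal Z}_{t,\nu}(D)$: $\mathbb P\big(Q^{\mathcal Z}_{t,\nu}(D)\ge q_{t,\nu}(D)\,e^{t\epsilon}\big)\le e^{-t\epsilon}$ for every $\nu\in\mathcal V_t$ (the event being empty when $q_{t,\nu}(D)e^{t\epsilon}\ge1$). The number of feasible types is only polynomial, $|\mathcal V_t|\le(t+1)^{2\Delta}$, since each coordinate of $\nu$ equals $N_{md}/t$ with $N_{md}\in\{0,1,\dots,t\}$; a union bound therefore bounds the probability that~\eqref{eq-bounds-Q-t-nu-upper} fails for some $\nu\in\mathcal V_t$ by $(t+1)^{2\Delta}e^{-t\epsilon}$, which is summable in $t$, and Borel--Cantelli delivers a single probability-one event on which~\eqref{eq-bounds-Q-t-nu-upper} holds for all $\nu\in\mathcal V_t$ simultaneously and all $t$ large.

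For the lower bound, write $\theta^{(t)}_2=q^\top(\nu_t)_2\to\theta^\star_2=q^\top\nu^\star_2>0$, and split the argument into two steps. First, transfer the mean: show $q_{t,\nu_t}(D)\ge(1-\epsilon/2)\,q_{t,\nu^\star}(D)$ for large $t$. Here $q_{t,\nu}(D)$ depends on $\nu$ only through $q^\top\nu_2$ and varies continuously in it; when $0\in D$ both quantities tend to $1$, while when $0\notin D$ one matches the exponential rates $\inf_{\eta\in D}J_{\nu}(\eta)$ (which converge, being continuous in $q^\top\nu_2$) and controls the polynomial prefactors via the Mills-ratio estimates behind Lemma~\ref{lemma-bounds-q-t-nu}. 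Second, concentrate: the multiplicative Chernoff lower-tail bound gives $\mathbb P\big(Q^{\mathcal Z}_{t,\nu_t}(D)\le(1-\delta)q_{t,\nu_t}(D)\big)\le\exp\big(-\tfrac{\delta^2}{2}\,C_{t,\nu_t}q_{t,\nu_t}(D)\big)$, and by Lemmas~\ref{lemma-growth-rate-C-t-nu} and~\ref{lemma-bounds-q-t-nu} together with continuity of $H$ the mean is at least $\exp\big(t\,(H(\nu^\star_1)+H(\nu^\star_2)-\inf_{\eta\in D}J_{\nu^\star}(\eta)-o(1))\big)$, whose exponent is strictly positive in the regime where this lemma is invoked, so the failure probabilities are summable. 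Choosing $\delta$ with $(1-\delta)(1-\epsilon/2)\ge1-\epsilon$ and applying Borel--Cantelli over $t$ (the sequence $\nu_t$ being fixed) then yields~\eqref{eq-bounds-Q_t-nu-lower}.

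The Markov and Chernoff estimates are routine; the delicate point is the mean-transfer step in the lower bound. Because $\nu_t\to\nu^\star$ is only qualitative and $\theta^{(t)}_2$ lives on the grid $\{0,1/t,2/t,\dots\}$, the difference $\theta^{(t)}_2-\theta^\star_2$ is in general $O(1/t)$ rather than $o(1/t)$, so $t\,(\inf_{\eta\in D}J_{\nu_t}(\eta)-\inf_{\eta\in D}J_{\nu^\star}(\eta))$ need not vanish and a crude continuity bound could leave the ratio $q_{t,\nu_t}(D)/q_{t,\nu^\star}(D)$ bounded away from $1$; the fix is to exploit the exact Gaussian-tail shape rather than continuity, and — when this lemma feeds the proof of Theorem~\ref{theorem-Q-t-satisfies-LDP} — it is in fact enough to retain only the exponential rate $\tfrac1t\log q_{t,\nu_t}(D)\to-\inf_{\eta\in D}J_{\nu^\star}(\eta)$, which is all the LDP lower bound consumes. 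One also checks that $C_{t,\nu_t}q_{t,\nu_t}(D)$ grows super-polynomially, equivalently that $\nu^\star$ lies in the effective domain of $I$; this holds for the nondegenerate intervals $D$ used there, where $\inf_{\eta\in D}J_{\nu^\star}(\eta)<J_{\nu^\star}(\xi^\star)\le H(\nu^\star_1)+H(\nu^\star_2)$. Finally, since each good event above is tied to the interval $D$, feeding the lemma into the LDP proof requires a countable intersection over intervals with rational endpoints.
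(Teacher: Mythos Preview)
Your approach is essentially the same as the paper's: both exploit the binomial structure of $C_{t,\nu}Q^{\mathcal Z}_{t,\nu}(D)$, apply a concentration inequality (the paper uses Markov for part~\ref{lemma-bounds-Q-t-nu-upper} with an additional case split on the sign of $\inf_{\xi\in D}J_\nu(\xi)-H(\nu)$, and Chebyshev on the relative deviation for part~\ref{lemma-bounds-Q-t-nu-lower}, while you use plain Markov and the multiplicative Chernoff bound), then take a union bound over the at most $(t+1)^{2\Delta}$ types and conclude via Borel--Cantelli. Your discussion of the mean-transfer step $q_{t,\nu_t}(D)\to q_{t,\nu^\star}(D)$ is in fact more careful than the paper's sketch, which defers the details to~\cite{RandomWalks17} and, in the LDP lower-bound proof, effectively uses the lemma only through the exponential rate $\tfrac1t\log Q^{\mathcal Z}_{t,\nu_t}(D)$, exactly as you anticipate.
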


The proof of part~\ref{lemma-bounds-Q-t-nu-upper} of Lemma~\ref{lemma-bounds-on-Q-t-nu} can be obtained by considering separately the cases: 1) $\inf_{(\nu,\xi)\in B } J_{\nu}(\xi)-H(\nu)<0$ and 2) $\inf_{(\nu,\xi)\in B } J_{\nu}(\xi)-H(\nu)\geq 0$. Then, in each of the two cases the claim can be obtained by a corresponding application of Markov's inequality on a conveniently defined sequence of sets in $\Omega$. In case 1), we use $\mathcal A_t=\left\{\omega: Q_{t,\nu}^{\mathcal Z} (D)\geq q_{t,\nu}(D)e^{t\epsilon},\,\mathrm{for\,some\,}\nu\in C\cap \mathcal V_t \right\}$. Applying the union bound, together with fact that the the cardinality of $\mathcal V_t$ is polynomial in $t$ ($\left|\mathcal V_t\right|\leq (t+1)^{2\Delta}$), we obtain from condition 1) that the probabilities $\mathbb P\left(\mathcal A_t\right)$ decay exponentially with $t$. The claim in~\ref{lemma-bounds-Q-t-nu-upper} then follows by the Borel-Cantelli lemma. Similar arguments can be derived for case 2), where in the place of set $\mathcal A_t$, set $\mathcal B_t=\left\{\omega: \sum_{s^t\in \mathcal S_\nu^t} 1_{\left\{\mathcal Z_{s^t}\in D\right\} } (s^t)\geq 1,\,\mathrm{for\,some\,}\nu\in C\cap \mathcal V_t \right\}$ is used. For details we refer the reader to Section $V$-A in~\cite{RandomWalks17}.

By defining $\mathcal C_t=\left\{\omega: \left|\frac{Q_{t,\nu_t}^{\mathcal Z} (D)}{q_{t,\nu_t}(D)}-1\right|\geq \epsilon,\,\newline\mathrm{for\,some\,}\nu\in C\cap \mathcal V_t \right\}$ and applying Chebyshev's inequality, the proof of part~\ref{lemma-bounds-Q-t-nu-lower} can be derived similarly as in the proof of part~\ref{lemma-bounds-Q-t-nu-upper}. For details, see the proof of Lemma~$13$ in~\cite{RandomWalks17}.


Having the preceding technical results, we are now ready to prove the LDP for the sequence $Q_t^{\mathcal Z}$. We first prove the LDP upper bound, and then turn to the LDP lower bound.

\mypar{Proof of the LDP upper bound}
We break the proof of the LDP upper bound into the following steps. In the first step, we show that the LDP upper bound holds with probability one for all \emph{boxes} in $\mathbb R^{2\Delta+1}$. In the second step, we extend the claim to all \emph{compact sets} via the standard finite cover argument~\cite{DemboZeitouni93}. Finally, in the third step, we move from compact sets to closed sets by using the fact that $I$ has compact support. 

\emph{Step 1: LDP for boxes} Let $B=C\times D$ be an arbitrary closed box in $\mathbb R^{2\Delta+1}$, where $C$ is a box in $\mathbb R^{2\Delta}$ and $D$ is a closed interval in $\mathbb R$. To prove the LDP upper bound for box $B$, we need to show that there exists a set $\Omega^\star_1=\Omega^\star_1(B)$ which has probability one, $\mathbb P\left(\Omega^\star_1\right)=1$, such that for every $\omega\in \Omega^\star_1$, there holds
\begin{equation}
\liminf_{t\rightarrow+\infty}-\frac{1}{t}\,\log Q_t^{\mathcal Z} (B)
\leq -\overline I(B),
\end{equation}
where $\overline I(B):= \inf_{(\nu,\xi)\in B} I(\nu,\xi)$. To this end, fix $\epsilon>0$. Applying Lemma~\ref{lemma-C-t-growth-rate}, Lemma~\ref{lemma-growth-rate-C-t-nu}, Lemma~\ref{lemma-bounds-q-t-nu}, and part~\ref{lemma-bounds-Q-t-nu-upper} of Lemma~\ref{lemma-bounds-on-Q-t-nu}, together with~\eqref{eq-Q-t-through-Q-t-nu}, we have
\begin{align}\label{eq-Q-t-box-1}
Q_t^{\mathcal Z} (B) &\leq \sum_{\nu\in C\cap \mathcal V_t} e^{4t\epsilon} e^{-t \inf_{\xi \in D}J_{\nu_t} (\xi) + tH(\nu_t) -t\log \psi}\\
&\leq \left|\mathcal V_t\right| e^{4t\epsilon} e^{-t -\log \psi - t \inf_{\nu \in C\cap \mathcal V}\inf_{\xi \in D} J_{\nu} (\xi) -  H(\nu) },
\end{align}
which holds with probability one for all $t$ sufficiently large. Dividing by $t$, taking the limit $t\rightarrow +\infty$, and letting $\epsilon\rightarrow 0$, the upper bound for boxes follows.

\emph{Step 2: LDP for compact sets} The extension of the upper bound to all compact sets in $\mathbb R^{2\Delta+1}$ can be done by picking an arbitrary closed set $F$, covering it with a family of boxes $B$ of the form as in Step 1, where a ball of a conveniently chosen size is assigned to each point of $F$, and finally extracting a finite cover of $F$. As this is a standard argument in the proof of LDP upper bounds, we omit the details of the proof here and refer the reader to~\cite{DemboZeitouni93} (see, e.g., the proof of Cram\'{e}r's theorem in $\mathbb R^d$, Chapter 2.2.2 in~\cite{DemboZeitouni93}).

\emph{Step 3: LDP for closed sets} Since the rate function has compact domain, LDP upper bound for compact sets implies LDP upper bound for closed sets. This completes the proof of the upper bound.

\mypar{Proof of the LDP lower bound}
Let $U$ be an arbitrary open set in $\mathbb R^{2\Delta + 1}$. To prove the LDP lower bound we need to show that there exists a set $\Omega^\star_2=\Omega^\star_2(U)$ which has probability one, $\mathbb P\left(\Omega^\star_2\right)=1$, such that for every $\omega\in \Omega^\star_2$, there holds
\begin{equation}
\liminf_{t\rightarrow+\infty}-\frac{1}{t}\,\log Q_t^{\mathcal Z} (U)
\geq -\overline I(U).
\end{equation}
Since $I$ is non-negative at any point of its domain, it follows that $\overline I(U)$ can either be a finite non-negative number or $+\infty$. In the latter case the lower bound holds trivially, hence we focus on the case $\overline I(U)<+\infty$.

For any point $\nu \in \mathcal V$, we define a sequence of types $\nu_t\in \mathcal V_t$ converging to $\nu$, by picking, for each $t\geq 1$, an arbitrary closest neighbor of $\nu$ in the set $\mathcal V_t$\footnote{Since $\mathcal V_t$ gets denser with $t$, the sequence $\nu_t$ indeed converges to $\nu$.}, i.e.,
\begin{equation}
\label{def-theta-t-opt}
\nu_t\in\mathrm{Argmin}_{\nu \in \mathcal V_t} \left|\nu_t - \nu\right|.
\end{equation}

Now note that by the fact that $\overline I(U)$ is an infimal value, for any $\delta>0$ there must exist $(\nu,\xi)\in U$ such that $I(\nu,\xi) \leq \overline I(U)+\delta$.  If for $(\nu,\xi)$ there holds $H(\nu)- J_{\nu} (\xi) >0$, we assign $\nu^\star=\nu$ and $\xi^\star=\xi$. Otherwise, we can decrease $\xi$ in absolute value to a new point $\xi^\prime$ such that $\left(\nu,\xi^\prime\right)$ still belongs to $U$ (note that this is feasible due to the fact that $U$ is open), and for which the strict inequality $H(\nu)- J_{\nu} (\xi^\prime) >0$ holds. Assigning $\xi^\star=\xi^\prime$ we prove the existence of $(\nu^\star,\xi^\star)\in U$ such that
\begin{align}
&I(\nu^\star,\xi^\star) \leq \overline I(U)+\delta\\
&H(\nu^\star)- J_{\nu^\star} (\xi^\star) >0.
\end{align}

Let $\nu_t$ denote a sequence of points obtained from~\eqref{def-theta-t-opt} converging to $\nu^\star$. Since $U$ is open, there exists a box $B$ centered at $(\nu^\star,\xi^\star)$ that entirely belongs to $U$. This implies that there exists a closed interval $D\in \mathbb R$ such that, for sufficiently large $t$, $\nu_t\times D \subseteq U$. By the inequality in~\eqref{eq-bounds-basic}, it follows that
\begin{equation*}
Q_t^{\mathcal Z}(U) \geq Q_t^{\mathcal Z}(\{\nu_t\}\times D) = \frac{C_{t,\nu_t}}{C_t} Q_{t,\nu_t}^{\mathcal Z}(D).
\end{equation*}

Combining the lower bound on $q_{t,\nu_t}(D)$ from Lemma~\ref{lemma-bounds-q-t-nu} with part~\ref{lemma-bounds-Q-t-nu-lower} of Lemma~\ref{lemma-bounds-on-Q-t-nu}, we obtain that for sufficiently large $t$,
\begin{align*}
Q_t^{\mathcal Z}(U)&\geq q_{t,\nu_t}(D) (1-\epsilon) \frac{C_{t,\nu_t}}{C_t}\\
&\geq e^{-3 t\epsilon} e^{-t \inf_{\xi \in D} J_{\nu^\star}(\xi) + H(\nu_t) - \log \psi} (1-\epsilon).
\end{align*}
Taking the logarithm and dividing by $t$, we obtain
\begin{equation}
\label{eq-LB-almost}
\frac{1}{t}\,\log Q_t^{\mathcal Z}(U) \geq -3 \epsilon -  \inf_{\xi \in D} J_{\nu^\star}(\xi) + H(\nu_t) -
\log \psi+ \frac{\log (1-\epsilon)}{t}.
\end{equation}
As $t\rightarrow +\infty$, $\nu_t\rightarrow \nu^\star$, and by the continuity of $H$ we have that  $H(\theta_t)\rightarrow H(\theta^\star)$. Thus, taking the limit in~\eqref{eq-LB-almost} yields
\begin{align*}
\liminf_{t\rightarrow +\infty} \frac{1}{t}\,\log Q_t^{\mathcal Z}(U) & \geq -3 \epsilon -
\inf_{\xi \in D} J_{\nu^\star}(\xi)+ H(\nu^\star) - \log \psi \\
& \geq -3 \epsilon - I(\nu^\star,\xi^\star),
\end{align*}
where in the last inequality we used the fact that $\xi^\star \in D$. The latter bound holds for all $\epsilon>0$, and hence taking the supremum over all $\epsilon>0$ yields
\begin{align*}
\liminf_{t\rightarrow +\infty} \frac{1}{t}\,\log Q_t^{\mathcal Z}(U) & \geq - I(\nu^\star,\xi^\star)\\
&\geq - \inf_{(\nu,\xi) \in U} I(\nu,\xi) - \delta.
\end{align*}
Recalling that $\delta$ was chosen arbitrarily, the lower bound is proven.

\mypar{Proof of Lemma~\ref{lemma-application-of-Slepian}}
For reference, we state here the Slepian's lemma that we use in our proof.
\begin{lemma} (Slepian's lemma~\cite{ZeitouniNotes16})
\label{lemma-Slepian}
Let the function $\phi: \mathbb R^L \mapsto \mathbb R$ satisfy
\begin{equation}
\label{slepian-first-condition}
\lim_{\|x\| \rightarrow +\infty} \phi(x) e^{- \alpha \|x\|^2}=0,\mathrm{\;for\;all\;}\alpha>0.
\end{equation}
Suppose that $\phi$ has nonnegative mixed derivatives,
\begin{equation}
\label{slepian-second-condition}
\frac{\partial^2 \phi}{\partial x_l \partial x_m} \geq 0, \mathrm{\;for\;}l\neq m.
\end{equation}
Then, for any two independent zero-mean Gaussian vectors $X$ and $Z$ taking values in $\mathbb R^L$ such that $\mathbb E_X[X_l^2]= \mathbb E_Z[Z_l^2]$ and $\mathbb E_X[X_l X_m] \geq  \mathbb E_Z[Z_l Z_m]$ there holds $\mathbb E_X[\phi(X)] \geq \mathbb E_Z [\phi(Z)]$, where $\mathbb E_X$ and $\mathbb E_Z$, respectively, denote expectation operators on probability spaces on which $X$ and $Z$ are defined.
\end{lemma}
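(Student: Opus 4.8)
The plan is to prove Slepian's lemma by the classical Gaussian interpolation (``smart path'') argument, which continuously deforms $Z$ into $X$ along a path of Gaussian vectors and shows that $u\mapsto\mathbb E[\phi(\cdot)]$ moves monotonically along it. Since $X$ and $Z$ are independent they can be realized jointly on a common product space; I write $\mathbb E$ for the corresponding joint expectation, so that $\mathbb E[\phi(X)]=\mathbb E_X[\phi(X)]$ and $\mathbb E[\phi(Z)]=\mathbb E_Z[\phi(Z)]$. For $u\in[0,1]$ I would introduce the interpolating vector
\begin{equation}
W(u) = \sqrt{u}\,X + \sqrt{1-u}\,Z,
\end{equation}
which is zero-mean Gaussian with covariance $\mathbb E[W_l(u)W_m(u)] = u\,\mathbb E[X_lX_m] + (1-u)\,\mathbb E[Z_lZ_m]$, using $\mathbb E[X_lZ_m]=0$. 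Setting $\Psi(u):=\mathbb E[\phi(W(u))]$ we have $\Psi(0)=\mathbb E[\phi(Z)]$ and $\Psi(1)=\mathbb E[\phi(X)]$, so the assertion $\mathbb E[\phi(X)]\geq\mathbb E[\phi(Z)]$ reduces to showing that $\Psi$ is nondecreasing on $[0,1]$.

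The core computation is to evaluate $\Psi'(u)$ on $(0,1)$ and to show it is nonnegative. Differentiating under the expectation and applying the chain rule with $\frac{d}{du}W_l(u)=\tfrac{1}{2\sqrt u}X_l-\tfrac{1}{2\sqrt{1-u}}Z_l$ gives
\begin{equation}
\Psi'(u) = \sum_{l=1}^{L}\mathbb E\!\left[\frac{\partial\phi}{\partial x_l}(W(u))\left(\frac{X_l}{2\sqrt u}-\frac{Z_l}{2\sqrt{1-u}}\right)\right].
\end{equation}
Each summand is the expectation of a Gaussian variable ($X_l$ or $Z_l$) times a smooth function of the jointly Gaussian vector $W(u)$, so I would invoke the Gaussian integration-by-parts (Stein/Wick) identity $\mathbb E[G\,h(W)] = \sum_k \mathbb E[G\,W_k]\,\mathbb E[\partial_k h(W)]$ with $h=\partial_l\phi$. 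Since $\mathbb E[X_l W_k(u)]=\sqrt u\,\mathbb E[X_lX_k]$ and $\mathbb E[Z_l W_k(u)]=\sqrt{1-u}\,\mathbb E[Z_lZ_k]$, the prefactors $\tfrac{1}{2\sqrt u}$ and $\tfrac{1}{2\sqrt{1-u}}$ exactly cancel the square roots and everything collapses into second derivatives:
\begin{equation}
\Psi'(u) = \frac{1}{2}\sum_{l,k=1}^{L}\bigl(\mathbb E[X_lX_k]-\mathbb E[Z_lZ_k]\bigr)\,\mathbb E\!\left[\frac{\partial^2\phi}{\partial x_l\partial x_k}(W(u))\right].
\end{equation}

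The sign analysis then finishes the argument. The diagonal terms $l=k$ drop out because the variances match, $\mathbb E[X_l^2]-\mathbb E[Z_l^2]=0$. For each off-diagonal term $l\neq k$, the covariance hypothesis gives $\mathbb E[X_lX_k]-\mathbb E[Z_lZ_k]\geq0$, while hypothesis~\eqref{slepian-second-condition} gives $\mathbb E[\partial_l\partial_k\phi(W(u))]\geq0$; hence every surviving term is nonnegative and $\Psi'(u)\geq0$. Integrating over $u\in[0,1]$ yields $\Psi(1)\geq\Psi(0)$, which is exactly $\mathbb E_X[\phi(X)]\geq\mathbb E_Z[\phi(Z)]$.

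The main obstacle is purely analytic: rigorously justifying the interchange of differentiation and expectation and the Gaussian integration-by-parts step, both of which require controlling the growth of $\phi$ and its first derivatives against the Gaussian density, which is precisely the role of the tail condition~\eqref{slepian-first-condition}. I would first establish the two identities for a dense class of rapidly decaying smooth test functions, where all manipulations are elementary, and then pass to the general $\phi$ by a truncation-and-approximation argument governed by~\eqref{slepian-first-condition}. One must also note that the apparent singularities of the path derivative at $u=0$ and $u=1$ are harmless, since the factors $\sqrt u$ and $\sqrt{1-u}$ produced by integration by parts cancel them, leaving $\Psi'$ bounded on $(0,1)$ and $\Psi$ continuous on $[0,1]$, so that the integration of $\Psi'$ over $[0,1]$ is legitimate.
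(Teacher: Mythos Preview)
The paper does not actually prove this statement: Lemma~\ref{lemma-Slepian} is quoted from~\cite{ZeitouniNotes16} as a known tool, and the \texttt{proof} environment that immediately follows it in the appendix is in fact the proof of Lemma~\ref{lemma-application-of-Slepian} (it constructs the function $\phi_t$ and verifies that it and the vectors $\mathcal X,\mathcal Z$ satisfy the hypotheses of Slepian's lemma). So there is no ``paper's own proof'' to compare against.

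That said, your proposal is the standard and correct proof of Slepian's comparison inequality via Gaussian interpolation: the smart path $W(u)=\sqrt{u}\,X+\sqrt{1-u}\,Z$, differentiation in $u$, Gaussian integration by parts to turn first moments into covariance-weighted second derivatives, cancellation of the diagonal terms by the equal-variance assumption, and nonnegativity of the off-diagonal terms by the covariance ordering together with~\eqref{slepian-second-condition}. Your remarks on the role of the growth condition~\eqref{slepian-first-condition} in legitimizing the differentiation under the integral and the integration-by-parts step, and on the harmlessness of the endpoint singularities, are also on point. If anything, you could be slightly more explicit that the approximation/truncation argument must control not only $\phi$ but also its first and second partial derivatives against the Gaussian tail, since those are the quantities appearing in the intermediate expectations; but this is a routine refinement of what you already sketched.
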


\begin{proof}
For each fixed $t$ define function $\phi_t: \mathbb{R}^{C_t} \mapsto \mathbb{R}$,
\begin{equation}
\label{def-phi-t}
\phi_t(x):= -\log \sum_{s^t\in \mathcal S^t} e^{\gamma_{s^t}\left(x_{s^t}\right)},
\end{equation}
where $x_{s^t}$ is an element of a vector $x=\left\{x_{s^t}: s^t\in \mathcal S^t\right\}\in {\mathbb {R}}^{C_t}$, whose index is $s^t$, and where each function $\gamma_{s^t}$ is defined through function $g_{\mathcal X}$, given in~\eqref{eq-expectation-1}, as $\gamma_{s^t}(x_{s^t}):=\log(g_{x}(s^t))$. Since each $\gamma_{s^t}(x_{s^t})$, $s^t\in \mathcal S^t$, grows linearly in $x$, we have that condition~\eqref{slepian-first-condition} is fulfilled.

Further, it is straightforward to show that the second partial derivative of $\phi_t$ is given by 
\begin{equation}
\label{slepian-second-condition}
\frac{\partial^2 \phi_t}{\partial x_{s^t} \partial x_{{s^t}^\prime}} = \frac{\left(\mu_2-\mu_1\right)^2}{\sigma^4} \frac{ e^{\gamma_{s^t} (x_{s^t})+ \gamma_{{s^t}^\prime} (x_{{s^t}^\prime})} }{\left(\sum_{s^t\in \mathcal S^t} e^{\gamma_{s^t} (x_{s^t})}\right)^2},
\end{equation}
which is always non-negative, and hence condition~\eqref{slepian-second-condition} is also fulfilled. 

We next verify the conditions of the lemma on the vectors $\mathcal X$ and $\mathcal Z$. Since for the same sequence $s^t$, the corresponding $\mathcal X_{s^t}$ and $\mathcal Z_{s^t}$ have the same Gaussian distribution (of mean zero and variance equal to $q^\top V_2(s^t) \sigma^2/t$, there holds  $\mathbb E_0[\mathcal X_{s^t}^2]= \mathbb E[\mathcal Z_{s^t}^2]$. Further, it is easy to see that $\mathbb E_0[\mathcal X_{s^t}\mathcal X_{{s^t}^\prime} ] = \sum_{k: s_k=s_k^{\prime}} \sigma^2\geq 0$. On the other hand, since $\mathcal Z_{s^t}$ and $\mathcal Z_{{s^t}}^\prime$ are independent for $s^t\neq {{s^t}}^\prime$, and they are both zero mean, we have $\mathbb E[\mathcal Z_{s^t}\mathcal Z_{{s^t}^\prime} ] = 0$. Therefore, the last condition of the Slepian's lemma is fulfilled. Hence, the claim of Lemma~\ref{lemma-application-of-Slepian} follows.
\end{proof}

\mypar{Proof of the moderate growth of $Q_t^{\mathcal Z}$}
\begin{proof}
Conditions that define set $\mathcal V$ imply that $1/(2\Delta)\leq q^\top \nu_2\leq 1$, $\nu\geq 0$, and hence $\mathcal V$ is compact. Further, condition $H(\nu_1)+H(\nu_2) \geq \frac{1}{q^\top \nu_2}\xi^2$, which defines the domain of the rate function $I$, implies $2\log \Delta \geq \xi^2$. Thus, $\xi$ mus be bounded in order for $I$ to be finite, which combined with the fact that $\nu$ must belong to $\mathcal V$ which is compact, shows that $I$ has compact domain. Let $B_0$ be a box that contains the domain of $I$, and let  $M_0:=\max_{(\nu,\xi) \in B_0} F(\nu,\xi)$. Since function $F$ is continuous, it must achieve maximum on $B_0$, which we denote by $M_0$. It follows that for each $M\geq M_0$,  with probability one, the integral in~\eqref{eq-tail-condition} equals zero for all $t$ sufficiently large. Thus, condition~\eqref{eq-tail-condition} is fulfilled.
\end{proof}


\bibliographystyle{IEEEtran}
\bibliography{IEEEabrv,BibliographyRandomDurationModelsNEW}
\end{document}